\theoremstyle{plain}
\newcommand{\cleqn}{\setcounter{equation}{0}}
\newcommand{\clth}{\setcounter{theorem}{0}}
\newcommand {\sectionnew}[1]{\section{#1}\cleqn\clth}
\newtheorem{theorem}{Theorem}[section]
\newtheorem{lemma}[theorem]{Lemma}
\newtheorem{definition-theorem}[theorem]{Definition-Theorem}
\newtheorem{proposition}[theorem]{Proposition}
\newtheorem{corollary}[theorem]{Corollary}
\newtheorem{definition}[theorem]{Definition}
\newtheorem{example}[theorem]{Example}
\newtheorem{remark}[theorem]{Remark}
\newtheorem{notation}[theorem]{Notation}
\newtheorem{assumption}[theorem]{Assumption}
\newtheorem{lemma-definition}[theorem]{Lemma-Definition}
\newtheorem{lemma-notation}[theorem]{Lemma-Notation}
\newtheorem{question}[theorem]{Question}
\newtheorem{remark-definition}[theorem]{Remark-Definition}
\newcommand \bth[1] { \begin{theorem}\label{t#1} }
\newcommand \ble[1] { \begin{lemma}\label{l#1} }
\newcommand \bpr[1] { \begin{proposition}\label{p#1} }
\newcommand \bco[1] { \begin{corollary}\label{c#1} }
\newcommand \bde[1] { \begin{definition}\label{d#1}\rm }
\newcommand \bex[1] { \begin{example}\label{e#1}\rm }
\newcommand \bre[1] { \begin{remark}\label{r#1}\rm }
\newcommand \bnota[1] {\begin{notation}\label{n#1}\rm }
\newcommand \bas[1] { \begin{assumption}\label{a#1}\rm }
\newcommand \bqu[1] { \begin{question}\label{q#1}\rm }
\newcommand {\ele} { \end{lemma} }
\newcommand {\epr} { \end{proposition} }
\newcommand {\eco} { \end{corollary} }
\newcommand {\ede} { \end{definition} }
\newcommand {\eex} { \end{example} }
\newcommand {\ere} { \end{remark} }
\newcommand {\enota} { \end{notation} }
\newcommand {\eas} {\end{assumption}}
\newcommand {\equ} {\end{question}}
\newcommand \lb[1]{\label{#1}}
\def \g  {\mathfrak{g}}   
\def \h  {\mathfrak{h}}
\def \n  {\mathfrak{n}}
\def \q  {\mathfrak{q}}
\newcommand{\beqa}{\begin{eqnarray*}}                     
\newcommand{\eeqa}{\end{eqnarray*}}
\def \hs {\hspace{.2in}}
\def \bfv {\underline{v}}
\def \fg  {\mathfrak{g}}
\def \fh  {\mathfrak{h}}
\def \fu  {\mathfrak{u}}
\def \fq  {\mathfrak{q}}
\def \fd  {\mathfrak{d}}
\def \bfu {{\bf u}}
\def \bfv {{\bf v}}
\def \pist {\pi_{\rm st}}
\def \sZ {{\scriptscriptstyle Z}}
\def \sC {{\scriptscriptstyle C}}
\def \sA {{\scriptscriptstyle A}}
\def \sZ {{\scriptscriptstyle Z}}
\def \lrw {\longrightarrow}
\def \Pist {\Pi_{\rm st}}
\def \wF_mn {\wF_m \times \wF_n}
\def \wF_mnC {\wF_{m, n, \, \sC}}
\def \wF {\widetilde{F}}
\begin{document}

\setlength{\baselineskip}{1.2\baselineskip}
\title[Polyubles, Poisson homogeneous spaces and multi-flag varieties]
{Polyubles, Poisson homogeneous spaces\\
 and multi-flag varieties}
\author{Shaoqiang Deng}
\address{
School of Mathematical Sciences and LPMC    \\
Nankai University \\
Tianjin 300317              \\
China}
\email{dengsq@nankai.edu.cn}

\author{Chuangchuang Kang}
\address{
College of Mathematics and Computer Science  \\
Zhejiang Normal University\\
Jinhua 321004\\}
\email{kangcc@zjnu.edu.cn}

\author{Shizhuo Yu}
\address{
School of Mathematical Sciences and LPMC    \\
Nankai University \\
Tianjin 300317              \\
China}
\email{yusz@nankai.edu.cn}

\dedicatory{\it{Dedicated to Professor Yuri I. Manin (1937-2023)}}

\begin{abstract}
A polyuble of a Manin triple can be regarded as the ``$n$-th power'' of it, which plays an important rule in the study of Poisson geometry, mathematical physics and Lie theory. In this paper, we first construct an isomorphism between the $mn$-ble and the $n$-ubles of $m$-uble by colored graph and point out it is unique. Then, we construct a class of Poisson homogeneous spaces and obtain a class of Poisson homeomorphisms between them based on the first main result. Last, we apply first two main results to multi-flag varieties as well as multi-double flag varieties and construct a class of global Poisson isomorphisms between them as well as their $T$-leaves.
\end{abstract}

\maketitle
\allowdisplaybreaks
\sectionnew{Introduction and statements of results}\lb{intro}

\subsection{Introduction}
Polyubles of Manin triples, which can be regarded as the $n$-th power  of them, were first introduced by Fock and Rosli in the study of moduli spaces of flat connections on Riemann surface \cite{poly} and further studied by J.-H. Lu and V. Mouquin in the study of Poisson homogeneous spaces and deformation quantization of Poisson algebras \cite{mixed,quanti}. Let $\mathcal{M}$ be a Manin triple, whose $n$-uple ${\mathcal M^n}$ is a Manin triple as well. First part in this paper, including $\S 2$, proves that ${\mathcal M^{mn}}$ is isomorphic to $({\mathcal M^m})^{n}$ as Manin triples and the isomorphism is unique. Furthermore, the main result in this part can be also generalized to the case of hom-Manin triple.

Manin triples naturally induce a class of quasi-triangular $r$-matrix, which provide a class of examples of Poisson manifolds and  Poisson homogeneous spaces in Lie theory. The general constructions were figured out by J.-H. Lu, V. Mouquin, and  M. Yakimov in \cite{T-leaves,orbit}. Second part in this paper, including $\S 3$ and $\S 4$, generalizes the constructions in \cite{T-leaves,orbit} such that more significant examples (see $\S$\ref{subsec:example} for more details) in Lie theory involved in our constructions. Furthermore, based on the isomorphisms in our first part, we obtain a class of Poisson diffeomomrphisms between Poisson homogeneous spaces we construct.

Let $G$ be a simply connected complex semi-simple Lie group, and let $(B, B_-)$ be a pair of opposite Borel subgroups of $G$. Multi-flag varieties
$F_n$
and multi-double varieties
$DF_{n}$
(see \S \ref{con} for precise definitions) are important research objects in the study of algebraic geometry and representations theory. In the study of Poisson geometry, J.-H. Lu and V. Mouquin construct a class of Poisson structures coming from polyubles in \cite{mixed} and describe their $T$-orbits of symplectic leaves in \cite{T-leaves}, where  $T=B\cap B_-$. Our third part, including $\S 5$, constructs a global Poisson isomorphism between $F_{2n}$ and $DF_{n}$. Furthermore, we figure out as well as the corresponding  Poisson isomorphism  between their $T$-leaves.

\subsection{Polyubles  triples and isomorphisms between them}
Recall from \cite{etingof-schiffmann,PGLecture,Manin1} that a quadratic Lie algebra is pair $(\fd,\langle,\rangle_{\fd})$, where $\fd$ is a Lie algebra and $\langle\cdot,\cdot\rangle$ is a nondegenerate symmetric invariant bilinear form. A decomposition  $\fd=\mathfrak{g}_1+\mathfrak{g}_2$  is called a Lagrangian splitting of $(\fd,\langle,\rangle_{\fd})$ if $\mathfrak{g}_1$ and $\mathfrak{g}_2$ are both Lie subalgebras of $\fd$ as well as Lagrangian subspaces of $(\fd,\langle,\rangle_{\fd})$, and the decomposition is a direct sum as vector spaces. A  Manin triples is a triple $((\mathfrak{d},\langle\cdot,\cdot\rangle),\mathfrak{g}_1,\mathfrak{g}_2)$, where $(\mathfrak{d},\langle\cdot,\cdot\rangle)$ is a quadratic Lie algebra and $\fd=\mathfrak{g}_1+\mathfrak{g}_2$ is a Lagrangian splitting.

Let $(\fd,\langle,\rangle_{\fd})$ be a quadratic Lie algebra with Lagrangian splitting $\fd=\mathfrak{g}_1+\mathfrak{g}_2$ and denote the Manin triple $((\fd,\langle,\rangle_{\fd}),\mathfrak{g}_1,\mathfrak{g}_2)$ by $\mathcal{M}$.
Let $(\fd^m,\langle,\rangle_{\fd^m})$ be the quadratic Lie algebra equipped with bilinear form
\[\langle(a_1,a_2,...,a_m),(b_1,b_2,...,b_m)\rangle_{\mathcal{M}^m}=\sum_{i=1}^{m}(-1)^{i-1}\langle a_i,b_i\rangle_\fd,\ \ a_i,b_i\in\fd.\]
Let $\fd_{\Delta}=\{(a,a):a\in\fd\}$ , which is a Lie subalgebra of $\fd\oplus\fd$. Define a Lagrangian splitting of $\fd^{m}$ as follows:
\[\fg_m=\underbrace{\fd_{\Delta}\oplus...\oplus\fd_{\Delta}}_{\frac{m}{2}},\ \fg'_{m}=\fg_2\oplus\underbrace{\fd_{\Delta}\oplus...\oplus\fd_{\Delta}}_{\frac{m}{2}-1}\oplus\fg_1,\ {\rm if\ m\ is\ even},\]

\[\fg_{m}=\underbrace{\fd_{\Delta}\oplus...\oplus\fd_{\Delta}}_{\frac{m-1}{2}}\oplus\fg_2,\ \fg'_{m}=\fg_1\oplus\underbrace{\fd_{\Delta}\oplus...\oplus\fd_{\Delta}}_{\frac{m-1}{2}},\ {\rm if\ m\ is\ odd}.\]
\textbf{Definition A} (\cite{poly})\\
The Manin triple $((\fd^m,\langle,\rangle_{\mathcal{M}^m}),\fg_m,\fg'_{m})$ is called the the $m$-uble of $\fd$, denoted by $\mathcal{M}^m$.
Based on this construction, we can continue to define the $m$-uble of the $n$-uble $\mathcal{M}^n$, denoted by $(\mathcal{M}^n)^m$, for any $m,n\in \mathbb{N}^*$. \\
\textbf{Theorem A} (Special Case of Theorem \ref{tm:mn-uble})\\
{\it There exists a unique isomorphism as a Manin triple $i^{mn}_n:\mathcal{M}^{nm}\rightarrow (\mathcal{M}^n)^m$, for any $m,n\in \mathbb{N}^*$.}\\

To prove Theorem A, we define the graph of a polyuble in the analogue of combinatorics. See \S\ref{graph} for more details. As each Manin triple induces a quasi-triangular $r$-matrix \cite{etingof-schiffmann,PGLecture,Manin1}, the graph of a polyuble also provides combinatorial interpretation for a quasi-triangular $r$-matrix induced by a polyuble. Furthermore, Theorem A can be also generalized to Theorem \ref{tm:mn-uble} in the study of hom-Lie Manin triples, which provide more information on Lie algebra and combinatorics.

\subsection{Poisson homogeneous spaces induced by polyubles}
Manin triple plays an important role in the study of Poisson geometry. Recall from \cite{PGLecture} that a Poisson manifold $(M,\pi)$ is a manifold $M$ with a  bi-vector field $\pi$ such that $\{f,g\}=\pi(df,dg)$ is a Poisson bracket. A Poisson Lie group is a Lie group $G$ with a  $\pi_G$ such that the group multiplication is a Poisson map. Recall from \cite{PGLecture} that a simply-connected Poisson Lie group $(G,\pi_G)$ is one-one corresponding to a Manin triple $((\fd,\langle,\rangle_\fd),\fg,\fg^*)$.
Let $(G,\pi_G)$ be a simply connected Poisson Lie group with Manin triple $\mathcal{M_\fg}=((\fd,\langle,\rangle_\fd),\fg,\fg^*)$. {\it A $n$-uble of a Poisson Lie group $(G,\pi_G)$} is the Poisson Lie group, denoted by  $G^{(n)}$, whose corresponding Manin triple is $\mathcal{M_\fg}^n$. Notice that $G^{(n)}$ is locally but not globally isomorphic $G^n$ in general. The version of Theorem A in Poisson Lie groups says that $(G^{(mn)},\pi_{G^{(mn)}})$ is isomorphic to $((G^{(m)})^{(n)},\pi_{(G^{(m)})^{(n)}})$ as Poisson Lie groups.

A homogeneous space $P$ of $G$ with a Poisson structure $\pi_P$ is called \emph{a Poisson homogeneous space} of Poisson Lie group $(G,\pi_G)$ if the action of  $(G,\pi_G)$ on $(P,\pi_P)$ is a Poisson map, see \cite{Drin} for reference. Not only the general theory of Poisson homogeneous spaces but also concrete examples such as flag varieties play important roles in the study of aspects such as symmetric spaces \cite{SymSpa}, symplectic groupoids \cite{SymG} and total positivity \cite{Pos}.
Let $\mathcal{M}=((\mathfrak{d},\langle,\rangle),\mathfrak{g}_1,\mathfrak{g}_2)$ be a Manin triple and $M$ a  manifold. Let $\sigma:\mathfrak{d}\rightarrow V^{1}(M)$ be the left Lie algebra action of $\mathfrak{d}$ on $M$. Let $\{\xi_{1},...,\xi_{m}\}$ be a basis of $\mathfrak{g}_2$ and $\{x_{1},...,x_{m}\}$ the dual basis of $\mathfrak{g}_1$ with respect to $\langle,\rangle$. Recall from \cite{PGLecture,Manin1} that
\begin{equation}\label{basis}
r_{\mathfrak{g}_1,\mathfrak{g}_2}=\sum_{i=1}^{m}\xi_{i}\otimes x_{i}\in \mathfrak{d}\otimes \mathfrak{d}.
\end{equation}
is  a quasi-triangular $r$-matrix of $\mathfrak{d}$. Denote $\Lambda_{\mathfrak{g}_1,\mathfrak{g}_2}\in \mathfrak{d}\wedge \mathfrak{d}$ the skew-symmetric part of $r_{\mathfrak{g}_1,\mathfrak{g}_2}$.
In \cite{T-leaves,orbit},  an equivalent condition that $\sigma(r_{\mathfrak{g}_1,\mathfrak{g}_2})$ is a Poisson structure and a sufficient condition that $\sigma(\Lambda_{\mathfrak{g}_1,\mathfrak{g}_2})$ is a Poisson structure are figured out by J.-H. Lu, V. Mouquin, M. Yakimov. Furthermore, they develop a general theory to study the regular partitions of related Poisson homogeneous spaces, including the examples of multi-flag varieties and multi-flag varieties in Lie theory.

Let $D$ be a simply-connected Lie group with Lie algebra $\mathfrak{d}$ and $I^{mn}_n:D^{mn}\rightarrow (D^m)^n$ be the unique  Lie group isomorphism induced from $I^{mn}_n$ in Theorem A. Let $Q$ be a closed Lie subgroup of $D^{mn}$ and define $\Pi^{mn}$ to be the bivector field on $D^{mn}/Q$ induced by $\mathcal{M}^{mn}$ and left translation of $D^{mn}$ . Define also $\Pi^{m,n}$ the bivector field on $(D^{m})^n/I^{mn}_n(Q)$ induced by $(M^{m})^n$ and left translation of $(D^{m})^n$. The second main theorem in this paper gives an equivalent condition that  $\sigma(\Lambda_{\mathfrak{g}_1,\mathfrak{g}_2})$ is a Poisson structure and gives a class of Poisson diffeomorphisms of Poisson homogeneous spaces based on Theorem A.\\
\textbf{Theorem B} (Special Case of Theorem \ref{Hom-Poisson} and Theorem \ref{PoissonIso})\\
(1){\it The bivector field $\sigma(\Lambda_{\mathfrak{g}_1,\mathfrak{g}_2})$ is a Poisson structure if and only if for all $ m\in M$, $[\mathfrak{q}_{m}^{\perp},\mathfrak{q}_{m}^{\perp}]\subset \mathfrak{q}_{m}$, where $\mathfrak{q}_{m}:=\{a\in \mathfrak{d}:\sigma(a)(m)=0\}$ is the stabilizer subalgebra of $m\in M$.}\\
(2){\it If the Lie algebra $\mathfrak{q}$ of $Q$ satisfies that $[\mathfrak{q}^\perp,\mathfrak{q}^\perp]\subset \mathfrak{q}$,
both $(D^{mn}/Q,\Pi^{mn})$ and  $((D^m)^n/I^{mn}_n(Q),\Pi^{m,n})$ are Poisson homogeneous spaces and they are Poisson diffeomorphic.}\\

The first part of Theorem B can be also generalized to the construction of Hom-Poisson structures on manifolds. See Theorem \ref{Hom-Poisson} for more details.
\subsection{Identification of $(DF_n,\Pi_n)$ as $(F_{2n},\pi_{2n})$}
Let now $G$ be a simply connected complex semisimple Lie groups and $(B, B_-)$ a pair of opposite Borel subgroups of $G$. Recall from \cite{T-leaves}  that multi-flag varieties
\[F_n=G \times_{B}  \cdots \times_{B} G/B\]
are the quotient spaces of twisted action of $B^n$ on $G^n$ (See (\ref{ta}) for the explicit definition) and
and multi-double varieties
\[DF_{n}=(G\times G) \times_{B\times B_-}  \cdots \times_{B\times B_-} (G\times G)/(B\times B_-)\]
are the quotient spaces of twisted action of $(B\times B_-)^n$ on $(G\times G)^n$.
In \cite{mixed}, J.-H. Lu and V. Mouquin define the Poisson structure $\pi_n$ on $F_n$ and $\Pi_n$ on $DF_n$ and figure out that they can be also obtained from polyubles and Lie algebra actions. 

Based on Theorem A and Theorem B, the following Theorem C identifies $(DF_n,\Pi_{n})$ with $(F_{2n},\pi_{2n})$ and $(DF_n,\Pi_{n})$.\\
\textbf{Theorem C} (Theorem \ref{theo:iso})\\
{\it A double multi-flag variety $(DF_n,\Pi_{n})$ is globally Poisson isomorphic to $(F_{2n},\pi_{2n})$.}

The $T$-leaves, defined in \cite{T-leaves} and referred in Definition \ref{se:Tleaf}, of $(F_{n},\pi_{n})$ and $(DF_n,\Pi_{n})$ are pointed out by J.-H. Lu and V. Mouquin in \cite{T-leaves} and play an important role in the study of  cluster algebra \cite{Shen-Weng:BS} and integrable systems \cite{Lu-Mi:Kostant}. In Poisson geometry, $T$-leaves of $(F_{2n},\pi_{2n})$ and $DF_{n}$ also provide series of Poisson groupoids and symplectic groupoids, see \cite{SymG,Mou:Guu}. Theorem C also identify $T$-leaves in $(DF_{n},\Pi_n)$ as those in $(F_{2n},\pi_{2n})$. We expect that these identifications discovered in the paper will shed new lights not only on Poisson geometry, but also on their applications to other areas of mathematics such as algebraic combinatorics, cluster algebras and quantum algebras.

\subsection{Acknowledgements}
This work was completed while the second author was supported by the Nankai University Postgraduate
Studentship and a version of Theorem A as well as
Theorem B(1) is contained in the Nankai University Ph.D. Thesis \cite{KangThesis} of the second author.  The third author would like to thank Jiang-Hua Lu and Yanpeng Li for helpful discussion.
The first author acknowledges support from the NSF China (12071228 and 12131012) and the third author acknowledges support from the NSF China (12101328).

\section{Polyubles manin triples of hom-Lie algebras and isomorphisms between them}

\subsection{Preliminaries on hom-Lie algebras}

As a generalization of Lie algebras, hom-Lie algebras  were introduced  in \cite{Hartwig}, arose from the study of q-deformations of Witt and Virasoro algebras in \cite{Hu}, which plays an important role in physics, mainly in discrete and deformed vector fields \cite{Larsson2,Larsson}.
We first recall some basic definitions and propositions of hom-Lie algebras from \cite{Benayadi,ShengBai,Sheng}.
 \begin{definition}
A {\bf hom-Lie algebra} is a triple $(\mathfrak{h},[\cdot,\cdot]_\mathfrak{h},\phi_\mathfrak{h})$ that consists of a vector space $\mathfrak{h}$, a bilinear bracket $[\cdot,\cdot]_\mathfrak{h}:\wedge^2\mathfrak{h}\rightarrow \mathfrak{h}$, and an algebra homomorphism $\phi_\mathfrak{h}:\mathfrak{h}\rightarrow \mathfrak{h}$ such that the following  hom-Jacobi identity holds:
\begin{equation}\label{eq:Hom-J}
  [\phi_\mathfrak{h}(x),[y,z]_\mathfrak{h}]_\mathfrak{h}+[\phi_\mathfrak{h}(y),[z,x]_\mathfrak{h}]_\mathfrak{h}+[\phi_\mathfrak{h}(z),[x,y]_\mathfrak{h}]_\mathfrak{h}=0, \quad \forall\, x,y,z\in \mathfrak{h}.
\end{equation}
A hom-Lie algebra is said to be {\bf involutive}, if $\phi_\mathfrak{h}^2=\mathrm{Id}$. A subspace $\mathfrak{k}\subset \mathfrak{h}$ is a {\bf hom-Lie subalgebra} of $(\mathfrak{h},[\cdot,\cdot]_\mathfrak{h},\phi_\mathfrak{h})$, if $\phi_{\mathfrak{h}}(\mathfrak{k})\subset \mathfrak{k}$ and for all $x,x'\in \mathfrak{k}$, $[x,x']_{\mathfrak{h}}\in\mathfrak{k}$.
\end{definition}

\begin{definition}
Let $(\mathfrak{h}_1,[\cdot,\cdot]_{\mathfrak{h}_1},\phi_{\mathfrak{h}_1})$ and $(\mathfrak{h}_2,[\cdot,\cdot]_{\mathfrak{h}_2},\phi_{\mathfrak{h}_2})$ be two hom-Lie algebras. A {\bf homomorphism} from  $(\mathfrak{h}_1,[\cdot,\cdot]_{\mathfrak{h}_1},\phi_{\mathfrak{h}_1})$ to $(\mathfrak{h}_2,[\cdot,\cdot]_{\mathfrak{h}_2},\phi_{\mathfrak{h}_2})$ is a linear map $f: \mathfrak{h}_1\rightarrow \mathfrak{h}_2$ such that
\begin{align}
  f\circ \phi_{\mathfrak{h}_1} & =\phi_{\mathfrak{h}_2} \circ f, \\
  f[x,y]_{\mathfrak{h}_1} & =[f(x),f(y)]_{\mathfrak{h}_2},\quad \forall\, x,y\in \mathfrak{h}_1.
\end{align}
An {\bf isomorphism} of  hom-Lie algebras is an invertible homomorphism of hom-Lie
algebras. Two hom-Lie algebras are said to be {\bf isomorphic} if there exists an
isomorphism between them.
\end{definition}

\begin{definition}
A {\bf representation} $(V,\rho,\alpha)$ of hom-Lie algebra $(\mathfrak{h},[\cdot,\cdot]_{\mathfrak{h}},\phi_{\mathfrak{h}})$ on a vector space $V$ with respect to $\alpha \in \mathfrak{gl}(V)$ is a linear map $\rho:\mathfrak{h}\rightarrow\mathfrak{gl}(V)$, such that
\begin{align}
  \rho(\phi_{\mathfrak{h}}(x))\circ \alpha & =\alpha \circ\rho(x),\\
  \rho([x,y]_{\mathfrak{h}})\circ \alpha & =\rho(\phi_{\mathfrak{h}}(x))\circ \rho(y)-\rho(\phi_{\mathfrak{h}}(y))\circ \rho(x),\quad \forall\,x,y\in \mathfrak{h}.
\end{align}
\end{definition}
Let $V^*$ be the dual vector space of $V$, $\langle\cdot,\cdot\rangle$ be the bilinear pairing between $V$ and $V^*$, $\alpha^*\in \mathfrak{gl}(V^*)$ satisfying
\begin{equation}\label{eq:phi*}
\langle\alpha(v),\xi\rangle=\langle v,\alpha^*(\xi)\rangle,\quad \forall~ v\in V,\xi\in V^*.
\end{equation}
Define a linear map $\rho^*:\mathfrak{h}\rightarrow\mathfrak{gl}(V^*)$ by
\begin{equation}\label{eq:hom-invar}
  \langle\rho^*(x)(\xi),v\rangle+\langle\xi,\rho(x)v\rangle=0,\quad \forall~ x\in\mathfrak{h}.
\end{equation}
\begin{proposition}{\rm(\cite{ShengBai})}
With the above notations, $(V^*,\rho^*,\alpha^*)$ is a representation of  hom-Lie algebra $(\mathfrak{h},[\cdot,\cdot]_{\mathfrak{h}},\phi_{\mathfrak{h}})$ if and only if the following two equations hold:
\begin{align}
  \alpha\circ\rho(\phi_{\mathfrak{h}}(x)) & = \rho(x)\circ \alpha, \label{eq:conditon-1}\\
  \alpha\circ \rho([x,y]_{\mathfrak{h}}) & = \rho(x)\circ\rho(\phi_{\mathfrak{h}}(y))-\rho(y)\circ\rho(\phi_{\mathfrak{h}}(x)),\quad\forall\,x,y\in \mathfrak{h}\label{eq:conditon-2}.
\end{align}
The triple $(V^*,\rho^*,\alpha^*)$ is called the {\bf dual representation} of hom-Lie algebra $(\mathfrak{h},[\cdot,\cdot]_{\mathfrak{h}},\phi_{\mathfrak{h}})$.
\end{proposition}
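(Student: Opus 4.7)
The plan is straightforward: unfold the two axioms required of a representation for the candidate triple $(V^*,\rho^*,\alpha^*)$, then pair both sides against an arbitrary $v\in V$ and use the adjoint relations~(\ref{eq:phi*}) and~(\ref{eq:hom-invar}) to transport everything to a bilinear identity on $V\times V^*$. Since the pairing is nondegenerate and every manipulation is reversible, each axiom on $V^*$ becomes equivalent to an operator identity on $V$ quantified over all $x,y\in\mathfrak{h}$, delivering both directions of the ``if and only if'' simultaneously.

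First I would treat the first axiom $\rho^*(\phi_{\mathfrak{h}}(x))\circ\alpha^* = \alpha^*\circ\rho^*(x)$: pairing with $v$ and applying the two adjoint relations on each side gives $-\langle\xi,\alpha\circ\rho(\phi_{\mathfrak{h}}(x))v\rangle = -\langle\xi,\rho(x)\circ\alpha(v)\rangle$ for every $\xi\in V^*$, which by nondegeneracy in $\xi$ is equivalent to~(\ref{eq:conditon-1}). For the second axiom I would apply~(\ref{eq:hom-invar}) twice on each composed term on the right-hand side; the two sign flips cancel, while the single application on the left leaves one sign, and after multiplying through by $-1$ the result is precisely~(\ref{eq:conditon-2}).

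The only subtlety is a bookkeeping one: dualization reverses the order of composition, so $\rho^*(\phi_{\mathfrak{h}}(x))\circ\rho^*(y)$ transports on $V$ to $\rho(y)\circ\rho(\phi_{\mathfrak{h}}(x))$ rather than to the reverse. This is why the right-hand side of~(\ref{eq:conditon-2}) appears in the order $\rho(x)\circ\rho(\phi_{\mathfrak{h}}(y)) - \rho(y)\circ\rho(\phi_{\mathfrak{h}}(x))$ instead of matching the composition pattern on $V^*$ term by term. Apart from this order flip and the routine tracking of the sign introduced by~(\ref{eq:hom-invar}), no genuine obstacle arises; the proposition is a direct translation of the representation axioms through the dual pairing.
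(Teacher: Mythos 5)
Your verification is correct: transporting each axiom for $(V^*,\rho^*,\alpha^*)$ through the nondegenerate pairing via \eqref{eq:phi*} and \eqref{eq:hom-invar} does yield exactly \eqref{eq:conditon-1} and \eqref{eq:conditon-2}, with the sign and order-reversal bookkeeping handled as you describe (one sign on the left of the second axiom, two cancelling signs on each composed term on the right, and the transpose reversing composition so that $\rho^*(\phi_{\mathfrak{h}}(x))\circ\rho^*(y)$ lands on $\rho(y)\circ\rho(\phi_{\mathfrak{h}}(x))$). The paper itself gives no proof --- it imports the proposition from the cited reference --- and your argument is the standard direct verification one would expect there, so there is nothing to reconcile.
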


A representation $(V,\rho,\alpha)$ is called {\bf admissible} if $(V^*,\rho^*,\alpha^*)$ is also a representation, i.e. condition \eqref{eq:conditon-1} and \eqref{eq:conditon-2} in the above proposition are satisfied. It is straight forward to see that $(\mathfrak{h},\mathrm{ad},\phi_{\mathfrak{h}})$ is a representation of hom-Lie algebra $(\mathfrak{h},[\cdot,\cdot]_{\mathfrak{h}},\phi_{\mathfrak{h}})$.  Where $\mathrm{ad}:\mathfrak{h}\rightarrow \mathfrak{gl}(\mathfrak{h})$ is a linear map and for all $x,y\in\mathfrak{h}$, $\mathrm{ad}_{x}(y)=[x,y]_{\mathfrak{h}}$. We call $(\mathfrak{h},\mathrm{ad},\phi_{\mathfrak{h}})$ the {\bf adjoint representation} of hom-Lie algebra. Let $\mathfrak{h}^*$ be the dual space of $\mathfrak{h}$, $\phi_{\mathfrak{h}}^*\in \mathfrak{gl}(\mathfrak{h}^*)$. Define a linear map $\mathrm{ad}^*:\mathfrak{h}\rightarrow\mathfrak{gl}(\mathfrak{h}^*)$ by
\begin{equation}\label{eq:hom-adinvar}
  \langle \mathrm{ad}_x^*(\xi),y\rangle+\langle\xi,\mathrm{ad}_x(y)\rangle=0,\quad \forall~ x,y\in\mathfrak{h},\,\xi\in \mathfrak{h}^*.
\end{equation}
Then $(\mathfrak{h}^*,\mathrm{ad}^*,\phi_{\mathfrak{h}}^*)$ is a representation if and only if the following two equations hold:
\begin{align}
  [(\mathrm{Id}-\phi_{\mathfrak{h}}^2)(x),\phi_{\mathfrak{h}}(y)]_{\mathfrak{h}} & =0, \label{eq:ad-condition-1}\\
  [(\mathrm{Id}-\phi_{\mathfrak{h}}^2)(x),[\phi_{\mathfrak{h}}(y),z]_{\mathfrak{h}}]_{\mathfrak{h}} & = [(\mathrm{Id}-\phi_{\mathfrak{h}}^2)(y),[\phi_{\mathfrak{h}}(x),z]_{\mathfrak{h}}]_{\mathfrak{h}},\quad \forall\,x,y,z\in \mathfrak{h}\label{eq:ad-condition-2}.
\end{align}
We call $(\mathfrak{h}^*,\mathrm{ad}^*,\phi_{\mathfrak{h}}^*)$ the {\bf coadjoint representation} of hom-Lie algebra $(\mathfrak{h},[\cdot,\cdot]_{\mathfrak{h}},\phi_{\mathfrak{h}})$. And a  hom-Lie algebra is called {\bf admissible hom-Lie algebra} if  \eqref{eq:ad-condition-1} and \eqref{eq:ad-condition-2} are satisfied.
If $\phi_{\mathfrak{h}}^2=\mathrm{Id}$, then $(\mathfrak{h}^*,\mathrm{ad}^*,\phi_{\mathfrak{h}}^*)$ is a coadjoint representation of hom-Lie algebra $(\mathfrak{h},[\cdot,\cdot]_{\mathfrak{h}},\phi_{\mathfrak{h}})$, we call it {\bf involutive dual hom-Lie algebra}.

There are five different types of quadratic hom-Lie algebra in \cite{Tao2, Benayadi,ShengBai,CaiSheng,TaoBaiGuo}, but in this paper we use the definition of quadratic hom-Lie algebra in \cite{Benayadi} and \cite{ShengBai}.

\begin{definition}{\rm(\cite{Benayadi,ShengBai})}
Let $(\mathfrak{h},[\cdot,\cdot]_{\mathfrak{h}},\phi_{\mathfrak{h}})$ be a hom-Lie algebra and $\langle\cdot,\cdot\rangle_{\mathfrak{h}}:\mathfrak{h}\times \mathfrak{h}\rightarrow \mathbf{k}$ be a symmetric nondegenerate bilinear
form satisfying
\begin{align}
&\label{eq:quadra-hom-1}\langle[x,y]_{\mathfrak{h}},z\rangle_{\mathfrak{h}}=\langle x,[y,z]_{\mathfrak{h}}\rangle_{\mathfrak{h}},\\
&  \label{eq:quadra-hom-2}\langle\phi_{\mathfrak{h}}(x),y\rangle_{\mathfrak{h}}=\langle x,\phi_{\mathfrak{h}}(y)\rangle_{\mathfrak{h}},\quad\forall\,x,y,z\in\mathfrak{h}.
\end{align}
The quadruple  $(\mathfrak{h},[\cdot,\cdot]_{\mathfrak{h}},\phi_{\mathfrak{h}},\langle\cdot,\cdot\rangle_{\mathfrak{h}})$ is called a {\bf quadratic hom-Lie algebra}.
\end{definition}

\subsection{Polyubles of hom-Lie algebras }
\begin{definition}{\rm(\cite{ShengBai})}
A finite dimensional  {\bf manin triples of hom-Lie algebras} is a triple of finite dimensional hom-Lie algebras $(\mathfrak{u},\mathfrak{h},\mathfrak{h}')$. Where $(\mathfrak{u},[\cdot,\cdot]_{\mathfrak{u}},\phi_{\mathfrak{u}},\langle\cdot,\cdot\rangle_{\mathfrak{u}})$  is a quadratic hom-Lie algebra,
such that
\begin{itemize}
  \item $(\mathfrak{h},[\cdot,\cdot]_{\mathfrak{h}},\phi_{\mathfrak{h}})$ and $(\mathfrak{h}',[\cdot,\cdot]_{\mathfrak{h}'},\phi_{\mathfrak{h}'})$ are hom-Lie subalgebras of $\mathfrak{u}$ such that $\mathfrak{u}=\mathfrak{h}\oplus \mathfrak{h}'$ as vector space.
  \item $(\mathfrak{h},[\cdot,\cdot]_{\mathfrak{h}},\phi_{\mathfrak{h}})$ and $(\mathfrak{h}',[\cdot,\cdot]_{\mathfrak{h}'},\phi_{\mathfrak{h}'})$ are isotropic with respect to $\langle\cdot,\cdot\rangle_{\mathfrak{u}}$.
  \item $\phi_{\mathfrak{u}}=\phi_{\mathfrak{h}}\oplus\phi_{\mathfrak{h}'}$.
\end{itemize}
\end{definition}

\begin{definition}
 Let $(\mathfrak{u},[\cdot,\cdot]_{\mathfrak{u}},\phi_{\mathfrak{u}},\langle\cdot,\cdot\rangle_{\mathfrak{u}})$  be an $2n$-dimensional quadratic hom-Lie algebra. A hom-Lie subalgebra $(\mathfrak{h},[\cdot,\cdot]_{\mathfrak{h}},\phi_{\mathfrak{h}})$ of $(\mathfrak{u},[\cdot,\cdot]_{\mathfrak{u}},\phi_{\mathfrak{u}},\langle\cdot,\cdot\rangle_{\mathfrak{u}})$ is {\bf Lagrangian} if $\mathfrak{h}$ is maximal isotropic with respect to $\langle\cdot,\cdot\rangle_{\mathfrak{u}}$, i.e., for all $x,y \in {\mathfrak{h}}$, $\langle x,y\rangle_{\mathfrak{u}}=0$  and $\mathrm{dim}(\mathfrak{h})=n$. If both $(\mathfrak{h},[\cdot,\cdot]_{\mathfrak{h}},\phi_{\mathfrak{h}})$ and  $(\mathfrak{h}',[\cdot,\cdot]_{\mathfrak{h}'},\phi_{\mathfrak{h}'})$ are Lagrangian hom-Lie subalgebras of $(\mathfrak{u},[\cdot,\cdot]_{\mathfrak{u}},\phi_{\mathfrak{u}},\langle\cdot,\cdot\rangle_{\mathfrak{u}})$ and $\phi_{\mathfrak{u}}=\phi_{\mathfrak{h}}\oplus\phi_{\mathfrak{h}'}$, then the splitting $\mathfrak{u}=\mathfrak{h}+\mathfrak{h}'$ is called a {\bf Lagrangian splitting} of $(\mathfrak{u},[\cdot,\cdot]_{\mathfrak{u}},\phi_{\mathfrak{u}},\langle\cdot,\cdot\rangle_{\mathfrak{u}})$. And in this case, $(\mathfrak{u},\mathfrak{h},\mathfrak{h}')$ is a manin triple of hom-Lie algebras.
\end{definition}

For an integer $n\geq 1$, set $\mathfrak{u}^n=\underbrace{\mathfrak{u}\oplus\cdots\oplus\mathfrak{u}}_n$. For all $a_1,\cdots,a_n\in \mathfrak{u}$, denote its elements by $(a_1,\cdots,a_n)$. Define
a bilinear map $[\cdot,\cdot]_{\mathfrak{u}^n}:\wedge^2\mathfrak{u}^n\rightarrow \mathfrak{u}^n$  by
\begin{equation}\label{eq:direct-product-n}
 [(a_1,\cdots,a_n),(a_1', \cdots, a_n')]_{\mathfrak{u}^n}:=([a_1,a_1']_{\mathfrak{u}},\cdots, [a_n,a_n']_{\mathfrak{u}}).
\end{equation}
Define a bilinear form $\langle\cdot,\cdot\rangle_{\mathfrak{u}^n}:\mathfrak{u}^n\times\mathfrak{u}^n\rightarrow \mathbf{k}$ by
\begin{equation}\label{eq:polyuble-bilinear-n}
  \langle(a_1,a_2,\cdots,a_n),(a_1',a_2',\cdots,a_n')\rangle_{\mathfrak{u}^n}:=\sum_{j=1}^{n}(-1)^{j+1}\langle a_j,a_j'\rangle_{\mathfrak{u}}.
\end{equation}
And define an algebra homomorphism $\phi_{\mathfrak{u}^n}:\mathfrak{u}^n\rightarrow \mathfrak{u}^n$ such that
\begin{equation}\label{eq:polyuble-hom-n}
 \phi_{\mathfrak{u}^n}(a_1,a_2,\cdots,a_n):=(\phi_{\mathfrak{u}}(a_1),\phi_{\mathfrak{u}}(a_2),\cdots,\phi_{\mathfrak{u}}(a_n)), \quad \forall~ a_1,\cdots,a_n,a_1',\cdots,a_n'\in \mathfrak{u}.
\end{equation}

\begin{lemma}\label{lem:polyubles-hom}
Let $(\mathfrak{u},[\cdot,\cdot]_{\mathfrak{u}},\phi_{\mathfrak{u}},\langle\cdot,\cdot\rangle_{\mathfrak{u}})$  be an even dimensional quadratic hom-Lie algebra. With the above notations,
$(\mathfrak{u}^n,[\cdot,\cdot]_{\mathfrak{u}^n},\phi_{\mathfrak{u}^n},\langle\cdot,\cdot\rangle_{\mathfrak{u}^n})$ are  quadratic hom-Lie algebras.
\end{lemma}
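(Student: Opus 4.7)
The plan is to verify the four required properties componentwise, exploiting the fact that both the bracket $[\cdot,\cdot]_{\mathfrak{u}^n}$ and the endomorphism $\phi_{\mathfrak{u}^n}$ are defined coordinate-by-coordinate, while the alternating signs $(-1)^{j+1}$ in $\langle\cdot,\cdot\rangle_{\mathfrak{u}^n}$ pull out of any bilinear expression uniformly across the two slots. First I would check that $(\mathfrak{u}^n,[\cdot,\cdot]_{\mathfrak{u}^n},\phi_{\mathfrak{u}^n})$ is a hom-Lie algebra: the skew-symmetry of $[\cdot,\cdot]_{\mathfrak{u}^n}$ is immediate from \eqref{eq:direct-product-n}, and the hom-Jacobi identity \eqref{eq:Hom-J} for $\mathfrak{u}^n$ reduces in the $j$-th coordinate to the hom-Jacobi identity \eqref{eq:Hom-J} for $\mathfrak{u}$ applied to the $j$-th components of the three arguments. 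Similarly, $\phi_{\mathfrak{u}^n}$ is an algebra homomorphism because $\phi_{\mathfrak{u}^n}[(a_1,\dots,a_n),(a_1',\dots,a_n')]_{\mathfrak{u}^n}$ and $[\phi_{\mathfrak{u}^n}(a_1,\dots,a_n),\phi_{\mathfrak{u}^n}(a_1',\dots,a_n')]_{\mathfrak{u}^n}$ both equal $(\phi_{\mathfrak{u}}[a_1,a_1']_{\mathfrak{u}},\dots,\phi_{\mathfrak{u}}[a_n,a_n']_{\mathfrak{u}})$ by \eqref{eq:polyuble-hom-n} and the assumption that $\phi_{\mathfrak{u}}$ is a hom-Lie algebra homomorphism.

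Next I would verify that $\langle\cdot,\cdot\rangle_{\mathfrak{u}^n}$ is symmetric and nondegenerate. Symmetry follows because each summand $(-1)^{j+1}\langle a_j,a_j'\rangle_{\mathfrak{u}}$ is symmetric in $a_j,a_j'$ by symmetry of $\langle\cdot,\cdot\rangle_{\mathfrak{u}}$. For nondegeneracy, if $(a_1,\dots,a_n)$ pairs trivially against all of $\mathfrak{u}^n$, then by testing against elements $(0,\dots,b,\dots,0)$ supported in a single coordinate one gets $(-1)^{j+1}\langle a_j,b\rangle_{\mathfrak{u}}=0$ for every $b\in\mathfrak{u}$; the sign is a unit, so nondegeneracy of $\langle\cdot,\cdot\rangle_{\mathfrak{u}}$ forces $a_j=0$ for each $j$.

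Finally I would check the two compatibility axioms \eqref{eq:quadra-hom-1} and \eqref{eq:quadra-hom-2}. For \eqref{eq:quadra-hom-1}, both $\langle[(a_1,\dots,a_n),(b_1,\dots,b_n)]_{\mathfrak{u}^n},(c_1,\dots,c_n)\rangle_{\mathfrak{u}^n}$ and $\langle(a_1,\dots,a_n),[(b_1,\dots,b_n),(c_1,\dots,c_n)]_{\mathfrak{u}^n}\rangle_{\mathfrak{u}^n}$ expand to
\begin{equation*}
\sum_{j=1}^{n}(-1)^{j+1}\langle [a_j,b_j]_{\mathfrak{u}},c_j\rangle_{\mathfrak{u}}
\quad\text{and}\quad
\sum_{j=1}^{n}(-1)^{j+1}\langle a_j,[b_j,c_j]_{\mathfrak{u}}\rangle_{\mathfrak{u}},
\end{equation*}
respectively, which are equal term-by-term by \eqref{eq:quadra-hom-1} applied to $\mathfrak{u}$. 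The argument for \eqref{eq:quadra-hom-2} is analogous, using that $\phi_{\mathfrak{u}^n}$ acts componentwise and the coefficient $(-1)^{j+1}$ pulls out of each pairing in the same way on both sides.

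This is essentially a bookkeeping argument: the only point that needs any attention is to observe that the alternating sign $(-1)^{j+1}$ appears with the \emph{same} coordinate index on both sides of each identity, so it cancels (or is preserved) componentwise and never interacts with the bracket or the endomorphism, both of which respect the decomposition into direct summands. I do not foresee an obstacle beyond this notational care.
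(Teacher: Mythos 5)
Your proposal is correct and follows essentially the same route as the paper: a componentwise verification of skew-symmetry, the hom-Jacobi identity, the homomorphism property of $\phi_{\mathfrak{u}^n}$, and the compatibility axioms \eqref{eq:quadra-hom-1}--\eqref{eq:quadra-hom-2}, with the sign $(-1)^{j+1}$ factoring out of each summand. If anything, you are slightly more complete than the paper, which leaves the symmetry and nondegeneracy of $\langle\cdot,\cdot\rangle_{\mathfrak{u}^n}$ implicit while you verify them explicitly.
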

\begin{proof}
For all $a_1,\cdots,a_n, a_1',\cdots,a_n'\in \mathfrak{u}$,
\begin{eqnarray*}
[(a_1,\cdots,a_n),(a_1',\cdots,a_n')]_{\mathfrak{u}^n}&\overset{\eqref{eq:direct-product-n}}{=}& (-[a_1',a_1]_{\mathfrak{u}},\cdots,-[a_n',a_n]_{\mathfrak{u}})\\
&=&-[(a_1',\cdots,a_n'),(a_1,\cdots,a_n)]_{\mathfrak{u}^n},
\end{eqnarray*}
and
\begin{eqnarray*}
\phi_{\mathfrak{u}^n}[(a_1,\cdots,a_n),(a_1',\cdots,a_n')]_{\mathfrak{u}^n}&\overset{\eqref{eq:polyuble-hom-n}}{=}& (\phi_{\mathfrak{u}}[a_1,a_1']_{\mathfrak{u}},\cdots,\phi_{\mathfrak{u}}[a_n,a_n']_{\mathfrak{u}})\\
&=&[\phi_{\mathfrak{u}^n}(a_1,\cdots,a_n),\phi_{\mathfrak{u}^n}(a_1',\cdots,a_n')]_{\mathfrak{u}^n}.
\end{eqnarray*}
By \eqref{eq:Hom-J}, for all $a_1,\cdots,a_n, a_1',\cdots,a_n',a_1'',\cdots,a_n''\in \mathfrak{u}$, we have
\begin{eqnarray*}
&&[\phi_{\mathfrak{u}^n}(a_1,\cdots,a_n),[(a_1',\cdots,a_n'),(a_1'',\cdots,a_n'')]_{\mathfrak{u}^n}]_{\mathfrak{u}^n}\\
&&+[\phi_{\mathfrak{u}^n}(a_1',\cdots,a_n'),[(a_1'',\cdots,a_n''),(a_1,\cdots,a_n)]_{\mathfrak{u}^n}]_{\mathfrak{u}^n}\\
&&+[\phi_{\mathfrak{u}^n}(a_1'',\cdots,a_n''),[(a_1,\cdots,a_n),(a_1',\cdots,a_n')]_{\mathfrak{u}^n}]_{\mathfrak{u}^n}\\
&=& [(\phi_{\mathfrak{u}}(a_1),\cdots,\phi_{\mathfrak{u}}(a_n)),([a_1',a_1'']_{\mathfrak{u}},\cdots,[a_n',a_n'']_{\mathfrak{u}})]_{\mathfrak{u}^n}\\
&&+[(\phi_{\mathfrak{u}}(a_1'),\cdots,\phi_{\mathfrak{u}}(a_n')),([a_1'',a_1]_{\mathfrak{u}},\cdots,[a_n'',a_n]_{\mathfrak{u}})]_{\mathfrak{u}^n}\\
&&+[(\phi_{\mathfrak{u}}(a_1''),\cdots,\phi_{\mathfrak{u}}(a_n'')),([a_1,a_1']_{\mathfrak{u}},\cdots,[a_n,a_n']_{\mathfrak{u}})]_{\mathfrak{u}^n}\\
&=&(0,\cdots,0).
\end{eqnarray*}
Then $(\mathfrak{u}^n,[\cdot,\cdot]_{\mathfrak{u}^n},\phi_{\mathfrak{u}^n})$ are hom-Lie algebras.

Moreover, we have
\begin{eqnarray*}
\langle\phi_{\mathfrak{u}^n}(a_1,\cdots,a_n),(a_1',\cdots,a_n')\rangle_{\mathfrak{u}^n}
&\overset{\eqref{eq:polyuble-hom-n}}{=}&\langle(\phi_{\mathfrak{u}}(a_1),\cdots,\phi_\mathfrak{u}(a_n)),(a_1',\cdots,a_n')\rangle_{\mathfrak{u}^n}\\
&\overset{\eqref{eq:polyuble-bilinear-n}}{=}&\sum_{j=1}^{n}(-1)^{j+1}\langle \phi_{\mathfrak{u}}(a_j),a_j'\rangle_{\mathfrak{u}}\\
&\overset{\eqref{eq:quadra-hom-2}}{=}&\sum_{j=1}^{n}(-1)^{j+1}\langle a_j,\phi_{\mathfrak{u}}(a_j')\rangle_{\mathfrak{u}}\\
&=&\langle(a_1,\cdots,a_n),\phi_{\mathfrak{u}^n}(a_1',\cdots,a_n')\rangle_{\mathfrak{u}^n},
\end{eqnarray*}
and
\begin{eqnarray*}
    &&\langle[(a_1,\cdots,a_n),(a_1',\cdots,a_n')]_{\mathfrak{u}^n},(a_1'',\cdots,a_n'')\rangle_{\mathfrak{u}^n}\\
   &\overset{\eqref{eq:direct-product-n}}{=}&\langle([a_1,a_1']_{\mathfrak{u}},\cdots,[a_n,a_n']_{\mathfrak{u}}),(a_1'',\cdots,a_n'')\rangle_{\mathfrak{u}^n}\\
   &\overset{\eqref{eq:polyuble-bilinear-n}}{=}&  \sum_{j=1}^{n}(-1)^{j+1}\langle [a_j,a_j']_{\mathfrak{u}},a_j''\rangle_{\mathfrak{u}} \\
   &=& \langle(a_1,\cdots,a_n),[(a_1',\cdots,a_n'),(a_1'',\cdots,a_n'')]_{\mathfrak{u}^n}\rangle_{\mathfrak{u}^n}.
\end{eqnarray*}
Therefore, $(\mathfrak{u}^n,[\cdot,\cdot]_{\mathfrak{u}^n},\phi_{\mathfrak{u}^n},\langle\cdot,\cdot\rangle_{\mathfrak{u}^n})$ are quadratic hom-Lie algebras.
\end{proof}

\begin{theorem}\label{tm:hom-polyubles}
Let $(\mathfrak{u},[\cdot,\cdot]_{\mathfrak{u}},\phi_{\mathfrak{u}},\langle\cdot,\cdot\rangle_{\mathfrak{u}})$  be an even dimensional quadratic hom-Lie algebra with the Lagrangian splitting $\mathfrak{u}=\mathfrak{h}+\mathfrak{h}'$ and let $\mathfrak{u}_{\Delta}$ be the diagonal hom-Lie subalgebra of $\mathfrak{u}\oplus \mathfrak{u}$, i.e.
\begin{align}
&\mathfrak{u}_{\Delta}=\{(a,a)~|~a\in\mathfrak{u}\}\label{eq:diag}.
\end{align}
For an integer $n\geq 1$, define a Lagrangian splitting of $\mathfrak{u}^n$ as follows:
\begin{align}
  \mathfrak{h}_{n} & := \underbrace{\mathfrak{u}_{\Delta}\oplus\cdots\oplus\mathfrak{u}_{\Delta}}_{\frac{n-1}{2}}\oplus\mathfrak{h},\quad \mathfrak{h}_{n}':=\mathfrak{h}'\oplus \underbrace{\mathfrak{u}_{\Delta}\oplus\cdots\oplus\mathfrak{u}_{\Delta}}_{\frac{n-1}{2}},\quad\mathrm{if~n~is~odd},
  \label{eq:split-odd}\\
  \mathfrak{h}_{n} & := \underbrace{\mathfrak{u}_{\Delta}\oplus\cdots\oplus\mathfrak{u}_{\Delta}}_{\frac{n}{2}},\quad \mathfrak{h}_{n}':=\mathfrak{h}'\oplus\underbrace{\mathfrak{u}_{\Delta}\oplus\cdots\oplus\mathfrak{u}_{\Delta}}_{\frac{n}{2}-1}\oplus\mathfrak{h},\quad \mathrm{if~n~is~even}\label{eq:split-even}.
\end{align}
Then $(\mathfrak{u}^n,\mathfrak{h}_{n},\mathfrak{h}_{n}')$ are manin triples of hom-Lie algebras.
We call $$((\mathfrak{u}^{n},[\cdot,\cdot]_{\mathfrak{u}^{n}},\phi_{\mathfrak{u}^{n}},\langle\cdot,\cdot\rangle_{\mathfrak{u}^n}),\mathfrak{h}_{n},\mathfrak{h}_{n}')$$ the {\bf polyubles (or $n$-uble) of hom-Lie algebras} and which is denoted by  $\mathcal{M}_{\phi}^n$.
\end{theorem}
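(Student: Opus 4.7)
The plan is to check directly that $(\mathfrak{u}^n, \mathfrak{h}_n, \mathfrak{h}'_n)$ satisfies the three bullets in the definition of a Manin triple of hom-Lie algebras, using the componentwise structure on $\mathfrak{u}^n$ already established in Lemma \ref{lem:polyubles-hom}. There are four things to verify: (i) both $\mathfrak{h}_n$ and $\mathfrak{h}'_n$ are hom-Lie subalgebras; (ii) $\mathfrak{u}^n = \mathfrak{h}_n \oplus \mathfrak{h}'_n$ as vector spaces; (iii) each of $\mathfrak{h}_n$ and $\mathfrak{h}'_n$ is isotropic and of half dimension, hence Lagrangian; and (iv) $\phi_{\mathfrak{u}^n}$ splits as $\phi_{\mathfrak{h}_n}\oplus\phi_{\mathfrak{h}'_n}$.

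For (i) and (iv), the key observation is that $\mathfrak{u}_\Delta$ is a hom-Lie subalgebra of $(\mathfrak{u}\oplus\mathfrak{u},[\cdot,\cdot]_{\mathfrak{u}\oplus\mathfrak{u}},\phi_\mathfrak{u}\oplus\phi_\mathfrak{u})$, since $[(a,a),(b,b)] = ([a,b]_\mathfrak{u},[a,b]_\mathfrak{u})$ and $(\phi_\mathfrak{u}\oplus\phi_\mathfrak{u})(a,a) = (\phi_\mathfrak{u}(a),\phi_\mathfrak{u}(a))$. Combined with the fact that $\mathfrak{h}$ and $\mathfrak{h}'$ are hom-Lie subalgebras of $\mathfrak{u}$ by assumption, and that $[\cdot,\cdot]_{\mathfrak{u}^n}$ and $\phi_{\mathfrak{u}^n}$ act blockwise by \eqref{eq:direct-product-n} and \eqref{eq:polyuble-hom-n}, this immediately yields that $\mathfrak{h}_n$ and $\mathfrak{h}'_n$ are hom-Lie subalgebras and that $\phi_{\mathfrak{u}^n}$ restricts to both; part (iv) then follows once (ii) is in place.

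For (ii), a direct dimension count gives $\dim\mathfrak{h}_n + \dim\mathfrak{h}'_n = n\dim\mathfrak{u} = \dim\mathfrak{u}^n$ in either parity (using $\dim\mathfrak{h}=\dim\mathfrak{h}'=\tfrac12\dim\mathfrak{u}$), so it suffices to check $\mathfrak{h}_n\cap\mathfrak{h}'_n=0$. Any common element is forced, by the one-slot interlocking of the diagonal blocks coming from \eqref{eq:split-odd}--\eqref{eq:split-even}, to have all $n$ components equal to a single vector; reading off an endpoint slot where one side of the splitting carries $\mathfrak{h}$ and the other carries $\mathfrak{h}'$ places that common vector in $\mathfrak{h}\cap\mathfrak{h}'=0$. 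For (iii), use the alternating signature of $\langle\cdot,\cdot\rangle_{\mathfrak{u}^n}$ from \eqref{eq:polyuble-bilinear-n}: any $\mathfrak{u}_\Delta$-block occupying consecutive positions $k,k+1$ contributes $(-1)^{k+1}\langle a,b\rangle_\mathfrak{u} + (-1)^{k+2}\langle a,b\rangle_\mathfrak{u}=0$, and each endpoint block $\mathfrak{h}$ or $\mathfrak{h}'$ contributes zero by its own isotropy in $\mathfrak{u}$; since the dimensions equal half of $\dim\mathfrak{u}^n$, both subalgebras are Lagrangian.

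The main obstacle is little more than careful index bookkeeping across the four cases (even vs.\ odd $n$, and $\mathfrak{h}_n$ vs.\ $\mathfrak{h}'_n$), tracking which slots are occupied by $\mathfrak{u}_\Delta$ versus $\mathfrak{h}$ or $\mathfrak{h}'$. Because the sign cancellation on each diagonal block is independent of where the block sits, and because isotropy of $\mathfrak{h}$ and $\mathfrak{h}'$ is supplied by the Lagrangian splitting hypothesis on $\mathfrak{u}$, the argument goes through uniformly once the correct index ranges are fixed.
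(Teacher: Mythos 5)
Your proposal is correct and follows essentially the same route as the paper's proof: a direct componentwise verification of the subalgebra, direct-sum, isotropy, and $\phi$-compatibility conditions, with Lemma \ref{lem:polyubles-hom} supplying the quadratic hom-Lie algebra structure on $\mathfrak{u}^n$. If anything, you make explicit two points the paper only asserts --- the trivial-intersection argument behind $\mathfrak{u}^n=\mathfrak{h}_n\oplus\mathfrak{h}'_n$ and the sign cancellation on consecutive diagonal slots that yields isotropy --- and you treat both parities uniformly where the paper writes out the odd case and says ``similarly'' for the even case.
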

\begin{proof}
When $n$ is odd. For all $a_1,\cdots,a_{\frac{n-1}{2}}, b_1,\cdots,b_{\frac{n-1}{2}} \in \mathfrak{u},x\in \mathfrak{h}, x'\in \mathfrak{h}'$, define two  algebra homomorphisms $\phi_{\mathfrak{h}_{n}}:\mathfrak{h}_{n}\rightarrow \mathfrak{h}_{n}$ and $\phi_{\mathfrak{h}_{n}'}:\mathfrak{h}_{n}'\rightarrow \mathfrak{h}_{n}'$ by
\begin{align*}
\phi_{\mathfrak{h}_{n}}((a_1,a_1),\cdots,(a_{\frac{n-1}{2}},a_{\frac{n-1}{2}}),x)
=((\phi_{\mathfrak{u}}(a_1),\phi_{\mathfrak{u}}(a_1)),
\cdots,(\phi_{\mathfrak{u}}(a_{\frac{n-1}{2}}),\phi_{\mathfrak{u}}(a_{\frac{n-1}{2}})),\phi_{\mathfrak{h}}(x)),
\end{align*}
and
\begin{align*}
\phi_{\mathfrak{h}_{n}'}(x',(b_1,b_1),\cdots,(b_{\frac{n-1}{2}},b_{\frac{n-1}{2}}))
=(\phi_{\mathfrak{h}}'(x'),(\phi_{\mathfrak{u}}(b_1),\phi_{\mathfrak{u}}(b_1)),
\cdots,(\phi_{\mathfrak{u}}(b_{\frac{n-1}{2}}),\phi_{\mathfrak{u}}(b_{\frac{n-1}{2}}))).
\end{align*}
Then we have $\mathrm{dim}(\mathfrak{h}_{n})=\mathrm{dim}(\mathfrak{h}_{n}')$,
$\mathfrak{u}^{n}= \mathfrak{h}_{n}\oplus \mathfrak{h}_{n}'$ and $\phi_{\mathfrak{u}^n}=\phi_{\mathfrak{h}_{n}}\oplus \phi_{\mathfrak{h}_{n}'}$.

For all $c_1,\cdots,c_{\frac{n-1}{2}}$, $d_1,\cdots,d_{\frac{n-1}{2}} \in \mathfrak{u}$, $y\in \mathfrak{h}, y'\in \mathfrak{h}'$, define two bilinear maps $[\cdot,\cdot]_{\mathfrak{h}_n}:\wedge^2\mathfrak{h}_n\rightarrow \mathfrak{h}_n$ and
$[\cdot,\cdot]_{\mathfrak{h}_n'}:\wedge^2\mathfrak{h}_n'\rightarrow \mathfrak{h}_n'$  by
\begin{align*}
&[((a_1,a_1),\cdots,(a_{\frac{n-1}{2}},a_{\frac{n-1}{2}}),x),((c_1,c_1),\cdots,(c_{\frac{n-1}{2}},c_{\frac{n-1}{2}}),y)]_{\mathfrak{h}_n}\\
=&(([a_1,c_1]_{\mathfrak{u}},[a_1,c_1]_{\mathfrak{u}}),\cdots,([a_{\frac{n-1}{2}},c_{\frac{n-1}{2}}]_{\mathfrak{u}},
[a_{\frac{n-1}{2}},c_{\frac{n-1}{2}}]_{\mathfrak{u}}),[x,y]_{\mathfrak{h}}),
\end{align*}
and
\begin{align*}
&[(x',(b_1,b_1),\cdots,(b_{\frac{n-1}{2}},b_{\frac{n-1}{2}})),(y',(d_1,d_1),\cdots,(d_{\frac{n-1}{2}},d_{\frac{n-1}{2}}))]_{\mathfrak{h}_n}\\
=&([x',y']_{\mathfrak{h}}([b_1,d_1]_{\mathfrak{u}},[b_1,d_1]_{\mathfrak{u}}),\cdots,([b_{\frac{n-1}{2}},d_{\frac{n-1}{2}}]_{\mathfrak{u}},
[b_{\frac{n-1}{2}},d_{\frac{n-1}{2}}]_{\mathfrak{u}})).
\end{align*}
Then $\phi_{\mathfrak{u}^n}(\mathfrak{h}_n)\subset \mathfrak{h}_n$ and $\phi_{\mathfrak{u}^n}(\mathfrak{h}_n')\subset \mathfrak{h}_n'$. Furthermore, we have $$[((a_1,a_1),\cdots,(a_{\frac{n-1}{2}},a_{\frac{n-1}{2}}),x),
((c_1,c_1),\cdots,(c_{\frac{n-1}{2}},c_{\frac{n-1}{2}}),y)]_{\mathfrak{u}^n}\in \mathfrak{h}_n,$$
and
$$[(x',(b_1,b_1),\cdots,(b_{\frac{n-1}{2}},b_{\frac{n-1}{2}})),
(y',(c_1,c_1),\cdots,(c_{\frac{n-1}{2}},c_{\frac{n-1}{2}}))]_{\mathfrak{u}^n}\in \mathfrak{h}_n'.$$
Then $(\mathfrak{h}_n,[\cdot,\cdot]_{\mathfrak{h}_n},\phi_{\mathfrak{h}_n})$ and $(\mathfrak{h}_n',[\cdot,\cdot]_{\mathfrak{h}_n'},\phi_{\mathfrak{h}_n'})$ are hom-Lie subalgebras of $(\mathfrak{u}^{n},[\cdot,\cdot]_{\mathfrak{u}^{n}},\phi_{\mathfrak{u}^{n}})$.

By Lemma \ref{lem:polyubles-hom}, $(\mathfrak{u}^{n},[\cdot,\cdot]_{\mathfrak{u}^{n}},\phi_{\mathfrak{u}^{n}},\langle\cdot,\cdot\rangle_{\mathfrak{u}^n})$ are quadratic hom-Lie algebras. Since  $\mathfrak{u}=\mathfrak{h}+\mathfrak{h}'$ is a Lagrangian splitting, we have
$$
\langle((a_1,a_1),\cdots,(a_{\frac{n-1}{2}},a_{\frac{n-1}{2}}),x),
((c_1,c_1),\cdots,(c_{\frac{n-1}{2}},c_{\frac{n-1}{2}}),y)\rangle_{\mathfrak{u}^n}=0,
$$
and
$$
\langle (x',(b_1,b_1),\cdots,(b_{\frac{n-1}{2}},b_{\frac{n-1}{2}})),
(y',(c_1,c_1),\cdots,(c_{\frac{n-1}{2}},c_{\frac{n-1}{2}}))\rangle_{\mathfrak{u}^n}=0.
$$
Which implies that $(\mathfrak{h}_n,[\cdot,\cdot]_{\mathfrak{h}_n},\phi_{\mathfrak{h}_n})$ and $(\mathfrak{h}_n',[\cdot,\cdot]_{\mathfrak{h}_n'},\phi_{\mathfrak{h}_n'})$ are isotropic with respect to $\langle\cdot,\cdot\rangle_{\mathfrak{u}^n}$.

Therefore,
$(\mathfrak{u}^n,\mathfrak{h}_{n},\mathfrak{h}_{n}')$ are manin triples of hom-Lie algebras.
Similarly, when $n$ is even, $(\mathfrak{u}^n,\mathfrak{h}_{n},\mathfrak{h}_{n}')$ are manin triples of hom-Lie algebras. Hence the conclusion holds.
\end{proof}
\subsection{Graph of polyubles and proof of Theorem A}\label{graph}
In graph theory, an undirected graph $G$ is a pair $(V,E)$ where $V$ is the set of vertices and  $E$ is the set of edges connecting adjacency vertices. An isomorphism of graphs $G=(V,E)$ and $H=(V',E')$ is an edge-preserving bijection $f:V\rightarrow V'$ such that any two vertices $u$ and $v$ of $G$ are adjacent in $G$ if and only if $f(u)$ and $f(v)$ are adjacent in $H$. If an isomorphism exists between two graphs, then the graphs are called isomorphic and denoted as $G\simeq H$.

Represent $\mathfrak{u}$, $\mathfrak{h}$  and $\mathfrak{h}'$ with vertices $\circ$, $\triangleleft$  and $\triangleright$, respectively. Then every $\circ$ can be represented by $\triangleleft\oplus \triangleright$. For the bilinear form $\langle\cdot,\cdot\rangle_{\mathfrak{u}^n}$ defined as \eqref{eq:polyuble-bilinear-n}, if the sign in front of $\langle a_j,a_j'\rangle_{\mathfrak{u}}$ is negative,  then we use the symbol $\bullet$ or $\blacktriangleleft$ or $\blacktriangleright$, if the sign  is  positive, then we use the symbol $\circ$ or $\triangleleft$ or $\triangleright$. Represent $\mathfrak{u}_{\Delta}$ with the edge $\xymatrix@C=0.5cm{
  \circ \ar@{-}[r] & \circ}$.  The $i$-th $\mathfrak{u}$ is denoted by the symbol  $\circ_i$.
Hence $\mathcal{M}_{\phi}^{n}$ can  be expressed as a chain of $\circ_i,~i=1,\cdots,n$, i.e.
if $n$ is odd, then we can represent the $n$-uble of hom-Lie algebras as:
 $$
\xymatrix{
  \circ_1 \ar@{-}[r] & \bullet_2  & \circ_3 \ar@{-}[r] &\bullet_4 &\cdots & \circ_{n-2} \ar@{-}[r] & \bullet_{n-1}  & \triangleleft \quad\oplus}
 $$
 $$
 \xymatrix{
  \triangleright & \bullet_1 \ar@{-}[r] & \circ_2  & \bullet_3 \ar@{-}[r] & \circ_4 &\cdots & \bullet_{n-2} \ar@{-}[r] & \circ_{n-1}  }
 $$
If $n$ is even, then we can represent the $n$-uble of hom-Lie algebras as:
 $$
\xymatrix{
  \circ_1 \ar@{-}[r] & \bullet_2  & \circ_3 \ar@{-}[r] &\bullet_4 &\cdots & \circ_{n-1} \ar@{-}[r] & \bullet_{n}  & \quad\oplus}
 $$
 $$
 \xymatrix{
  \triangleright & \bullet_1 \ar@{-}[r] & \circ_2  & \bullet_3 \ar@{-}[r] & \circ_4 &\cdots & \bullet_{n-3} \ar@{-}[r] & \circ_{n-2}  & \blacktriangleleft  }
 $$

An isomorphism of two Manin triples of hom-Lie algebras $(\mathfrak{u}_1,\mathfrak{h}_1,\mathfrak{h}_{1}')$ and $(\mathfrak{u}_2,\mathfrak{h}_2,\mathfrak{h}_{2}')$ is a one-to-one linear map $f:\mathfrak{u}_1\rightarrow \mathfrak{u}_2$ such that
\begin{align}
f[x,y]_{\mathfrak{u}_1}&=[f(x),f(y)]_{\mathfrak{u}_2},\label{eq:Lie-I}\\
\langle x,y\rangle_{\mathfrak{u}_1}&=\langle f(x),f(y)\rangle_{\mathfrak{u}_2},\label{eq:B-I}\\
f\circ \phi_{\mathfrak{u}_1}&=\phi_{\mathfrak{u}_2}\circ f,\quad\forall~x,y\in \mathfrak{u}_1.\label{eq:phi-I}
\end{align}
By the graph of $n$-uble of hom-Lie algebras, any two adjacent vertices are different colors. If two $n$-uble of hom-Lie algebras are  isomorphic, then  there exists a unique bijection between two corresponding graphs.
Using the isomorphism of graphs, we obtain the following conclusion.

\begin{theorem}\label{tm:mn-uble}
For any $m,n\in \mathbb{N}^*$, there exists a unique isomorphism as a Manin triple of hom-Lie algebras $i^{mn}_n:\mathcal{M}_{\phi}^{mn}\rightarrow (\mathcal{M}_{\phi}^m)^n$,
i.e. $\mathcal{M}_{\phi}^{mn}$ are isomorphic to $(\mathcal{M}_{\phi}^m)^n$ and  $(\mathcal{M}_{\phi}^n)^m$.
\end{theorem}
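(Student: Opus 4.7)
The plan is to exploit the colored graph encoding of polyubles introduced just before the theorem: each $N$-uble $\mathcal{M}_\phi^N$ is represented by a chain on $N$ vertices, where the color ($\circ$ vs $\bullet$) encodes the sign in \eqref{eq:polyuble-bilinear-n}, the edges record which pairs of factors are glued by $\mathfrak{u}_\Delta$ in the Lagrangian splittings \eqref{eq:split-odd}--\eqref{eq:split-even}, and the residual endpoint markings $\triangleleft/\triangleright$ (or $\blacktriangleleft/\blacktriangleright$) identify the unpaired occurrences of $\mathfrak{h}$ and $\mathfrak{h}'$. The essential observation is that this colored, edge-labeled, endpoint-marked chain determines the Manin triple of hom-Lie algebras up to canonical isomorphism: any bijection between the two vertex sets that respects colors, edges and endpoint markings lifts to a unique linear map of the ambient spaces which preserves $[\cdot,\cdot]$, $\phi$, $\langle\cdot,\cdot\rangle$ and the Lagrangian splitting.

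First I would verify that the graph associated to $\mathcal{M}_\phi^{mn}$ is a path on $mn$ vertices with colors strictly alternating from position $1$ to position $mn$, whose $mn-1$ consecutive edges alternate between ``$\mathfrak{h}$-type'' (those contributing to $\mathfrak{h}_{mn}$) and ``$\mathfrak{h}'$-type'' (those contributing to $\mathfrak{h}'_{mn}$), together with endpoint markings determined by the parity of $mn$. Next I would compute the graph of $(\mathcal{M}_\phi^m)^n$ by iterating the construction: form the $m$-uble graph first, obtaining a path on $m$ vertices, then concatenate $n$ copies of this path under the outer $n$-uble, which introduces new diagonal edges between the endpoint of each copy and the starting point of the next, and which flips \emph{all} colors in the copies at even positions (this global sign flip is what \eqref{eq:polyuble-bilinear-n}, applied to $(\mathfrak{u}^m)^n$, produces on each inner factor). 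A case analysis on the parities of $m$ and $n$ then shows that the resulting colored path on $mn$ vertices coincides with the one obtained directly from $\mathcal{M}_\phi^{mn}$.

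Given this identification, I would define $i_n^{mn}$ as the linear relabeling of $\mathfrak{u}^{mn}$ that identifies the $((k-1)m+j)$-th tensor factor on the $\mathcal{M}_\phi^{mn}$ side with the $(k,j)$-th factor on the $(\mathcal{M}_\phi^m)^n$ side, for $1\le k\le n$ and $1\le j\le m$. Compatibility with the brackets \eqref{eq:Lie-I} and with the hom-map \eqref{eq:phi-I} is automatic since both $[\cdot,\cdot]_{\mathfrak{u}^N}$ and $\phi_{\mathfrak{u}^N}$ act componentwise on the factors; compatibility with the bilinear forms \eqref{eq:B-I} is exactly the statement that the colors on both sides agree vertex-by-vertex, which was verified above; and compatibility with the Lagrangian splittings is the statement that the edges and endpoint markings agree. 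Uniqueness then follows from the rigidity of the labeled graph: a path with strictly alternating vertex colors and distinct endpoint markings admits no nontrivial color- and endpoint-preserving automorphism, so the vertex bijection, and hence the linear isomorphism it induces, is forced.

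The principal technical obstacle will be the case analysis on the parities of $m$ and $n$: the shapes of $\mathfrak{h}_N$ and $\mathfrak{h}'_N$ in \eqref{eq:split-odd}--\eqref{eq:split-even} differ according to whether $N$ is even or odd, so verifying that the iterated construction $(\mathcal{M}_\phi^m)^n$ yields the \emph{same} colored path as $\mathcal{M}_\phi^{mn}$ requires checking the sign flip under the outer $n$-uble, the alternation of $\mathfrak{h}/\mathfrak{h}'$-edges across the joins between consecutive copies, and the matching of $\mathfrak{h}/\mathfrak{h}'$-markings at the two extreme endpoints separately in each of the four parity combinations. Once that bookkeeping is in place, the remainder of the argument is essentially formal.
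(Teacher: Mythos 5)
Your overall strategy --- encode each polyuble as a colored, edge-marked graph, match the two graphs by a parity case analysis, and deduce uniqueness from the strict alternation of colors --- is exactly the paper's. But the explicit map you propose is wrong, and the error traces back to a misreading of the outer $n$-uble structure. You describe the graph of $(\mathcal{M}_\phi^m)^n$ as $n$ concatenated copies of the $m$-uble path joined by ``new diagonal edges between the endpoint of each copy and the starting point of the next.'' In fact the outer splitting \eqref{mn-split-even-even1}--\eqref{mn-split-even-even2} uses $\mathfrak{u}_\Delta^m$, the \emph{full} diagonal of $\mathfrak{u}^m\oplus\mathfrak{u}^m$ as in \eqref{eq:diag-m}, which in the graph contributes $m$ parallel ``vertical'' edges pairing the $j$-th position of one block with the $j$-th position of the next block for every $j$, not a single edge between adjacent endpoints. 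Consequently the row-major identification $((k-1)m+j)\leftrightarrow(k,j)$ that you write down fails on both counts you need: (i) the $\mathfrak{u}_\Delta$-edges of $\mathcal{M}_\phi^{mn}$ join consecutive global indices, which under row-major order become horizontal edges inside a row, and these cannot land on the vertical $\mathfrak{u}_\Delta^m$-edges of $\tilde{\mathfrak{h}}_{mn}$; (ii) the sign at $(k,j)$ in \eqref{eq:mn-bilinear} is $(-1)^{k+j}$ while the sign at global index $(k-1)m+j$ in \eqref{eq:polyuble-bilinear-n} is $(-1)^{(k-1)m+j+1}$, and these agree only when $(k-1)(m-1)$ is even --- so for $m$ even and $k$ even the bilinear forms disagree. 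Already for $m=n=2$: the signs on $\mathcal{M}_\phi^4$ are $(+,-,+,-)$, the signs on $(\mathcal{M}_\phi^2)^2$ at positions $(1,1),(1,2),(2,1),(2,2)$ are $(+,-,-,+)$, and row-major order sends index $3$ (sign $+$) to $(2,1)$ (sign $-$).

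The fix is the paper's ``snake'' (boustrophedon) ordering: fill the $n\times m$ array column by column, going down the odd-numbered columns and up the even-numbered ones, so that consecutive global indices are joined alternately by a vertical step (matching an outer $\mathfrak{u}_\Delta^m$-edge) and a horizontal step (matching an inner $\mathfrak{u}_\Delta$-edge), and so that the sign $(-1)^{k+j}$ is reproduced. With that map in place, the rest of your argument --- componentwise compatibility with $[\cdot,\cdot]$ and $\phi$, the four-way parity bookkeeping for the endpoint markings, and uniqueness from the absence of nontrivial color-preserving automorphisms of the alternating chain --- goes through as in the paper. As written, however, the proposal does not prove the theorem, because the map it constructs is not an isomorphism of Manin triples.
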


\begin{proof}
For any $m,n\in \mathbb{N}^*$,
$(\mathcal{M}_{\phi}^m)^n$ are represented by a column matrix $(\overbrace{\mathcal{M}_{\phi}^m,\cdots,\mathcal{M}_{\phi}^m}^n)^{T}$, then $(\mathcal{M}_{\phi}^m)^n$ can be seen as a matrix of $n$ rows and $m$ columns of $\mathfrak{u}$.
For all $a_{1},\cdots,a_{mn} \in \mathfrak{u}$, when $m$ is odd, define a linear map $i^{mn}_n: \mathcal{M}_{\phi}^{mn} \rightarrow (\mathcal{M}_{\phi}^m)^n$ by:
\begin{eqnarray*}
&&i^{mn}_n(a_1,\cdots,a_n,a_{n+1},\cdots,a_{2n},\cdots,a_{(m-1)n+1}\cdots,a_{mn})\\
&=&\left(
   \begin{array}{ccccccc}
     a_1 & a_{2n} & a_{2n+1} & a_{4n} & \cdots & a_{(m-1)n+1}\\
     a_2 & a_{2n-1} & a_{2n+2} & a_{4n-1} & \cdots & a_{(m-1)n+2} \\
     \vdots & \vdots & \vdots & \vdots   &\ddots &\vdots\\
    a_{n} & a_{n+1} & a_{3n} & a_{3n+1} & \cdots & a_{mn}\\
   \end{array}
 \right).
\end{eqnarray*}
When $m$ is even, define a linear map $i^{mn}_n: \mathcal{M}_{\phi}^{mn} \rightarrow (\mathcal{M}_{\phi}^m)^n$ by:
\begin{eqnarray*}
&&i^{mn}_n(a_1,\cdots,a_n,a_{n+1},\cdots,a_{2n},\cdots,a_{(m-1)n+1}\cdots,a_{mn})\\
&=&\left(
   \begin{array}{ccccccc}
     a_1 & a_{2n} & a_{2n+1} & a_{4n} & \cdots & a_{mn}\\
     a_2 & a_{2n-1} & a_{2n+2} & a_{4n-1} & \cdots & a_{mn-1} \\
     \vdots & \vdots & \vdots & \vdots   &\ddots &\vdots\\
    a_{n} & a_{n+1} & a_{3n} & a_{3n+1} & \cdots & a_{(m-1)n+1}\\
   \end{array}
 \right).
\end{eqnarray*}

According to definition of $i^{mn}_n$, the last column of $(\mathcal{M}_{\phi}^m)^n$ are determined by the parity of $m$. And the map $i^{mn}_n$ change the order of the chains in the shape of a ``snake'', i,e.
$$
\xymatrix{
  \circ_1 \ar[d]_{}  &  \ar[r]^{}\circ_{2n} & \circ_{2n+1} \ar[d]^{}& \circ_{4n}  &~\\
  \circ_2 \ar[d]_{}  & \circ_{2n-1} \ar[u]_{}  & \circ_{2n+2} \ar[d]^{}& \circ_{4n-1} \ar[u]^{} &~\\
  \circ_3 \ar@{-->}[d]   & \circ_{2n-2} \ar[u]_{}  & \circ_{2n+3} \ar@{-->}[d]^{}& \circ_{4n-2} \ar[u]^{}&\cdots \\
  \circ_{n-1} \ar[d]_{}  & \circ_{n+2} \ar@{-->}[u]_{}  & \circ_{3n-1} \ar[d]^{}& \circ_{3n+2} \ar@{-->}[u]^{} &~\\
  \circ_n \ar[r]_{} & \circ_{n+1} \ar[u]_{}  & \circ_{3n}\ar[r]_{}  & \circ_{3n+1}\ar[u]^{}  &~}
$$
Hence $i^{mn}_n$ is a edge-preserving bijection.

Define $\mathfrak{u}_{\Delta}^m$ by
\begin{equation}\label{eq:diag-m}
 \mathfrak{u}_{\Delta}^m:=\{(a_{1},\cdots,a_{m},a_{m+1},\cdots,a_{2m})~|~a_{i}=a_{m+i}\in \mathfrak{u},~\forall~1\leq i\leq m\}.
\end{equation}
Then we can express $\mathfrak{u}_{\Delta}^m$ with the following graph:
$$
 \xymatrix{
  \circ_1 \ar@{-}[d]  &  \circ_{2} & \circ_{3} \ar@{-}[d]   & \cdots&\circ_{m}  &\\
  \circ_{m+1}           & \circ_{m+2} \ar@{-}[u]  & \circ_{m+3} &   \cdots&\circ_{2m} \ar@{-}[u]&
  }
$$

{\bf Case 1:} When $m,n$ are both even. On the one hand,
$$\mathcal{M}_{\phi}^{mn}=((\mathfrak{u}^{mn},[\cdot,\cdot]_{\mathfrak{u}^{mn}},\phi_{\mathfrak{u}^{mn}},\langle\cdot,\cdot\rangle_{\mathfrak{u}^{mn}}),\mathfrak{h}_{mn},\mathfrak{h}_{mn}') $$
are given by
\begin{equation}\label{eq:bilinear-mn}
  \langle(a_1,\cdots,a_{mn}),(a_1',\cdots,a_{mn}')\rangle_{\mathfrak{u}^{mn}}=\sum_{j=1}^{mn}(-1)^{j+1}\langle a_j,a_j'\rangle_{\mathfrak{u}},
\end{equation}
\begin{equation}\label{eq:phi-mn}
 \phi_{\mathfrak{u}^{mn}}(a_1,\cdots,a_{mn})=(\phi_{\mathfrak{u}}(a_1),\cdots,\phi_{\mathfrak{u}}(a_{mn})), \quad \forall~ a_{j},a_{j}'\in \mathfrak{u},
\end{equation}
\begin{equation}\label{eq:spacesplit-mn}
  \mathfrak{h}_{mn} = \underbrace{\mathfrak{u}_{\Delta}\oplus\cdots\oplus\mathfrak{u}_{\Delta}}_{\frac{mn}{2}},\quad \mathfrak{h}_{mn}'=\mathfrak{h}'\oplus\underbrace{\mathfrak{u}_{\Delta}\oplus\cdots\oplus\mathfrak{u}_{\Delta}}_{\frac{mn}{2}-1}\oplus\mathfrak{h}.
\end{equation}
On the other hand, $$(\mathcal{M}_{\phi}^{m})^n=((\mathfrak{u}^{mn},[\cdot,\cdot]_{\mathfrak{u}^{mn}},\phi_{(\mathfrak{u}^{m})^n},\langle\cdot,\cdot\rangle_{(\mathfrak{u}^{m})^n}),\tilde{\mathfrak{h}}_{mn},\tilde{\mathfrak{h}}_{mn}') $$ are given by
\begin{equation}\label{eq:mn-bilinear}
\langle(a_1,\cdots,a_{mn}),(a_1',\cdots,a_{mn}')\rangle_{(\mathfrak{u}^{m})^n}=\sum_{k=0}^{n}\sum_{j=km+1}^{(k+1)m}(-1)^{k+j-1}\langle a_j,a_j'\rangle_{\mathfrak{u}},
\end{equation}
\begin{equation}\label{eq:mn-hom}
\phi_{(\mathfrak{u}^{m})^n}(a_1,\cdots,a_{mn})=(\underbrace{(\phi_{\mathfrak{u}}(a_1),\cdots,\phi_{\mathfrak{u}}(a_m)),\cdots,(\phi_{\mathfrak{u}}(a_1),\cdots,\phi_{\mathfrak{u}}(a_m))}_{n}), \quad \forall~ a_{j},a_{j}'\in \mathfrak{u},
\end{equation}
\begin{equation}\label{mn-split-even-even1}
\tilde{\mathfrak{h}}_{mn}=\underbrace{\mathfrak{u}_{\Delta}^m \oplus \cdots\oplus\mathfrak{u}_{\Delta}^m}_{\frac{n}{2}},
\end{equation}
\begin{equation}\label{mn-split-even-even2}
\tilde{\mathfrak{h}}_{mn}' = \mathfrak{h}'\oplus\underbrace{\mathfrak{u}_{\Delta}\oplus\cdots\oplus\mathfrak{u}_{\Delta}}_{\frac{m}{2}-1}\oplus\mathfrak{h}\oplus\underbrace{\mathfrak{u}_{\Delta}^m \oplus \cdots\oplus\mathfrak{u}_{\Delta}^m}_{\frac{n}{2}-1}\oplus
\underbrace{\mathfrak{u}_{\Delta}\oplus\cdots\oplus\mathfrak{u}_{\Delta}}_{\frac{m}{2}}.
\end{equation}

By the graph of polyubles of hom-Lie algebras, we have
$$
i^{mn}_n(\xymatrix{
  \circ_1 \ar@{-}[r] & \bullet_2  & \circ_3 \ar@{-}[r] & \bullet_4 &\cdots & \circ_{mn-1} \ar@{-}[r] & \bullet_{mn}  \quad\oplus}
 $$
 $$
 \xymatrix{
  \triangleleft &\bullet_1 \ar@{-}[r] & \circ_2  & \bullet_3 \ar@{-}[r] & \circ_4 &\cdots & \bullet_{mn-3} \ar@{-}[r] & \circ_{mn-2}  &\blacktriangleright})=
 $$
 $$
 \xymatrix{
  \circ_1 \ar@{-}[d]  &  \bullet_{2n} & \circ_{2n+1} \ar@{-}[d]   & \cdots&\bullet_{mn}  &\\
  \bullet_2           & \circ_{2n-1} \ar@{-}[u]  & \bullet_{2n+2} &   \cdots&\circ_{mn-1} \ar@{-}[u]& \\
  \vdots              & \vdots  & \vdots& \ddots & \vdots& \oplus\\
  \circ_{n-1} \ar@{-}[d]  & \bullet_{n+2}  & \circ_{3n-1} \ar@{-}[d]& \cdots&\bullet_{(m-1)n+2}\ar@{-}[d]  &\\
  \bullet_n           & \circ_{n+1} \ar@{-}[u]  & \bullet_{3n}    & \cdots&\circ_{(m-1)n+1} & }
 $$
 \\
 \\
$$
 \xymatrix{
   \triangleleft   &  \bullet_{2n-1}\ar@{-}[r] & \circ_{2n}    & \cdots&\blacktriangleright  &\\
  \bullet_1 \ar@{-}[d]  &  \circ_{2n-2} & \bullet_{2n+1} \ar@{-}[d]   & \cdots&\circ_{mn-2}  &\\
  \circ_2           & \bullet_{2n-3} \ar@{-}[u]  & \circ_{2n+2} &   \cdots&\bullet_{mn-3} \ar@{-}[u]& \\
  \vdots              & \vdots  & \vdots& \ddots & \vdots&\\
  \bullet_{n-3} \ar@{-}[d]  & \circ_{n+2}  & \bullet_{3n-3} \ar@{-}[d]& \cdots&\circ_{(m-1)n+2}\ar@{-}[d]  &\\
  \circ_{n-2}           & \bullet_{n+1} \ar@{-}[u]  & \circ_{3n-2}    & \cdots&\bullet_{(m-1)n+1} & \\
   \bullet_{n-1} \ar@{-}[r]          & \circ_{n}  & \bullet_{3n-1}    & \cdots& \circ_{(m-1)n}&}
$$
Therefore, $i^{mn}_n(\mathfrak{h}_{mn}\oplus \mathfrak{h}_{mn}')=\tilde{\mathfrak{h}}_{mn}\oplus \tilde{\mathfrak{h}}_{mn}'$.

{\bf Case 2:} When $m,n$ are both odd. On the one hand,
$$\mathcal{M}_{\phi}^{mn}=((\mathfrak{u}^{mn},[\cdot,\cdot]_{\mathfrak{u}^{mn}},\phi_{\mathfrak{u}^{mn}},\langle\cdot,\cdot\rangle_{\mathfrak{u}^{mn}}),\mathfrak{h}_{mn},\mathfrak{h}_{mn}') $$
are given by
\begin{equation}
 \mathfrak{h}_{mn} = \mathfrak{h}\oplus\underbrace{\mathfrak{u}_{\Delta}\oplus\cdots\oplus\mathfrak{u}_{\Delta}}_{\frac{mn-1}{2}},\quad \mathfrak{h}_{mn}'=\underbrace{\mathfrak{u}_{\Delta}\oplus\cdots\oplus\mathfrak{u}_{\Delta}}_{\frac{mn-1}{2}}\oplus \mathfrak{h}',
\end{equation}
 $\phi_{\mathfrak{u}^{mn}}$ and $\langle\cdot,\cdot\rangle_{\mathfrak{u}^{mn}}$ keep the same with the first case.
On the other hand, $$(\mathcal{M}_{\phi}^{m})^n=((\mathfrak{u}^{mn},[\cdot,\cdot]_{\mathfrak{u}^{mn}},\phi_{(\mathfrak{u}^{m})^n},\langle\cdot,\cdot\rangle_{(\mathfrak{u}^{m})^n}),\tilde{\mathfrak{h}}_{mn},\tilde{\mathfrak{h}}_{mn}') $$
are given by
\begin{align}
\tilde{\mathfrak{h}}_{mn}&=\mathfrak{h}\oplus\underbrace{\mathfrak{u}_{\Delta}^m \oplus \cdots\oplus\mathfrak{u}_{\Delta}^m}_{\frac{n-1}{2}}\oplus\underbrace{\mathfrak{u}_{\Delta}\oplus\cdots\oplus\mathfrak{u}_{\Delta}}_{\frac{m-1}{2}},\label{mn-split-odd-odd1}\\
 \tilde{\mathfrak{h}}_{mn}' &= \underbrace{\mathfrak{u}_{\Delta}\oplus\cdots\oplus\mathfrak{u}_{\Delta}}_{\frac{m-1}{2}}\oplus\underbrace{\mathfrak{u}_{\Delta}^m \oplus \cdots\oplus\mathfrak{u}_{\Delta}^m}_{\frac{n-1}{2}}\oplus\mathfrak{h}',\label{mn-split-odd-odd2}
\end{align}
$\phi_{(\mathfrak{u}^{m})^n}$ and $\langle\cdot,\cdot\rangle_{(\mathfrak{u}^{m})^n}$ keep the same with the first case. Then we have

$$
i^{mn}_n(\xymatrix{
  \triangleleft& \bullet_1 \ar@{-}[r] & \circ_2  & \bullet_3 \ar@{-}[r] & \circ_4 &\cdots & \bullet_{mn-2} \ar@{-}[r] & \circ_{mn-1}  \quad\oplus}
 $$
 $$
 \xymatrix{
\circ_1 \ar@{-}[r] & \bullet_2  & \circ_3 \ar@{-}[r] & \bullet_4 &\cdots & \circ_{mn-2} \ar@{-}[r] & \bullet_{mn-1}  &\triangleright})=
 $$
 $$
 \xymatrix{
   \triangleleft   &  \bullet_{2n-1}\ar@{-}[r] & \circ_{2n}    & \cdots&\circ_{(m-1)n}  &\\
  \bullet_1 \ar@{-}[d]  &  \circ_{2n-2} & \bullet_{2n+1} \ar@{-}[d]   & \cdots&\bullet_{(m-1)n+1}  &\\
  \circ_2           & \bullet_{2n-3} \ar@{-}[u]  & \circ_{2n+2} &   \cdots&\circ_{(m-1)n+2} \ar@{-}[u]& \\
  \vdots              & \vdots  & \vdots& \ddots & \vdots& \oplus\\
  \bullet_{n-2} \ar@{-}[d]  & \circ_{n+1}  & \bullet_{3n-2} \ar@{-}[d]& \cdots&\bullet_{mn-2}\ar@{-}[d]  &\\
  \circ_{n-1}           & \bullet_{n} \ar@{-}[u]  & \circ_{3n-1}    & \cdots&\circ_{mn-1} & }
$$
\\
$$
 \xymatrix{
  \circ_1 \ar@{-}[d]  &  \bullet_{2n} & \circ_{2n+1} \ar@{-}[d]   & \cdots&\circ_{(m-1)n+1}  &\\
  \bullet_2           & \circ_{2n-1} \ar@{-}[u]  & \bullet_{2n+2} &   \cdots&\bullet_{(m-1)n+2} \ar@{-}[u]& \\
  \vdots              & \vdots  & \vdots& \ddots & \vdots&\\
  \circ_{n-2} \ar@{-}[d]  & \bullet_{n+3}  & \circ_{3n-2} \ar@{-}[d]& \cdots&\circ_{mn-2}\ar@{-}[d]  &\\
  \bullet_{n-1}           & \circ_{n+2} \ar@{-}[u]  & \bullet_{3n-1}    & \cdots&\bullet_{mn-1} & \\
   \circ_{n} \ar@{-}[r]          & \bullet_{n+1}  & \circ_{3n}    & \cdots&\triangleright &}
$$
Therefore, $i^{mn}_n(\mathfrak{h}_{mn}\oplus \mathfrak{h}_{mn}')=\tilde{\mathfrak{h}}_{mn}\oplus \tilde{\mathfrak{h}}_{mn}'$.

{\bf Case 3:} When $m$ is odd and $n$ is even,  $\mathcal{M}_{\phi}^{mn}$ keeps the same with the first case. Then $$(\mathcal{M}_{\phi}^{m})^n=((\mathfrak{u}^{mn},[\cdot,\cdot]_{\mathfrak{u}^{mn}},\phi_{(\mathfrak{u}^{m})^n},\langle\cdot,\cdot\rangle_{(\mathfrak{u}^{m})^n}),\tilde{\mathfrak{h}}_{mn},\tilde{\mathfrak{h}}_{mn}') $$
are given by
\begin{equation}\label{eq:mn-odd-even1}
\tilde{\mathfrak{h}}_{mn}=\underbrace{\mathfrak{u}_{\Delta}^m \oplus \cdots\oplus\mathfrak{u}_{\Delta}^m}_{\frac{n}{2}},
\end{equation}
\begin{equation}\label{eq:mn-odd-even2}
 \tilde{\mathfrak{h}}_{mn}' = \mathfrak{h}'\oplus\underbrace{\mathfrak{u}_{\Delta}\oplus\cdots
 \oplus\mathfrak{u}_{\Delta}}_{\frac{m-1}{2}}\oplus\underbrace{\mathfrak{u}_{\Delta}^m \oplus \cdots\oplus\mathfrak{u}_{\Delta}^m}_{\frac{n}{2}-1}\oplus
 \underbrace{\mathfrak{u}_{\Delta}\oplus\cdots\oplus\mathfrak{u}_{\Delta}}_{\frac{m-1}{2}}\oplus\mathfrak{h}.
\end{equation}
Then we have
$$
i^{mn}_n(\xymatrix{
  \circ_1 \ar@{-}[r] & \bullet_2  & \circ_3 \ar@{-}[r] & \bullet_4 &\cdots & \circ_{mn-1} \ar@{-}[r] & \bullet_{mn}  \quad\oplus}
 $$
 $$
 \xymatrix{
  \triangleleft &\bullet_1 \ar@{-}[r] & \circ_2  & \bullet_3 \ar@{-}[r] & \circ_4 &\cdots & \bullet_{mn-3} \ar@{-}[r] & \circ_{mn-2}  &\blacktriangleright})=
 $$
 $$
 \xymatrix{
  \circ_1 \ar@{-}[d]  &  \bullet_{2n} & \circ_{2n+1} \ar@{-}[d]   & \cdots&\circ_{(m-1)n+1}  &\\
  \bullet_2           & \circ_{2n-1} \ar@{-}[u]  & \bullet_{2n+2} &   \cdots&\bullet_{(m-1)n+2} \ar@{-}[u]& \\
  \vdots              & \vdots  & \vdots& \ddots & \vdots& \oplus\\
  \circ_{n-1}\ar@{-}[d]   & \bullet_{n+2}\ar@{-}[d]  & \circ_{3n-1}\ar@{-}[d] & \cdots&\circ_{mn-1}\ar@{-}[d]  &\\
  \bullet_n           & \circ_{n+1}   & \bullet_{3n}    & \cdots&\bullet_{mn} & }
 $$
\\
$$
 \xymatrix{
   \triangleleft   &  \bullet_{2n-1}\ar@{-}[r] & \circ_{2n}    & \cdots&\circ_{(m-1)n}  &\\
  \bullet_1 \ar@{-}[d]  &  \circ_{2n-2} & \bullet_{2n+1} \ar@{-}[d]   & \cdots&\bullet_{(m-1)n+1}  &\\
  \circ_2           & \bullet_{2n-3} \ar@{-}[u]  & \circ_{2n+2} &   \cdots&\circ_{(m-1)n+2} \ar@{-}[u]& \\
  \vdots              & \vdots  & \vdots& \ddots & \vdots&\\
  \bullet_{n-3} \ar@{-}[d]  & \circ_{n+2}  & \bullet_{3n-3} \ar@{-}[d]& \cdots&\bullet_{mn-3}\ar@{-}[d]  &\\
  \circ_{n-2}           & \bullet_{n+1} \ar@{-}[u]  & \circ_{3n-2}    & \cdots&\circ_{mn-2} & \\
   \bullet_{n-1} \ar@{-}[r]          & \circ_{n}  & \bullet_{3n-1}    & \cdots&\blacktriangleright &}
$$
Therefore, $i^{mn}_n(\mathfrak{h}_{mn}\oplus \mathfrak{h}_{mn}')=\tilde{\mathfrak{h}}_{mn}\oplus \tilde{\mathfrak{h}}_{mn}'$.

{\bf Case 4:} When $n$ is odd and $m$ is even, $\mathcal{M}_{\phi}^{mn}$ keeps the same with the first case. On the one hand, $$(\mathcal{M}_{\phi}^{m})^n=((\mathfrak{u}^{mn},[\cdot,\cdot]_{\mathfrak{u}^{mn}},\phi_{(\mathfrak{u}^{m})^n},\langle\cdot,\cdot\rangle_{(\mathfrak{u}^{m})^n}),\tilde{\mathfrak{h}}_{mn},\tilde{\mathfrak{h}}_{mn}') $$
are given by
\begin{align}
\tilde{\mathfrak{h}}_{mn}&=\underbrace{\mathfrak{u}_{\Delta}^m \oplus \cdots\oplus\mathfrak{u}_{\Delta}^m}_{\frac{n-1}{2}}\oplus
\underbrace{\mathfrak{u}_{\Delta}\oplus\cdots\oplus\mathfrak{u}_{\Delta}}_{\frac{m}{2}},\label{eq:mn-even-odd1}\\
\tilde{\mathfrak{h}}_{mn}' &= \mathfrak{h}'\oplus\underbrace{\mathfrak{u}_{\Delta}\oplus\cdots
\oplus\mathfrak{u}_{\Delta}}_{\frac{m}{2}-1}\oplus \mathfrak{h}\oplus\underbrace{\mathfrak{u}_{\Delta}^m \oplus \cdots\oplus\mathfrak{u}_{\Delta}^m}_{\frac{n-1}{2}},\label{eq:mn-even-odd2}
\end{align}
$\phi_{(\mathfrak{u}^{m})^n}$ and $\langle\cdot,\cdot\rangle_{(\mathfrak{u}^{m})^n}$ keep the same with the first case. On the other hand, we have
 $$
i^{mn}_n(\xymatrix{
  \circ_1 \ar@{-}[r] & \bullet_2  & \circ_3 \ar@{-}[r] & \bullet_4 &\cdots & \circ_{mn-1} \ar@{-}[r] & \bullet_{mn}  \quad\oplus}
 $$
 $$
 \xymatrix{
  \triangleleft &\bullet_1 \ar@{-}[r] & \circ_2  & \bullet_3 \ar@{-}[r] & \circ_4 &\cdots & \bullet_{mn-3} \ar@{-}[r] & \circ_{mn-2}  &\blacktriangleright})=
 $$
 $$
 \xymatrix{
  \circ_1 \ar@{-}[d]  &  \bullet_{2n} & \circ_{2n+1} \ar@{-}[d]   & \cdots&\bullet_{mn}  &\\
  \bullet_2           & \circ_{2n-1} \ar@{-}[u]  & \bullet_{2n+2} &   \cdots&\circ_{mn-1} \ar@{-}[u]& \\
  \vdots              & \vdots  & \vdots& \ddots & \vdots& \oplus\\
  \circ_{n-2} \ar@{-}[d]  &  \bullet_{n+3}\ar@{-}[d] & \circ_{3n-2} \ar@{-}[d]   & \cdots&\bullet_{(m-1)n+3}\ar@{-}[d]  &\\
  \bullet_{n-1}   & \circ_{n+2}  & \bullet_{3n-1} & \cdots&\circ_{(m-1)n+2}  &\\
  \circ_{n}\ar@{-}[r]           & \bullet_{n+1}   & \circ_{3n}    & \cdots&\bullet_{(m-1)n+1} & }
 $$
\\
\\
$$
 \xymatrix{
   \triangleleft   &  \bullet_{2n-1}\ar@{-}[r] & \circ_{2n}    & \cdots& \blacktriangleright  &\\
  \bullet_1 \ar@{-}[d]  &  \circ_{2n-2} & \bullet_{2n+1} \ar@{-}[d]   & \cdots&\circ_{mn-2}  &\\
  \circ_2           & \bullet_{2n-3} \ar@{-}[u]  & \circ_{2n+2} &   \cdots&\bullet_{mn-3} \ar@{-}[u]& \\
  \vdots              & \vdots  & \vdots& \ddots & \vdots&\\
  \bullet_{n-2}\ar@{-}[d]           & \circ_{n+1} \ar@{-}[d]  & \bullet_{3n-2}\ar@{-}[d]    & \cdots&\circ_{(m-1)n+1} & \\
   \circ_{n-1}        & \bullet_{n}  & \circ_{3n-1}    & \cdots& \bullet_{(m-1)n}\ar@{-}[u]&}
$$
Therefore, $i^{mn}_n(\mathfrak{h}_{mn}\oplus \mathfrak{h}_{mn}')=\tilde{\mathfrak{h}}_{mn}\oplus \tilde{\mathfrak{h}}_{mn}'$.

 Moreover, we have
$$\langle(a_1,\cdots,a_{mn}),(a_1',\cdots,a_{mn}')\rangle_{(\mathfrak{u}^{m})^n}=\sum_{k=0}^{n}\sum_{j=km+1}^{(k+1)m}(-1)^{k+j-1}\langle a_j,a_j'\rangle_{\mathfrak{u}}=\sum_{j=1}^{mn}(-1)^{j+1}\langle a_j,a_j'\rangle_{\mathfrak{u}}$$
$$
=\langle i^{mn}_n(a_1,\cdots,a_{mn}),i^{mn}_n(a_1',\cdots,a_{mn}')\rangle_{\mathfrak{u}^{mn}},
$$
and
$$
i^{mn}_n(\phi_{\mathfrak{u}^{mn}}(a_1,\cdots,a_{mn}))=i^{mn}_n(\phi_{\mathfrak{u}}(a_1),\cdots,\phi_{\mathfrak{u}}(a_{mn}))
=\phi_{\mathfrak{u}^{mn}}(i^{mn}_n(a_1,\cdots,a_{mn})).
$$
Note that $i^{mn}_n$ is unique because two adjacent vertices are different colors. Then $i^{mn}_n$  induced the isomorphism between  $\mathcal{M}_{\phi}^{mn}$ and $(\mathcal{M}_{\phi}^{m})^n$.
This finishes the proof.
\end{proof}

\begin{remark}
In Theorem \ref{tm:hom-polyubles} and Theorem \ref{tm:mn-uble}, restricting to the Lie cases $\mathfrak{h}=\fg_1,\mathfrak{h}'=\fg_2$ and $\fu=\fd$ with $\phi_{\mathfrak{h}}=\phi_{\mathfrak{h}'}=Id$, i.e. $\mathcal{M}_{\phi}=\mathcal{M}$. Then we obtain a  unique isomorphism as a Manin triple of Lie algebras $i^{mn}_n:\mathcal{M}^{nm}\rightarrow (\mathcal{M}^n)^m$, for any $m,n\in \mathbb{N}^*$.
\end{remark}

\section{Quasi-triangular hom-r-matrices, stabilizer hom-Lie subalgebras and Hom-Poisson structures}

\subsection{Quasi-triangular hom-r-matrices and hom-Schouten bracket}

\begin{definition}{\rm(\cite{CaiLiuSheng})}
Let $(\mathfrak{h},[\cdot,\cdot]_\mathfrak{h},\phi_\mathfrak{h})$ be a hom-Lie algebra. The {\bf hom-Schouten bracket} is a bilinear operation $[[\cdot,\cdot]]:\wedge^p\mathfrak{h}\otimes\wedge^q\mathfrak{h}\rightarrow \wedge^{p+q-1}\mathfrak{h}$ such that the following properties are satisfied.
\begin{itemize}
  \item
  The restriction of the hom-Schouten bracket to $\mathfrak{h}$ is a hom-Lie bracket, i.e.
    for all $X,Y,Z \in \mathfrak{h}$, the bracket $[[\cdot,\cdot]]:\wedge^2\mathfrak{h} \rightarrow \mathfrak{h}$ satisfies the hom-Jacobi identity:
    \begin{equation}
      [[\phi_\mathfrak{h}(X),[[Y,Z]]~]]+[[\phi_\mathfrak{h}(Y),[[Z,X]]~]]+
 [[\phi_\mathfrak{h}(Z),[[X,Y]]~]]=0.
    \end{equation}
  \item
  The bracket between two arbitrary elements is obtained according to the following two rules, for all~$X\in \wedge^p\mathfrak{h},~Y\in \wedge^q\mathfrak{h},~Z\in \wedge^l\mathfrak{h}$,
  \begin{align}
    [[X,Y]] & =-(-1)^{(p-1)(q-1)}[[Y,X]], \label{eq:hom-Schouten-bracket-1}\\
    [[X,Y\wedge Z]] & =[[X,Y]]\wedge \phi_\mathfrak{h}^{\otimes^l}(Z)+(-1)^{(p-1)l}\phi_\mathfrak{h}^{\otimes^q}(Y)\wedge[[X,Z]]\label{eq:hom-Schouten-bracket-2}.
  \end{align}
\end{itemize}
\end{definition}

\begin{lemma}{\rm(\cite{Teles})}
For all $X\in \wedge^i\mathfrak{h}, Y\in \wedge^j\mathfrak{h}, Z\in \wedge^k\mathfrak{h}$, the  hom-Schouten bracket determined by \eqref{eq:hom-Schouten-bracket-1} and \eqref{eq:hom-Schouten-bracket-2} satisfies the {\bf graded hom-Jacobi identity}:
\begin{align}
 &(-1)^{(i-1)(k-1)}[[\phi_\mathfrak{h}^{\otimes^i}(X),[[Y,Z]]~]]+(-1)^{(j-1)(i-1)}[[\phi_\mathfrak{h}^{\otimes^j}(Y),[[Z,X]]~]]\label{eq:ghJ}\\
  &+(-1)^{(k-1)(j-1)}[[\phi_\mathfrak{h}^{\otimes^k}(Z),[[X,Y]]~]]=0\nonumber.
\end{align}
\end{lemma}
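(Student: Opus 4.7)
The plan is to induct on the total degree $n = i+j+k$, reducing the identity for arbitrary polyvectors to the hom-Jacobi identity on $\mathfrak{h}$ itself via repeated application of the graded Leibniz rule \eqref{eq:hom-Schouten-bracket-2}.

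For the base case $n = 3$, one must have $i = j = k = 1$, all signs $(-1)^{(a-1)(b-1)}$ equal $1$, and \eqref{eq:ghJ} collapses to the ordinary hom-Jacobi identity that is built into the definition of the hom-Schouten bracket. For the inductive step, assume the identity for all triples of total degree strictly less than $n$, with $n\geq 4$. At least one of $i,j,k$ is $\geq 2$; by the cyclic form of \eqref{eq:ghJ} together with the graded skew-symmetry \eqref{eq:hom-Schouten-bracket-1}, it suffices to treat $k\geq 2$. Decompose $Z = Z_1\wedge Z_2$ with $Z_s\in\wedge^{k_s}\mathfrak{h}$, $k_1+k_2 = k$, $k_s\geq 1$. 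Using \eqref{eq:hom-Schouten-bracket-2} together with $\phi_{\mathfrak{h}}^{\otimes k}(Z_1\wedge Z_2) = \phi_{\mathfrak{h}}^{\otimes k_1}(Z_1)\wedge\phi_{\mathfrak{h}}^{\otimes k_2}(Z_2)$ (valid because $\phi_{\mathfrak{h}}$ is an algebra homomorphism, hence compatible with $\wedge$), expand each of the three outer brackets in \eqref{eq:ghJ}.

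After collection, the resulting sum reorganises into two blocks: a graded hom-Jacobi identity for the triple $(X, Y, Z_1)$ of total degree $i+j+k_1 < n$, wedged with $\phi_{\mathfrak{h}}^{\otimes k_2}(Z_2)$; and a graded hom-Jacobi identity for $(X, Y, Z_2)$ of total degree $i+j+k_2 < n$, wedged with $\phi_{\mathfrak{h}}^{\otimes k_1}(Z_1)$. Both vanish by the inductive hypothesis, which closes the induction.

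The main obstacle will be the sign bookkeeping. One must verify that the cyclic signs $(-1)^{(i-1)(k-1)}$, $(-1)^{(j-1)(i-1)}$, $(-1)^{(k-1)(j-1)}$ in \eqref{eq:ghJ} combine with the Leibniz signs $(-1)^{(p-1)l}$ produced when $Z_1\wedge Z_2$ is pulled apart inside the brackets, to yield exactly the cyclic sign pattern for the degree-$(i+j+k_s)$ Jacobi identity in each block. The essential congruences are $(i-1)(k-1)\equiv (i-1)(k_1-1) + (i-1)k_2\pmod 2$ and its cyclic analogues in the $j,k$ and $k,i$ positions; once these modular identities are checked the regrouping into the two inductive blocks is mechanical. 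A secondary point is that, at every stage, unwedged factors $\phi_{\mathfrak{h}}^{\otimes k_s}(Z_s)$ must appear in the correct position so that the terms land inside a single wedge with the smaller Jacobi expression; this is guaranteed by the homomorphism property of $\phi_{\mathfrak{h}}$ applied together with the graded skew-symmetry \eqref{eq:hom-Schouten-bracket-1} to reorder factors when needed.
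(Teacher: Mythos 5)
First, a remark on the comparison you were asked to survive: the paper does not prove this lemma at all — it is quoted from \cite{Teles} without proof — so there is no internal argument to measure your proposal against. Your plan (induction on the total degree $i+j+k$, reducing to the hom-Jacobi identity on $\mathfrak{h}$ via the graded Leibniz rule \eqref{eq:hom-Schouten-bracket-2}) is the standard route for Schouten-type brackets and is viable in principle; the base case $i=j=k=1$ and the reduction to $k\geq 2$ by cyclic symmetry and \eqref{eq:hom-Schouten-bracket-1} are fine, as is the reduction to decomposable $Z=Z_1\wedge Z_2$ by multilinearity.

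There is, however, a genuine gap in the inductive step as you describe it: the expansion does \emph{not} reorganise into just two blocks. Expanding $[[\phi_\mathfrak{h}^{\otimes^i}(X),[[Y,Z_1\wedge Z_2]]~]]$ by two applications of the Leibniz rule yields four terms: the two block terms $[[\phi_\mathfrak{h}^{\otimes^i}(X),[[Y,Z_1]]~]]\wedge(\phi_\mathfrak{h}^{2})^{\otimes^{k_2}}(Z_2)$ and $(\phi_\mathfrak{h}^{2})^{\otimes^{k_1}}(Z_1)\wedge[[\phi_\mathfrak{h}^{\otimes^i}(X),[[Y,Z_2]]~]]$, plus two \emph{cross terms} of the shape $[[\phi_\mathfrak{h}^{\otimes^j}(Y),\phi_\mathfrak{h}^{\otimes^{k_1}}(Z_1)]]\wedge[[\phi_\mathfrak{h}^{\otimes^i}(X),\phi_\mathfrak{h}^{\otimes^{k_2}}(Z_2)]]$ and $[[\phi_\mathfrak{h}^{\otimes^i}(X),\phi_\mathfrak{h}^{\otimes^{k_1}}(Z_1)]]\wedge[[\phi_\mathfrak{h}^{\otimes^j}(Y),\phi_\mathfrak{h}^{\otimes^{k_2}}(Z_2)]]$. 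The $Y$-outer cyclic term produces the same two products with different prefactors, while the $Z$-outer term produces only block terms (its wedge decomposition sits in the outer slot, so the Leibniz rule is applied once). The heart of the proof is showing that these four cross terms cancel in two pairs; this is where almost all of the sign-checking lives, and it is entirely absent from your two-block description. Two further points you gloss over: (i) the spectator factors come out as $(\phi_\mathfrak{h}^{2})^{\otimes^{k_s}}(Z_s)$ rather than $\phi_\mathfrak{h}^{\otimes^{k_s}}(Z_s)$, since the Leibniz rule is applied twice, so one must check that all three cyclic terms carry the \emph{same} power of $\phi_\mathfrak{h}$ on the spectator before it can be factored out of the lower-degree Jacobi expression; (ii) rewriting $\phi_\mathfrak{h}^{\otimes^{j+k_1-1}}([[Y,Z_1]])$ as $[[\phi_\mathfrak{h}^{\otimes^j}(Y),\phi_\mathfrak{h}^{\otimes^{k_1}}(Z_1)]]$ — which is needed to recognise the cross terms as cancelling pairs — uses that $\phi_\mathfrak{h}^{\otimes^\bullet}$ is a morphism for the hom-Schouten bracket, a fact that follows from multiplicativity of $\phi_\mathfrak{h}$ but must be stated and proved (by the same induction). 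With the cross-term cancellation and these two checks supplied, the argument closes; as written, the plan omits the step that carries the actual difficulty.
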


\begin{remark}
The  hom-Schouten bracket $[[\cdot,\cdot]]:\wedge^2\mathfrak{h} \rightarrow \mathfrak{h}$ with $\phi_\mathfrak{h}= \mathrm{Id}$ is exactly the Schouten bracket in {\rm\cite{Dufour}}.
\end{remark}

Let $\mathfrak{h}^*$ be the dual vector space of $\mathfrak{h}$.
For each positive integer $p$, the $p$-linear pairing $\langle\cdot,\cdot\rangle:\overset{p}\otimes~\mathfrak{h}\times\overset{p}\otimes~\mathfrak{h}^*\rightarrow \mathbf{k}$ is defined by
\begin{equation}\label{eq:tensor-pair}
  \langle x_1\otimes\cdots\otimes x_p,\xi_1\otimes\cdots\otimes\xi_p\rangle=\langle x_1,\xi_1\rangle\langle x_2,\xi_2\rangle\cdots\langle x_p,\xi_p\rangle,\quad x_i\in \mathfrak{h},~\xi_i \in \mathfrak{h}^*.
\end{equation}

Let $(\mathfrak{h},[\cdot,\cdot]_\mathfrak{h},\phi_\mathfrak{h})$ be an involutive hom-Lie algebra,  $(\mathfrak{h},\mathrm{ad},\phi_{\mathfrak{h}})$ be the adjoint representation of $(\mathfrak{h},[\cdot,\cdot]_\mathfrak{h},\phi_\mathfrak{h})$ and
$(\mathfrak{h}^*,\mathrm{ad}^*,\phi_{\mathfrak{h}}^*)$ be the coadjoint representation of $(\mathfrak{h},[\cdot,\cdot]_\mathfrak{h},\phi_\mathfrak{h})$.
By \eqref{eq:tensor-pair}, for all $x\in \mathfrak{h}$, we have
\begin{equation}\label{eq:ad-tensor-pair}
  \langle \mathrm{ad}_x(x_1\otimes\cdots\otimes x_p),\xi_1\otimes\cdots\otimes\xi_p\rangle=-\langle x_1\otimes\cdots\otimes x_p,\sum_i^p\phi_\mathfrak{h}^*(\xi_1)\otimes\cdots\otimes\mathrm{ad}^*_x(\xi_i)\otimes\cdots\otimes\phi_\mathfrak{h}^*(\xi_p)\rangle.
\end{equation}

\begin{definition}
Let $(\mathfrak{h},[\cdot,\cdot]_\mathfrak{h},\phi_\mathfrak{h})$ be a hom-Lie algebra
and $r=\sum_{i}x_i\otimes y_i\in \mathfrak{h}\otimes\mathfrak{h}$. Define a  linear map $HCYB:\mathfrak{h}^{\otimes^2} \rightarrow\mathfrak{h}^{\otimes^3}$ by
\begin{align}\label{eq:hcyb}
  HCYB(r)  := &\sum_{i,j}([x_{i},x_j]_{\mathfrak{h}}\otimes \phi_{\mathfrak{h}}(y_i)\otimes \phi_{\mathfrak{h}}(y_j)+\phi_{\mathfrak{h}}(x_{i})\otimes [y_i,x_j]_{\mathfrak{h}}\otimes \phi_{\mathfrak{h}}(y_j)\nonumber\\
   &+\phi_{\mathfrak{h}}(x_{i})\otimes\phi_{\mathfrak{h}}(x_{j})\otimes[y_i,y_j]_{\mathfrak{h}}).
\end{align}
We call $HCYB$ the {\bf hom-classical Yang-Baxter map}.
\end{definition}

From now on, we suppose that the hom-Lie algebra $(\mathfrak{h},[\cdot,\cdot]_\mathfrak{h},\phi_\mathfrak{h})$ is involutive and $\phi_{\mathfrak{h}}^{\otimes 2}(r)=r$.

Let  $r^{21}=\sum_{i}y_i\otimes x_i$. From the facts
$\sum_{i}\phi_\mathfrak{h}(x_i)\otimes \phi_\mathfrak{h}(y_i)=\sum_{i}x_i\otimes y_i$ and $\phi_\mathfrak{h}^2=\mathrm{Id}$ we have
\begin{equation}\label{eq:phi-r}
  \sum_{i}\phi_\mathfrak{h}(x_i)\otimes y_i=\sum_{i}x_i\otimes \phi_\mathfrak{h}(y_i).
\end{equation}

Let $(\mathfrak{h}^*,\mathrm{ad}^*,\phi_{\mathfrak{h}}^*)$ be the coadjoint representation of $(\mathfrak{h},[\cdot,\cdot]_\mathfrak{h},\phi_\mathfrak{h})$. Where $\phi_\mathfrak{h}^*:\mathfrak{h}^*\rightarrow \mathfrak{h}^*$ is a linear map satisfies
\begin{equation}\label{eq:phih*}
  \langle\phi_\mathfrak{h}(x),\xi\rangle=\langle x,\phi_\mathfrak{h}^*(\xi)\rangle,\quad \forall~ x\in \mathfrak{h}.
\end{equation}
Define a linear map $r^\sharp: \mathfrak{h}^*\rightarrow  \mathfrak{h}$ satisfies
\begin{equation}\label{eq:r*}
  r^\sharp(\xi)=\sum_{i}\langle\phi_\mathfrak{h}^*(\xi),x_i\rangle y_i, \quad \xi \in~\mathfrak{h}^*,
\end{equation}

\begin{lemma}\label{lem:HCYBpair}
With the above notations, let
$r_+=r^\sharp:\mathfrak{h}^*\rightarrow \mathfrak{h}$ and $r_-=-(r^{21})^\sharp: \mathfrak{h}^*\rightarrow \mathfrak{h}$. Then we have, for all $\xi,\eta,\zeta\in \mathfrak{h}^*$,
\begin{equation}\label{eq:HCYBpair}
\langle HCYB(r), \xi\otimes\eta\otimes\zeta\rangle=\langle\xi,[r_-(\eta),r_-(\zeta)]_\mathfrak{h}\rangle
+\langle\eta,[r_-(\zeta),r_+(\xi)]_\mathfrak{h}\rangle+ \langle\zeta,[r_+(\xi),r_+(\eta)]_\mathfrak{h}\rangle.
\end{equation}
\end{lemma}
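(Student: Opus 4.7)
The plan is a direct computation: expand the left-hand side via the multilinear pairing \eqref{eq:tensor-pair}, then identify each resulting summand with one of the three bracket-of-images terms on the right via the definition \eqref{eq:r*} of $r^\sharp$. First I would unpack the three tensor summands defining $HCYB(r)$ in \eqref{eq:hcyb} and pair them with $\xi\otimes\eta\otimes\zeta$, obtaining
\begin{align*}
\langle HCYB(r),\,\xi\otimes\eta\otimes\zeta\rangle &= A_1+A_2+A_3,\\
A_1 &= \textstyle\sum_{i,j}\langle\xi,[x_i,x_j]_\h\rangle\,\langle\eta,\phi_\h(y_i)\rangle\,\langle\zeta,\phi_\h(y_j)\rangle,\\
A_2 &= \textstyle\sum_{i,j}\langle\xi,\phi_\h(x_i)\rangle\,\langle\eta,[y_i,x_j]_\h\rangle\,\langle\zeta,\phi_\h(y_j)\rangle,\\
A_3 &= \textstyle\sum_{i,j}\langle\xi,\phi_\h(x_i)\rangle\,\langle\eta,\phi_\h(x_j)\rangle\,\langle\zeta,[y_i,y_j]_\h\rangle.
\end{align*}

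Next, using the adjointness $\langle\mu,\phi_\h(\cdot)\rangle=\langle\phi_\h^*(\mu),\cdot\rangle$ from \eqref{eq:phih*} together with the definition \eqref{eq:r*}, I would read off the identities
\[
r_+(\xi)=\sum_i\langle\xi,\phi_\h(x_i)\rangle\,y_i,\qquad
r_-(\eta)=-\sum_i\langle\eta,\phi_\h(y_i)\rangle\,x_i,
\]
and similarly for the other covectors. Pulling these scalar coefficients inside $[\cdot,\cdot]_\h$ by bilinearity converts $A_3$ immediately into $\langle\zeta,[r_+(\xi),r_+(\eta)]_\h\rangle$, and converts $A_1$ into $\langle\xi,[r_-(\eta),r_-(\zeta)]_\h\rangle$, since the two minus signs coming from $r_-(\eta)$ and $r_-(\zeta)$ cancel. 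Finally, $A_2$ becomes $\langle\eta,[r_+(\xi),-r_-(\zeta)]_\h\rangle$, which by antisymmetry of $[\cdot,\cdot]_\h$ equals $\langle\eta,[r_-(\zeta),r_+(\xi)]_\h\rangle$. Summing the three identifications reproduces the right-hand side of \eqref{eq:HCYBpair}.

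The main thing to watch out for is purely bookkeeping: tracking that the two sign flips in the ``purely $r_-$'' term $A_1$ cancel and that the single sign flip in $A_2$ is absorbed by swapping the two arguments of the bracket. It is worth noting that neither involutivity $\phi_\h^2=\mathrm{Id}$ nor the invariance $\phi_\h^{\otimes 2}(r)=r$ is actually needed for this identity; those assumptions will only come into play when \eqref{eq:HCYBpair} is later combined with the hom-classical Yang--Baxter equation.
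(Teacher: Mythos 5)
Your proof is correct and follows essentially the same route as the paper's: expand the three summands of $HCYB(r)$ against $\xi\otimes\eta\otimes\zeta$ via \eqref{eq:tensor-pair}, move $\phi_{\mathfrak{h}}$ across the pairing using \eqref{eq:phih*}, and absorb the scalar coefficients into the brackets to recognize $r_+$ and $r_-$, with the sign bookkeeping in your $A_1$ and $A_2$ exactly matching the paper's cancellations. Your closing observation that neither $\phi_{\mathfrak{h}}^2=\mathrm{Id}$ nor $\phi_{\mathfrak{h}}^{\otimes 2}(r)=r$ is actually used in this identity is also accurate.
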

\begin{proof}
By \eqref{eq:hcyb}, the definition of $HCYB$, we have
\begin{eqnarray*}
\sum_{i,j}\langle[x_{i},x_j]_{\mathfrak{h}}\otimes \phi_{\mathfrak{h}}(y_i)\otimes \phi_{\mathfrak{h}}(y_j),\xi\otimes\eta\otimes\zeta\rangle
&\overset{\eqref{eq:tensor-pair}}{=}& \sum_{i,j}\langle[x_{i},x_j]_{\mathfrak{h}},\xi\rangle \langle\phi_{\mathfrak{h}}(y_i),\eta\rangle\langle \phi_{\mathfrak{h}}(y_j),\zeta\rangle\\
&\overset{\eqref{eq:phih*}}{=}& \sum_{i,j}\langle[x_{i},x_j]_{\mathfrak{h}},\xi\rangle \langle y_i,\phi_{\mathfrak{h}}^*(\eta)\rangle\langle y_j,\phi_{\mathfrak{h}}^*(\zeta)\rangle\\
&=&\langle\xi,[\sum_{i}\langle\phi_{\mathfrak{h}}^*(\eta),y_i\rangle x_i, \sum_{j}\langle\phi_{\mathfrak{h}}^*(\zeta),y_j\rangle x_j]_\mathfrak{h}\rangle\\
&=&\langle\xi,[r_-(\eta),r_-(\zeta)]_\mathfrak{h}\rangle,
\end{eqnarray*}
\begin{eqnarray*}
\sum_{i,j}\langle\phi_{\mathfrak{h}}(x_{i})\otimes [y_i,x_j]_{\mathfrak{h}}\otimes \phi_{\mathfrak{h}}(y_j),\xi\otimes\eta\otimes\zeta\rangle
  &=&\sum_{i,j}\langle\phi_{\mathfrak{h}}(x_{i}),\xi\rangle\langle[y_i,x_j]_{\mathfrak{h}},\eta\rangle \langle\phi_{\mathfrak{h}}(y_j),\zeta\rangle\\
  &=&\sum_{i,j}\langle x_{i},\phi_{\mathfrak{h}}^*(\xi)\rangle\langle[y_i,x_j]_{\mathfrak{h}},\eta\rangle \langle y_j,\phi_{\mathfrak{h}}^*(\zeta)\rangle\\
  &=&-\langle\eta,[\sum_{j}\langle\phi_{\mathfrak{h}}^*(\zeta),y_j\rangle x_j,\sum_{i}\langle\phi_{\mathfrak{h}}^*(\xi),x_i\rangle y_i]_\mathfrak{h}\rangle\\
  &=&\langle\eta,[r_-(\zeta),r_+(\xi)]_\mathfrak{h}\rangle,
\end{eqnarray*}
\begin{eqnarray*}
\sum_{i,j}\langle\phi_{\mathfrak{h}}(x_{i})\otimes\phi_{\mathfrak{h}}(x_{j})\otimes[y_i,y_j]_{\mathfrak{h}},
  \xi\otimes\eta\otimes\zeta\rangle
  &=&\sum_{i,j}\langle\phi_{\mathfrak{h}}(x_{i}),\xi\rangle\langle\phi_{\mathfrak{h}}(x_{j}),\eta\rangle \langle[y_i,y_j]_{\mathfrak{h}},\zeta\rangle\\
  &=&\sum_{i,j}\langle x_{i},\phi_{\mathfrak{h}}^*(\xi)\rangle\langle x_{j},\phi_{\mathfrak{h}}^*(\eta)\rangle \langle[y_i,y_j]_{\mathfrak{h}},\zeta\rangle\\
  &=&\langle\zeta,[\sum_{i}\langle\phi_{\mathfrak{h}}^*(\xi),x_i\rangle y_i,\sum_{j}\langle\phi_{\mathfrak{h}}^*(\eta),x_j\rangle y_j]_\mathfrak{h}\rangle\\
  &=&\langle\zeta,[r_+(\xi),r_+(\eta)]_\mathfrak{h}\rangle.
\end{eqnarray*}
Therefore, \eqref{eq:HCYBpair} holds.
\end{proof}
Denote $\Lambda$ and $S$ be the skew-symmetric and the symmetric parts of $r$, i.e., $r=\Lambda+S$, where $\Lambda\in\wedge^2\mathfrak{h}$, $S\in S^2\mathfrak{h}$.

\begin{proposition}
Let $\Lambda\in\wedge^2\mathfrak{h}$ and $\phi_{\mathfrak{h}}^{\otimes 2}(\Lambda)=\Lambda$. Then
\begin{equation}\label{eq:HCYBa}
HCYB(\Lambda)=\frac{1}{2}[[\Lambda,\Lambda]]\in \wedge^3 \mathfrak{h}.
\end{equation}
\end{proposition}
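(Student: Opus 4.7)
The plan is to expand both sides and match them term by term, using the hypothesis $\phi_{\mathfrak{h}}^{\otimes 2}(\Lambda) = \Lambda$ to control the $\phi$-factors and the skew-symmetry of $\Lambda$ to collapse the four-term expansion of $[[\Lambda,\Lambda]]$ onto the three-term expression $HCYB(\Lambda)$. Writing $\Lambda = \sum_{a} x_a \wedge y_a$, I would first compute $[[x_a \wedge y_a,\, x_b \wedge y_b]]$ by applying the Leibniz-type rule \eqref{eq:hom-Schouten-bracket-2} with $p = q = 2$ once and then again to the inner Schouten bracket. This produces a sum of four triple wedges in which a single hom-Lie bracket sits in one of the four positions among $x_a, y_a, x_b, y_b$, with the other three entries hit by $\phi_{\mathfrak{h}}$. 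The assumptions $\phi_{\mathfrak{h}}^2 = \mathrm{Id}$ and $\phi_{\mathfrak{h}}^{\otimes 2}(\Lambda) = \Lambda$ guarantee that no higher iterates of $\phi_{\mathfrak{h}}$ survive, so the resulting triple wedges already have exactly the same shape as the three terms defining $HCYB(\Lambda)$ in \eqref{eq:hcyb}.

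Next I would exploit the skew-symmetry $\sum_{a} x_a \otimes y_a = -\sum_{a} y_a \otimes x_a$ to pair the four terms. For any bilinear map $F$ into $\wedge^3\mathfrak{h}$ the identity $\sum_{a} F(x_a,y_a) = -\sum_{a} F(y_a,x_a)$ holds, and applying this to each of the four wedge terms (after a suitable reordering of the wedge factors, which contributes a sign) identifies the $[x_a,y_b]_{\mathfrak{h}}$-term with the $[x_a,x_b]_{\mathfrak{h}}$-term and the $[y_a,y_b]_{\mathfrak{h}}$-term with the $[y_a,x_b]_{\mathfrak{h}}$-term. Thus the four-term sum reduces to twice a three-term sum whose wedge representatives are exactly those of $HCYB(\Lambda)$ under the canonical projection $\mathfrak{h}^{\otimes 3} \to \wedge^3\mathfrak{h}$.

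Independently, to confirm the stronger statement $HCYB(\Lambda) \in \wedge^3\mathfrak{h}$, I would invoke Lemma \ref{lem:HCYBpair}. Skew-symmetry of $\Lambda$ gives $(r^{21})^{\sharp} = -r^{\sharp}$, so $r_{+} = r_{-} = \Lambda^{\sharp}$, and the pairing formula \eqref{eq:HCYBpair} becomes the cyclic sum
\[
\langle \xi, [\Lambda^{\sharp}(\eta), \Lambda^{\sharp}(\zeta)]_{\mathfrak{h}}\rangle + \langle \eta, [\Lambda^{\sharp}(\zeta), \Lambda^{\sharp}(\xi)]_{\mathfrak{h}}\rangle + \langle \zeta, [\Lambda^{\sharp}(\xi), \Lambda^{\sharp}(\eta)]_{\mathfrak{h}}\rangle,
\]
which, combined with the antisymmetry of $[\cdot,\cdot]_{\mathfrak{h}}$, is manifestly fully alternating in $(\xi,\eta,\zeta)$.

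The main obstacle is the sign bookkeeping across the repeated applications of \eqref{eq:hom-Schouten-bracket-2} and the wedge reorderings needed to realize the pairwise identifications in the second step. In particular, one must verify that the overall sign produced by $(p-1)l$ factors in the Leibniz rule, together with the $(-1)$ from the skew-symmetry swap and the parity of the wedge permutation, conspire so that each of the three ``bracket-position'' contributions in $HCYB(\Lambda)$ is recovered with the same positive coefficient; getting this to line up uniformly is where the hypothesis $\phi_{\mathfrak{h}}^{\otimes 2}(\Lambda) = \Lambda$ is genuinely used, since otherwise the $\phi$-images would not match across the paired terms.
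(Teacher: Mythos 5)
Your proposal is correct and follows essentially the same route as the paper: both arguments expand $[[\Lambda,\Lambda]]$ with the hom-Schouten Leibniz rule, use the skew-symmetry of $\Lambda$ to collapse the four resulting wedge terms onto the three terms of $HCYB(\Lambda)$, and rely on Lemma \ref{lem:HCYBpair} with $r_+=r_-=\Lambda^\sharp$. The only organizational difference is that the paper performs the whole comparison after pairing both sides against arbitrary $\xi\otimes\eta\otimes\zeta$ (so equality in $\mathfrak{h}^{\otimes 3}$, and hence the alternation of $HCYB(\Lambda)$, comes out automatically), whereas you match the tensors directly and establish the alternation of $HCYB(\Lambda)$ as a separate step.
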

\begin{proof}
By Lemma \ref{lem:HCYBpair}, when $r=\Lambda$ is skew-symmetric, let $\Lambda=\sum_i^nx_i\wedge y_i$, $\Lambda^\sharp:\mathfrak{h}^*\rightarrow \mathfrak{h}$, $r_+=r_-=\Lambda^\sharp$. Then
\begin{equation}\label{eq:Lambda-sharp}
\Lambda^\sharp(\xi)=\sum_{i}\langle\phi_\mathfrak{h}^*(\xi),x_i\rangle y_i-\langle\phi_\mathfrak{h}^*(\xi),y_i\rangle x_i, \quad \forall~\xi \in~\mathfrak{h}^*.
\end{equation}
For all $\xi,\eta,\zeta\in \mathfrak{h}^*$, on the one hand, we have
\begin{eqnarray*}
&&\langle HCYB(\Lambda), \xi\otimes\eta\otimes\zeta\rangle\\
&=&\langle\xi,[\Lambda^\sharp(\eta),\Lambda^\sharp(\zeta)]_\mathfrak{h}\rangle
+\langle\eta,[\Lambda^\sharp(\zeta),\Lambda^\sharp(\xi)]_\mathfrak{h}\rangle+ \langle\zeta,[\Lambda^\sharp(\xi),\Lambda^\sharp(\eta)]_\mathfrak{h}\rangle\\
&=&\sum_{i,j}\langle\xi,[\langle\phi_\mathfrak{h}^*(\eta),x_i\rangle y_i-\langle\phi_\mathfrak{h}^*(\eta),y_i\rangle x_i,\langle\phi_\mathfrak{h}^*(\zeta),x_j\rangle y_j-\langle\phi_\mathfrak{h}^*(\zeta),y_j\rangle x_j]_\mathfrak{h}\rangle\\
&&+\sum_{i,j}\langle\eta,[\langle\phi_\mathfrak{h}^*(\zeta),x_i\rangle y_i-\langle\phi_\mathfrak{h}^*(\zeta),y_i\rangle x_i,\langle\phi_\mathfrak{h}^*(\xi),x_j\rangle y_j-\langle\phi_\mathfrak{h}^*(\xi),y_j\rangle x_j]_\mathfrak{h}\rangle\\
&&+\sum_{i,j}\langle\zeta,[\langle\phi_\mathfrak{h}^*(\xi),x_i\rangle y_i-\langle\phi_\mathfrak{h}^*(\xi),y_i\rangle x_i,\langle\phi_\mathfrak{h}^*(\eta),x_j\rangle y_j-\langle\phi_\mathfrak{h}^*(\eta),y_j\rangle x_j]_\mathfrak{h}\rangle\\
&=&\underset{\xi,\eta,\zeta}{\circlearrowleft}\sum_{i,j}\langle\xi,[\langle\eta,\phi_\mathfrak{h}(x_i)\rangle y_i-\langle\eta,\phi_\mathfrak{h}(y_i)\rangle x_i,\langle\zeta,\phi_\mathfrak{h}(x_j)\rangle y_j-\langle\zeta,\phi_\mathfrak{h}(y_j)\rangle x_j]_\mathfrak{h}\rangle\\
&=&\underset{\xi,\eta,\zeta}{\circlearrowleft}\sum_{i,j}\bigg(\langle\eta,\phi_\mathfrak{h}(x_i)\rangle\langle\zeta,\phi_\mathfrak{h}(x_j)\rangle\langle\xi,[ y_i,y_j]_\mathfrak{h}\rangle
-\langle\eta,\phi_\mathfrak{h}(x_i)\rangle\langle\zeta,\phi_\mathfrak{h}(y_j)\rangle\langle\xi,[y_i,x_j]_\mathfrak{h}\rangle\\
&&-\langle\eta,\phi_\mathfrak{h}(y_i)\rangle\langle\zeta,\phi_\mathfrak{h}(x_j)\rangle\langle\xi,[x_i,y_j]_\mathfrak{h}\rangle
+\langle\eta,\phi_\mathfrak{h}(x_j)\rangle\langle\zeta,\phi_\mathfrak{h}(y_j)\rangle\langle\xi,[x_i,x_j]_\mathfrak{h}\rangle\bigg)\\
&=&\underset{\xi,\eta,\zeta}{\circlearrowleft}\sum_{ij}\bigg(\langle\phi_\mathfrak{h}(x_i)\otimes\phi_\mathfrak{h}(x_j)\otimes[ y_i,y_j]_\mathfrak{h},\eta\otimes\zeta\otimes\xi\rangle
-\langle\phi_\mathfrak{h}(x_i)\otimes\phi_\mathfrak{h}(y_j)\otimes[y_i,x_j]_\mathfrak{h},\eta\otimes\zeta\otimes\xi\rangle\\
&&-\langle\phi_\mathfrak{h}(y_i)\otimes\phi_\mathfrak{h}(x_j)\otimes[x_i,y_j]_\mathfrak{h},\eta\otimes\zeta\otimes\xi\rangle
+\langle\phi_\mathfrak{h}(x_j)\otimes\phi_\mathfrak{h}(y_j)\otimes[x_i,x_j]_\mathfrak{h},\eta\otimes\zeta\otimes\xi\rangle\bigg).
\end{eqnarray*}
On the other hand,  by the definition of hom-Schouten bracket we have
\begin{eqnarray*}
[[\Lambda,\Lambda]]&=&\sum_{i,j}^n\bigg([x_i\wedge y_i,x_j]_\mathfrak{h}\wedge \phi (y_j)-\phi (x_j)\wedge[x_i\wedge y_i,y_j]_\mathfrak{h}\bigg)\\
&=&\sum_{i,j}^n\bigg(-[x_j,x_i\wedge y_i]_\mathfrak{h}\wedge \phi (y_j)+\phi (x_j)\wedge[y_j,x_i\wedge y_i]_\mathfrak{h}\bigg)\\
&=&\sum_{i,j}^n\bigg(-[x_j,x_i]_\mathfrak{h}\wedge \phi (y_i)\wedge \phi (y_j)-\phi (x_i)\wedge[x_j,y_i]_\mathfrak{h}\wedge \phi (y_j)\\
&&+\phi (x_j)\wedge[y_j,x_i]_\mathfrak{h}\wedge \phi (y_i)+\phi (x_j)\wedge \phi (x_i)\wedge [y_j,y_i]_\mathfrak{h}\bigg)\\
&=&\sum_{i,j}^n\bigg(\phi (y_i)\wedge \phi (y_j)\wedge[x_i,x_j]_\mathfrak{h}-\phi (x_i)\wedge \phi (y_j)\wedge [y_i,x_j]_\mathfrak{h}\\
&&-\phi (y_i)\wedge \phi (x_j)\wedge[x_i,y_j]_\mathfrak{h} +\phi (x_i)\wedge \phi (x_j)\wedge [y_i,y_j]_\mathfrak{h}\bigg),
\end{eqnarray*}
and
\begin{eqnarray*}
&&\langle[[\Lambda,\Lambda]],\xi\otimes\eta\otimes\zeta\rangle\\
&=&\sum_{i,j}^n\bigg(\langle \phi (y_i)\wedge \phi (y_j)\wedge[x_i,x_j]_\mathfrak{h},\xi\otimes\eta\otimes\zeta\rangle
-\langle \phi (x_i)\wedge \phi (y_j)\wedge [y_i,x_j]_\mathfrak{h},\xi\otimes\eta\otimes\zeta\rangle\\
&&-\langle \phi (y_i)\wedge \phi (x_j)\wedge[x_i,y_j]_\mathfrak{h},\xi\otimes\eta\otimes\zeta\rangle
+\langle \phi (x_i)\wedge \phi (x_j)\wedge [y_i,y_j],\xi\otimes\eta\otimes\zeta\rangle\bigg)\\
&=&\underset{\xi,\eta,\zeta}{\circlearrowleft}\sum_{i,j}^n\bigg(\langle \phi (y_i)\otimes \phi (y_j)\otimes[x_i,x_j]_\mathfrak{h},\xi\otimes\eta\otimes\zeta\rangle
-\langle \phi (y_j)\otimes \phi (y_i)\otimes[x_i,x_j]_\mathfrak{h},\xi\otimes\eta\otimes\zeta\rangle\\
&&-\langle \phi (x_i)\otimes \phi (y_j)\otimes [y_i,x_j]_\mathfrak{h},\xi\otimes\eta\otimes\zeta\rangle
+\langle \phi (y_j)\otimes \phi (x_i)\otimes [y_i,x_j]_\mathfrak{h},\xi\otimes\eta\otimes\zeta\rangle\\
&&-\langle \phi (y_j)\otimes \phi (x_i)\otimes[x_j,y_i]_\mathfrak{h},\xi\otimes\eta\otimes\zeta\rangle
+\langle \phi (x_i)\otimes \phi (y_j)\otimes[x_j,y_i]_\mathfrak{h},\xi\otimes\eta\otimes\zeta\rangle\\
&&+\langle \phi (x_i)\otimes \phi (x_j)\otimes [y_i,y_j]_\mathfrak{h},\xi\otimes\eta\otimes\zeta\rangle
-\langle \phi (x_j)\otimes \phi (x_i)\otimes [y_i,y_j]_\mathfrak{h},\xi\otimes\eta\otimes\zeta\rangle\bigg)\\
&=&2\underset{\xi,\eta,\zeta}{\circlearrowleft}\sum_{i,j}^n\bigg(\langle \phi (y_i)\otimes \phi (y_j)\otimes[x_i,x_j]_\mathfrak{h},\xi\otimes\eta\otimes\zeta\rangle
-\langle \phi (x_i)\otimes \phi (y_j)\otimes [y_i,x_j]_\mathfrak{h},\xi\otimes\eta\otimes\zeta\rangle\\
&&+\langle \phi (y_j)\otimes \phi (x_i)\otimes [y_i,x_j]_\mathfrak{h},\xi\otimes\eta\otimes\zeta\rangle
-\langle \phi (x_j)\otimes \phi (x_i)\otimes [y_i,y_j]_\mathfrak{h},\xi\otimes\eta\otimes\zeta\rangle\bigg).
\end{eqnarray*}
Therefore,
\begin{equation*}
HCYB(\Lambda)=\frac{1}{2}[[\Lambda,\Lambda]]\in \wedge^3 \mathfrak{h}.
\end{equation*}
\end{proof}

When $r=S$ is symmetric and satisfies $\phi_{\mathfrak{h}}^{\otimes 2}(S)=S$. Let $S=\sum_i^nx_i\otimes y_i$. Then  $S^\sharp:\mathfrak{h}^*\rightarrow \mathfrak{h}$, $r_+=S^\sharp$, $r_-=-S^\sharp$ and
\begin{equation}\label{eq:S-sharp}
S^\sharp(\xi)=\sum_{i}\langle\phi_\mathfrak{h}^*(\xi),x_i\rangle y_i,\quad \forall~\xi \in~\mathfrak{h}^*.
\end{equation}

\begin{proposition}\label{prop:hcybS}
With the above notations, let $S$ be hom-ad-invariant, i.e.,
$$
\langle\sum_i([x,x_i]_\mathfrak{h}\otimes \phi(y_i)+\phi(x_i)\otimes[x,y_i]_\mathfrak{h}),\xi\otimes\eta\rangle=0, \quad \forall ~ x\in\mathfrak{h},~~ \xi, \eta \in \mathfrak{h}^*.
$$
Then $HCYB(S)\in \wedge^3 \mathfrak{h}$ is hom-ad-invariant, and is given by:
\begin{equation}\label{eq:hcyb-s}
\langle HCYB(S),\xi\otimes\eta\otimes\zeta\rangle=\langle\zeta,[S^\sharp(\xi),S^\sharp(\eta)]_\mathfrak{h}\rangle,\quad \forall~\xi,\eta,\zeta\in \mathfrak{h}^*.
\end{equation}
\end{proposition}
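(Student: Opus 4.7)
The plan is to start from the definition (\ref{eq:hcyb}) of $HCYB(S)$ and use the hom-ad-invariance of $S$ to collapse two of its three summands, so that only the term with $[y_i,y_j]_\mathfrak{h}$ in the third slot survives. Write $S = \sum_i x_i \otimes y_i$ and decompose $HCYB(S) = A_1 + A_2 + A_3$ according to the three summands in (\ref{eq:hcyb}). The hom-ad-invariance hypothesis amounts to the statement that, for every $z \in \mathfrak{h}$, the tensor $\sum_i [z,x_i]_\mathfrak{h} \otimes \phi_\mathfrak{h}(y_i) + \phi_\mathfrak{h}(x_i) \otimes [z,y_i]_\mathfrak{h}$ vanishes in $\mathfrak{h}^{\otimes 2}$. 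Applied with $z = x_j$, tensored on the right with $\phi_\mathfrak{h}(y_j)$, summed over $j$, and combined with the antisymmetry of $[\cdot,\cdot]_\mathfrak{h}$ that converts $A_1$ into $-\sum_{i,j}[x_j,x_i]_\mathfrak{h} \otimes \phi_\mathfrak{h}(y_i) \otimes \phi_\mathfrak{h}(y_j)$, this forces $A_1 + A_2 = 0$. Hence $HCYB(S) = A_3 = \sum_{i,j} \phi_\mathfrak{h}(x_i) \otimes \phi_\mathfrak{h}(x_j) \otimes [y_i,y_j]_\mathfrak{h}$.

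The pairing formula (\ref{eq:hcyb-s}) then follows immediately: pair $A_3$ against $\xi \otimes \eta \otimes \zeta$, move $\phi_\mathfrak{h}$ onto the covectors via (\ref{eq:phih*}), and collect scalars to recognise $S^\sharp(\xi)$ and $S^\sharp(\eta)$ from (\ref{eq:S-sharp}). To prove $A_3 \in \wedge^3 \mathfrak{h}$, I will check antisymmetry in the two adjacent transpositions. Antisymmetry in positions $1 \leftrightarrow 2$ is obtained by relabelling $i \leftrightarrow j$ together with $[y_j,y_i]_\mathfrak{h} = -[y_i,y_j]_\mathfrak{h}$. Antisymmetry in positions $2 \leftrightarrow 3$ uses hom-ad-invariance of $S$ once more (this time with $z = y_i$), combined with the symmetry $\sum_j x_j \otimes y_j = \sum_j y_j \otimes x_j$ of $S$, to rewrite $\sum_j \phi_\mathfrak{h}(x_j) \otimes [y_i,y_j]_\mathfrak{h}$ as $-\sum_j [y_i,y_j]_\mathfrak{h} \otimes \phi_\mathfrak{h}(x_j)$. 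These two transpositions generate the full symmetric group on three letters, yielding total antisymmetry.

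For the hom-ad-invariance of $HCYB(S)$, I extend $\mathrm{ad}_x$ to $\mathfrak{h}^{\otimes 3}$ by the Leibniz rule with $\phi_\mathfrak{h}$ and apply it to the simplified form $A_3$. The contributions coming from the first two tensor slots can each be pushed into the bracket $[y_i,y_j]_\mathfrak{h}$ on the third slot by applying the hom-ad-invariance of $S$ slot-by-slot; combined with the direct action on the third slot, the resulting expression assembles into a sum of hom-Jacobi triples involving $[\phi_\mathfrak{h}(x), [y_i,y_j]_\mathfrak{h}]_\mathfrak{h}$, $[\phi_\mathfrak{h}(y_i), [y_j, x]_\mathfrak{h}]_\mathfrak{h}$ and $[\phi_\mathfrak{h}(y_j), [x, y_i]_\mathfrak{h}]_\mathfrak{h}$, which vanishes by (\ref{eq:Hom-J}). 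The involutivity $\phi_\mathfrak{h}^2 = \mathrm{Id}$ is used throughout to align the various $\phi_\mathfrak{h}$-powers in the prefactors. The hom-Jacobi cancellation is the main obstacle, since both the signs and the $\phi_\mathfrak{h}$-bookkeeping must line up exactly for the three triples to cancel.
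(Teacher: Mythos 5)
Your proposal is correct, and it takes a genuinely different route from the paper's. The paper never isolates the three summands of $HCYB(S)$ as tensors: it invokes the general pairing formula of Lemma \ref{lem:HCYBpair} with $r_\pm=\pm S^\sharp$, recasts hom-ad-invariance as the scalar identity $\langle\xi,[x,S^\sharp(\eta)]_\mathfrak{h}\rangle+\langle\eta,[x,S^\sharp(\xi)]_\mathfrak{h}\rangle=0$, and uses that to cancel two of the three scalar terms; skew-symmetry is then read off from the resulting formula, and ad-invariance is proved dually via the intertwining relations $\mathrm{ad}_x\circ S^\sharp=S^\sharp\circ\mathrm{ad}_x^*$ and $\phi_\mathfrak{h}\circ S^\sharp=S^\sharp\circ\phi_\mathfrak{h}^*$ together with hom-Jacobi. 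You instead prove the stronger tensor identity $A_1+A_2=0$ (I checked this: substituting $z=x_j$ into the invariance, tensoring with $\phi_\mathfrak{h}(y_j)$, and using antisymmetry of the bracket does give exactly $-(A_1+A_2)=0$), so that $HCYB(S)=\sum_{i,j}\phi_\mathfrak{h}(x_i)\otimes\phi_\mathfrak{h}(x_j)\otimes[y_i,y_j]_\mathfrak{h}$ in closed form; the pairing formula, the antisymmetry under the two adjacent transpositions, and the ad-invariance all then follow by direct manipulation of this single tensor. Your hom-Jacobi cancellation also works out: after using $\phi_\mathfrak{h}^{\otimes 2}(S)=S$ and the invariance to push the brackets from the first two slots into the third, the third-slot coefficient becomes $[x,[y_i,y_j]_\mathfrak{h}]_\mathfrak{h}+[\phi_\mathfrak{h}(y_i),[y_j,\phi_\mathfrak{h}(x)]_\mathfrak{h}]_\mathfrak{h}+[\phi_\mathfrak{h}(y_j),[\phi_\mathfrak{h}(x),y_i]_\mathfrak{h}]_\mathfrak{h}$, which is the hom-Jacobi triple for $(\phi_\mathfrak{h}(x),y_i,y_j)$ after applying $\phi_\mathfrak{h}^2=\mathrm{Id}$. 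Your version buys an explicit formula for $HCYB(S)$ and avoids Lemma \ref{lem:HCYBpair} entirely; the paper's dual formulation has the advantage of reusing $S^\sharp$-machinery that is needed again in Theorem \ref{tem:HCYB} and Theorem \ref{Hom-Poisson}. One small point to make explicit when writing this up: upgrading the stated hypothesis (vanishing of all pairings against $\xi\otimes\eta$) to vanishing of the tensor in $\mathfrak{h}^{\otimes 2}$ uses nondegeneracy of the pairing, i.e.\ finite-dimensionality of $\mathfrak{h}$, which is implicit in the paper's setting.
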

\begin{proof}
Since $S$ is symmetric and hom-ad-invariant, we have
\begin{align*}
&\langle\sum_i [x,y_i]_\mathfrak{h}\otimes \phi_\mathfrak{h}(x_i),\xi\otimes\eta\rangle+\langle\sum_i \phi_\mathfrak{h}(x_i)\otimes[x,y_i]_\mathfrak{h}),\xi\otimes\eta\rangle\\
=&\sum_{i}\langle \eta,\phi_\mathfrak{h}(x_i)\rangle\langle\xi,[x, y_i]_\mathfrak{h} \rangle+\sum_{j}\langle \xi,\phi_\mathfrak{h}(x_j)\rangle\langle\eta,[x, y_j]_\mathfrak{h} \rangle\\
=&\langle\xi,[x,\sum_{i}\langle \phi_\mathfrak{h}^*(\eta),x_i\rangle y_i]_\mathfrak{h} \rangle+ \langle\eta,[x,\sum_{j}\langle \phi_\mathfrak{h}^*(\xi),x_j\rangle y_j]_\mathfrak{h} \rangle
=0,
\end{align*}
that is,
\begin{equation}\label{eq:symmetric}
  \langle\xi,[x,S^\sharp(\eta)]_\mathfrak{h} \rangle+ \langle\eta,[x,S^\sharp(\xi)]_\mathfrak{h} \rangle=0.
\end{equation}
By Lemma \ref{lem:HCYBpair}, we have
\begin{align*}
 \langle HCYB(S), \xi\otimes\eta\otimes\zeta\rangle&=\langle\xi,[S^\sharp(\eta),S^\sharp(\zeta)]_\mathfrak{h}\rangle
-\langle\eta,[S^\sharp(\zeta),S^\sharp(\xi)]_\mathfrak{h}\rangle+ \langle\zeta,[S^\sharp(\xi),S^\sharp(\eta)]_\mathfrak{h}\rangle\\
&=\langle\eta,[S^\sharp(\zeta),S^\sharp(\xi)]_\mathfrak{h}\rangle
-\langle\eta,[S^\sharp(\zeta),S^\sharp(\xi)]_\mathfrak{h}\rangle+ \langle\zeta,[S^\sharp(\xi),S^\sharp(\eta)]_\mathfrak{h}\rangle\\
&=\langle\zeta,[S^\sharp(\xi),S^\sharp(\eta)]_\mathfrak{h}\rangle.
\end{align*}
Which implies that
$$
\langle  HCYB(S),\xi\otimes\eta\otimes\zeta\rangle=-\langle  HCYB(S),\eta\otimes\xi\otimes\zeta\rangle=-\langle  HCYB(S),\zeta\otimes\eta\otimes\xi\rangle.
$$
Therefore, $HCYB(S)$ is skew-symmetric and \eqref{eq:hcyb-s} holds.

Moreover, on the one hand, we have
\begin{align*}
&\langle \xi,S^\sharp(\eta)\rangle=\langle \xi,\sum_{i}\langle\phi_\mathfrak{h}^*(\eta),x_i\rangle y_i\rangle=\langle \xi,\sum_{i}\langle\phi_\mathfrak{h}^*(\eta),y_i\rangle x_i\rangle=
\langle \xi,\sum_{i}\langle\eta,\phi(y_i)\rangle x_i\rangle=\sum_{i}\langle x_i\otimes \phi_\mathfrak{h} (y_i),\xi\otimes\eta\rangle.
\end{align*}
On the other hand,
$$
\langle S^\sharp(\xi),\eta\rangle=\langle \sum_{i}\langle\phi_\mathfrak{h}^*(\xi),x_i\rangle y_i,\eta\rangle=\langle \sum_{i}\langle\phi_\mathfrak{h}^*(\xi),y_i\rangle x_i,\eta\rangle=\langle \sum_{i}\langle\xi,\phi_\mathfrak{h}(y_i)\rangle x_i,\eta\rangle=\sum_{i}\langle \phi_\mathfrak{h} (y_i)\otimes x_i,\xi\otimes\eta\rangle.
$$
Then by \eqref{eq:phi-r} we have
\begin{equation}\label{eq:Ssharp-pair}
  \langle \xi,S^\sharp(\eta)\rangle=\langle S^\sharp(\xi),\eta\rangle.
\end{equation}

Since $(\mathfrak{h}^*,\mathrm{ad}^*,\phi_{\mathfrak{h}}^*)$ is the coadjoint representation of $(\mathfrak{h},[\cdot,\cdot]_\mathfrak{h},\phi_\mathfrak{h})$. By \eqref{eq:symmetric} and \eqref{eq:Ssharp-pair}, we have
\begin{equation}\label{eq:symmetricpair}
  \langle\xi,[x,S^\sharp(\eta)]_\mathfrak{h} \rangle- \langle S^\sharp\circ\mathrm{ad}_x^*(\eta),\xi \rangle=0,\quad \forall~ x\in\mathfrak{h},
\end{equation}
which implies that $\mathrm{ad}_x\circ S^\sharp=S^\sharp\circ\mathrm{ad}_x^*$.

By \eqref{eq:ad-tensor-pair} and the hom-Jacobi identity we have
\begin{eqnarray*}
  &&\langle[x,HCYB(S)]_{\mathfrak{h}},\xi\otimes\eta\otimes\zeta\rangle\\
   &=&-\langle HCYB(S),\mathrm{ad}_x^*(\xi)\otimes\phi_{\mathfrak{h}}^*(\eta)\otimes\phi_{\mathfrak{h}}^*(\zeta)\rangle
   -\langle HCYB(S),\phi_{\mathfrak{h}}^*(\xi)\otimes\mathrm{ad}_x^*(\eta)\otimes\phi_{\mathfrak{h}}^*(\zeta)\rangle\\
   &&-\langle HCYB(S),\phi_{\mathfrak{h}}^*(\xi)\otimes\phi_{\mathfrak{h}}^*(\eta)\otimes\mathrm{ad}_x^*(\zeta)\rangle\\
   &=&-\langle\phi_{\mathfrak{h}}^*(\zeta),[S^\sharp\mathrm{ad}_x^*(\xi),S^\sharp\phi_{\mathfrak{h}}^*(\eta)]_\mathfrak{h}\rangle-
   \langle\phi_{\mathfrak{h}}^*(\zeta),[S^\sharp\phi_{\mathfrak{h}}^*(\xi),S^\sharp\mathrm{ad}_x^*(\eta)]_\mathfrak{h}\rangle\\
   &&-\langle\mathrm{ad}_x^*(\zeta),[S^\sharp\phi_{\mathfrak{h}}^*(\xi),S^\sharp\phi_{\mathfrak{h}}^*(\eta)]_\mathfrak{h}\rangle\\
   &=&-\langle\phi_{\mathfrak{h}}^*(\zeta),[S^\sharp\mathrm{ad}_x^*(\xi),S^\sharp\phi_{\mathfrak{h}}^*(\eta)]_\mathfrak{h}\rangle-
   \langle\phi_{\mathfrak{h}}^*(\zeta),[S^\sharp\phi_{\mathfrak{h}}^*(\xi),S^\sharp\mathrm{ad}_x^*(\eta)]_\mathfrak{h}\rangle\\
   &&+\langle\zeta,[x,[S^\sharp\phi_{\mathfrak{h}}^*(\xi),S^\sharp\phi_{\mathfrak{h}}^*(\eta)]_\mathfrak{h}]_\mathfrak{h}\rangle\\
   &=&-\langle\zeta,\phi_{\mathfrak{h}}[\mathrm{ad}_x S^\sharp(\xi),S^\sharp\phi_{\mathfrak{h}}^*(\eta)]_\mathfrak{h}\rangle-
   \langle\zeta,\phi_{\mathfrak{h}}[S^\sharp\phi_{\mathfrak{h}}^*(\xi),\mathrm{ad}_x S^\sharp(\eta)]_\mathfrak{h}\rangle\\
   &&+\langle\zeta,[[\phi_{\mathfrak{h}}(x),S^\sharp\phi_{\mathfrak{h}}^*(\xi)]_\mathfrak{h},\phi_{\mathfrak{h}} S^\sharp\phi_{\mathfrak{h}}^*(\eta)]_\mathfrak{h}\rangle
   +\langle\zeta,[\phi_{\mathfrak{h}}S^\sharp\phi_{\mathfrak{h}}^*(\xi),[\phi_{\mathfrak{h}}(x), S^\sharp\phi_{\mathfrak{h}}^*(\eta)]_\mathfrak{h}\rangle\\
   &=&-\langle\zeta,[[\phi_{\mathfrak{h}}(x), \phi_{\mathfrak{h}} S^\sharp(\xi)]_\mathfrak{h},\phi_{\mathfrak{h}}S^\sharp\phi_{\mathfrak{h}}^*(\eta)]_\mathfrak{h}\rangle-
   \langle\zeta,[\phi_{\mathfrak{h}}S^\sharp\phi_{\mathfrak{h}}^*(\xi),[\phi_{\mathfrak{h}}(x) , \phi_{\mathfrak{h}}S^\sharp(\eta)]_\mathfrak{h}]_\mathfrak{h}\rangle\\
   &&+\langle\zeta,[[\phi_{\mathfrak{h}}(x),S^\sharp\phi_{\mathfrak{h}}^*(\xi)]_\mathfrak{h},\phi_{\mathfrak{h}} S^\sharp\phi_{\mathfrak{h}}^*(\eta)]_\mathfrak{h}\rangle
   +\langle\zeta,[\phi_{\mathfrak{h}}S^\sharp\phi_{\mathfrak{h}}^*(\xi),[\phi_{\mathfrak{h}}(x), S^\sharp\phi_{\mathfrak{h}}^*(\eta)]_\mathfrak{h}]_\mathfrak{h}\rangle.
\end{eqnarray*}
By $\phi_\mathfrak{h}^2=\mathrm{Id}$ and $\phi_{\mathfrak{h}}^{\otimes 2}(S)=S$, we have
$$
\phi_{\mathfrak{h}}\circ S^\sharp (\xi)= \sum_{i}\langle\phi_\mathfrak{h}^*(\xi),x_i\rangle \phi_{\mathfrak{h}}(y_i)=\sum_{i}\langle\phi_\mathfrak{h}^*(\xi),\phi_\mathfrak{h}(x_i)\rangle y_i
=\sum_{i}\langle\xi,x_i\rangle y_i=S^\sharp\circ \phi_{\mathfrak{h}}^*(\xi),
$$
then $\phi_{\mathfrak{h}}\circ S^\sharp=S^\sharp\circ \phi_{\mathfrak{h}}^*$. Therefore,
$$\langle[x,HCYB(S)]_{\mathfrak{h}},\xi\otimes\eta\otimes\zeta\rangle=0,$$
and $HCYB(S)$ is hom-ad-invariant, which completes the proof.
\end{proof}

\begin{theorem}\label{tem:HCYB}
Let  $S\in S^2 \mathfrak{h}$ be hom-ad-invariant, $\Lambda\in\wedge^2\mathfrak{h}$ and $\phi_{\mathfrak{h}}^{\otimes 2}(\Lambda+S)=\Lambda+S$. Then
\begin{equation}\label{eq:MHCYB}
HCYB(\Lambda+S)=HCYB(\Lambda)+HCYB(S).
\end{equation}
\end{theorem}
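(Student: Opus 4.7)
The plan is to apply Lemma \ref{lem:HCYBpair} to $r = \Lambda + S$ and then show that the cross terms between $\Lambda$ and $S$ in the resulting expansion cancel in pairs, using the hom-ad-invariance of $S$. As a preliminary observation, $\phi_\mathfrak{h}^{\otimes 2}$ preserves both $\wedge^2\mathfrak{h}$ and $S^2\mathfrak{h}$, so the hypothesis $\phi_\mathfrak{h}^{\otimes 2}(\Lambda+S)=\Lambda+S$ forces $\phi_\mathfrak{h}^{\otimes 2}(\Lambda)=\Lambda$ and $\phi_\mathfrak{h}^{\otimes 2}(S)=S$ individually, so that Lemma \ref{lem:HCYBpair} is applicable to each of $\Lambda+S$, $\Lambda$, and $S$.

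Using $\Lambda^{21} = -\Lambda$ and $S^{21}=S$, the sharp operators for $r=\Lambda+S$ satisfy $r_+ = (\Lambda+S)^\sharp = \Lambda^\sharp + S^\sharp$ and $r_- = -((\Lambda+S)^{21})^\sharp = \Lambda^\sharp - S^\sharp$. Substituting these into Lemma \ref{lem:HCYBpair}, I would expand each of the three brackets on the right-hand side by bilinearity and collect the resulting twelve summands into three groups: pure $\Lambda^\sharp$ terms, pure $S^\sharp$ terms, and six cross terms each carrying one factor of $\Lambda^\sharp$ and one of $S^\sharp$. By Lemma \ref{lem:HCYBpair} applied to $\Lambda$ alone (where $r_+ = r_- = \Lambda^\sharp$), the pure $\Lambda^\sharp$ terms reassemble exactly to $\langle HCYB(\Lambda), \xi\otimes\eta\otimes\zeta\rangle$. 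Likewise, applying the lemma to $S$ (where $r_+ = S^\sharp$, $r_- = -S^\sharp$), the sign pattern of the pure $S^\sharp$ terms matches $\langle HCYB(S), \xi\otimes\eta\otimes\zeta\rangle$, the two minus signs squaring in the first bracket and one surviving in the middle bracket, in accordance with the proof of Proposition \ref{prop:hcybS}.

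The crux is then to verify that the six cross terms sum to zero. Invoking the hom-ad-invariance identity $\langle \xi, [x, S^\sharp\eta]_\mathfrak{h}\rangle + \langle \eta, [x, S^\sharp\xi]_\mathfrak{h}\rangle = 0$ (cf.\ equation \eqref{eq:symmetric} in the proof of Proposition \ref{prop:hcybS}) with $x$ chosen to be $\Lambda^\sharp\eta$, $\Lambda^\sharp\zeta$, and $\Lambda^\sharp\xi$ respectively, one checks that the three pairs
$\bigl\{-\langle \xi, [\Lambda^\sharp\eta, S^\sharp\zeta]_\mathfrak{h}\rangle,\ \langle \zeta, [S^\sharp\xi, \Lambda^\sharp\eta]_\mathfrak{h}\rangle\bigr\}$,
$\bigl\{-\langle \xi, [S^\sharp\eta, \Lambda^\sharp\zeta]_\mathfrak{h}\rangle,\ \langle \eta, [\Lambda^\sharp\zeta, S^\sharp\xi]_\mathfrak{h}\rangle\bigr\}$, and
$\bigl\{-\langle \eta, [S^\sharp\zeta, \Lambda^\sharp\xi]_\mathfrak{h}\rangle,\ \langle \zeta, [\Lambda^\sharp\xi, S^\sharp\eta]_\mathfrak{h}\rangle\bigr\}$
each cancel. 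Since elements of $\mathfrak{h}^{\otimes 3}$ are determined by their pairings with $(\mathfrak{h}^*)^{\otimes 3}$, equality \eqref{eq:MHCYB} follows.

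The main obstacle is not conceptual but combinatorial: the twelve summands obtained after expansion must be sorted correctly into the three groups described above, with consistent signs, and each cross term must be matched with its proper partner so that the three applications of hom-ad-invariance produce genuine cancellations rather than duplicating or dropping a term. Conceptually, what is being verified is that the symmetric ``polarization'' of the quadratic operator $HCYB$ kills the mixed pair $(\Lambda, S)$ whenever $S$ is hom-ad-invariant, which is precisely the hom-analogue of the classical fact underlying the decomposition of quasi-triangular $r$-matrices into a skew part governed by $\mathrm{CYB}$ and an invariant symmetric part.
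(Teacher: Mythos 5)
Your proof is correct, and its skeleton is the paper's: apply Lemma \ref{lem:HCYBpair} to $r=\Lambda+S$ with $r_+=\Lambda^\sharp+S^\sharp$ and $r_-=\Lambda^\sharp-S^\sharp$, peel off the pure terms as $\langle HCYB(\Lambda),\cdot\rangle$ and $\langle HCYB(S),\cdot\rangle$ (your sign bookkeeping against Proposition \ref{prop:hcybS} is right, as is the preliminary reduction to $\phi_{\mathfrak{h}}^{\otimes 2}(\Lambda)=\Lambda$ and $\phi_{\mathfrak{h}}^{\otimes 2}(S)=S$), and then kill the six cross terms. Where you genuinely diverge is at that last, crucial step. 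The paper groups the cross terms by which of $\xi,\eta,\zeta$ occupies the first slot and tries to cancel within each group via a coordinate identity $\langle\xi,[\Lambda^\sharp(\eta),S^\sharp(\zeta)]_{\mathfrak{h}}\rangle=-\langle\xi,[S^\sharp(\eta),\Lambda^\sharp(\zeta)]_{\mathfrak{h}}\rangle$; but the displayed computation only uses antisymmetry of the bracket (its penultimate line actually reads $-\langle\xi,[S^\sharp(\zeta),\Lambda^\sharp(\eta)]_{\mathfrak{h}}\rangle$, with $\eta$ and $\zeta$ in the opposite order from the stated conclusion), it never invokes hom-ad-invariance of $S$, and even granting the stated identity the middle group $\langle\eta,[\Lambda^\sharp(\zeta),S^\sharp(\xi)]_{\mathfrak{h}}-[S^\sharp(\zeta),\Lambda^\sharp(\xi)]_{\mathfrak{h}}\rangle$ has the wrong relative sign to vanish by it. You instead pair the cross terms \emph{across} the three brackets and apply the hom-ad-invariance identity \eqref{eq:symmetric} with $x=\Lambda^\sharp(\eta)$, $\Lambda^\sharp(\zeta)$, $\Lambda^\sharp(\xi)$ in turn; I checked that your three pairs exhaust the six summands and that each pair does vanish by \eqref{eq:symmetric}. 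Since the cross terms do not cancel for a non-invariant $S$, this is the mechanism that actually closes the argument, so your route both differs from and strengthens the weakest step of the paper's proof.
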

\begin{proof}
For $r=\Lambda+S$, by Lemma \ref{lem:HCYBpair} we have $r_+=\Lambda^\sharp+S^\sharp$ and $r_-=\Lambda^\sharp-S^\sharp$. Then
\begin{eqnarray*}
  &&\langle HCYB(S+\Lambda),\xi\otimes\eta\otimes\zeta\rangle\\
   &= &\langle\xi,[(\Lambda^\sharp-S^\sharp)(\eta),(\Lambda^\sharp-S^\sharp)(\zeta)]_\mathfrak{h}\rangle
+\langle\eta,[(\Lambda^\sharp-S^\sharp)(\zeta),(\Lambda^\sharp+S^\sharp)(\xi)]_\mathfrak{h}\rangle\\
&&+ \langle\zeta,[(\Lambda^\sharp+S^\sharp)(\xi),(\Lambda^\sharp+S^\sharp)(\eta)]_\mathfrak{h}\rangle\\
&= & \langle\xi,[\Lambda^\sharp(\eta),\Lambda^\sharp(\zeta)]-[\Lambda^\sharp(\eta),S^\sharp(\zeta)]
-[S^\sharp(\eta),\Lambda^\sharp(\zeta)+[S^\sharp(\eta),S^\sharp(\zeta)]\rangle\\
&&+\langle\eta,[\Lambda^\sharp(\zeta),\Lambda^\sharp(\xi)]+[\Lambda^\sharp(\zeta),S^\sharp(\xi)]
-[S^\sharp(\zeta),\Lambda^\sharp(\xi)]-[S^\sharp(\zeta),S^\sharp(\xi)]\rangle\\
&&+\langle\zeta,[\Lambda^\sharp(\xi),\Lambda^\sharp(\eta)]+[\Lambda^\sharp(\xi),S^\sharp(\eta)]
+[S^\sharp(\xi),\Lambda^\sharp(\eta)]+[S^\sharp(\xi),S^\sharp(\eta)]\rangle\\
&= & \langle HCYB(\Lambda),\xi\otimes\eta\otimes\zeta\rangle+\langle HCYB(S),\xi\otimes\eta\otimes\zeta\rangle\\
&&+\langle\xi,-[\Lambda^\sharp(\eta),S^\sharp(\zeta)]
-[S^\sharp(\eta),\Lambda^\sharp(\zeta)]\rangle
+\langle\eta,[\Lambda^\sharp(\zeta),S^\sharp(\xi)]
-[S^\sharp(\zeta),\Lambda^\sharp(\xi)]\rangle\\
&&+\langle\zeta,[\Lambda^\sharp(\xi),S^\sharp(\eta)]
+[S^\sharp(\xi),\Lambda^\sharp(\eta)]\rangle.
\end{eqnarray*}
By \eqref{eq:Lambda-sharp} and \eqref{eq:S-sharp}, the definition of $\Lambda^\sharp$ and $S^\sharp$, we have
\begin{eqnarray*}
&&\langle\xi,[\Lambda^\sharp(\eta),S^\sharp(\zeta)]_\mathfrak{h}\rangle\\
&=&\langle\xi,[\sum_{i}\langle\phi_\mathfrak{h}^*(\eta),x_i\rangle y_i-\langle\phi_\mathfrak{h}^*(\eta),y_i\rangle x_i,\sum_{j}\langle\phi_\mathfrak{h}^*(\zeta),x_j\rangle y_j]_\mathfrak{h}\rangle\\
&=&\sum_{i,j}\langle\phi_\mathfrak{h}^*(\eta),x_i\rangle \langle\phi_\mathfrak{h}^*(\zeta),x_j\rangle\langle\xi,[ y_i,y_j]_\mathfrak{h}\rangle-\sum_{i,j}\langle\phi_\mathfrak{h}^*(\eta),y_i\rangle\langle\phi_\mathfrak{h}^*(\zeta),x_j\rangle \langle\xi,[x_i,y_j]_\mathfrak{h}\rangle\\
&=&-\sum_{i,j}\langle\phi_\mathfrak{h}^*(\eta),x_i\rangle \langle\phi_\mathfrak{h}^*(\zeta),x_j\rangle\langle\xi,[ y_j,y_i]_\mathfrak{h}\rangle+\sum_{i,j}\langle\phi_\mathfrak{h}^*(\eta),y_i\rangle\langle\phi_\mathfrak{h}^*(\zeta),x_j\rangle \langle\xi,[y_j,x_i]_\mathfrak{h}\rangle\\
&=&-\langle\xi,[\sum_{j}\langle\phi_\mathfrak{h}^*(\zeta),x_j\rangle y_j,\sum_{i}\langle\phi_\mathfrak{h}^*(\eta),x_i\rangle y_i-\langle\phi_\mathfrak{h}^*(\eta),y_i\rangle x_i]_\mathfrak{h}\rangle\\
&=&-\langle\xi,[S^\sharp(\eta),\Lambda^\sharp(\zeta)]_\mathfrak{h}\rangle.
\end{eqnarray*}
Then
$$
\langle HCYB(S+\Lambda),\xi\otimes\eta\otimes\zeta\rangle=\langle HCYB(\Lambda),\xi\otimes\eta\otimes\zeta\rangle+\langle HCYB(S),\xi\otimes\eta\otimes\zeta\rangle,
$$
which proves the theorem.
\end{proof}

\begin{corollary}
Let $r=\Lambda+S \in \mathfrak{h}\otimes \mathfrak{h}$ with $\Lambda\in\wedge^2\mathfrak{h}$, $S\in S^2 \mathfrak{h}$. If $S$ is hom-ad-invariant and $\phi_{\mathfrak{h}}^{\otimes 2}(\Lambda+S)=\Lambda+S$. A sufficient condition for $HCYB(r)$ to be hom-ad-invariant is
\begin{equation}\label{eq:MHYBE}
HCYB(\Lambda)+HCYB(S)=0.
\end{equation}
The condition $HCYB(r)=0$ is called {\bf  hom-classical Yang-Baxter equation}, abbreviated by $HCYBE$.
\end{corollary}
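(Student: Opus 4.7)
The plan is to derive the corollary as an immediate consequence of Theorem \ref{tem:HCYB}. First, I would verify that the hypotheses of that theorem are satisfied under the given assumptions. Since the involution $\phi_{\mathfrak{h}}^{\otimes 2}$ preserves the canonical decomposition $\mathfrak{h} \otimes \mathfrak{h} = \wedge^2 \mathfrak{h} \oplus S^2 \mathfrak{h}$ (it acts diagonally and commutes with the swap of tensor factors), the single assumption $\phi_{\mathfrak{h}}^{\otimes 2}(\Lambda+S) = \Lambda+S$ splits into the two conditions $\phi_{\mathfrak{h}}^{\otimes 2}(\Lambda) = \Lambda$ and $\phi_{\mathfrak{h}}^{\otimes 2}(S) = S$. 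Combined with the standing hypothesis that $S$ is hom-ad-invariant, this puts us in the exact setting of Theorem \ref{tem:HCYB}, which yields the additive decomposition
\[
HCYB(r) = HCYB(\Lambda) + HCYB(S).
\]

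Invoking the assumed sufficient condition $HCYB(\Lambda) + HCYB(S) = 0$ then forces $HCYB(r) = 0$. Since the zero element of $\wedge^3 \mathfrak{h}$ is tautologically hom-ad-invariant---the adjoint action of any $x \in \mathfrak{h}$ annihilates $0$---the conclusion follows at once, and we recognize the resulting identity $HCYB(r)=0$ as what the authors define to be the hom-classical Yang-Baxter equation.

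There is essentially no technical obstacle here: the entire content of the corollary is packaged into the additivity established in Theorem \ref{tem:HCYB}, and the hypothesis is in fact considerably stronger than mere hom-ad-invariance, since it forces $HCYB(r)$ to vanish outright. The only minor verification required is the splitting of the $\phi_{\mathfrak{h}}^{\otimes 2}$-invariance between the symmetric and skew-symmetric summands, which is automatic.
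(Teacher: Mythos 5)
Your proposal is correct and matches the paper's (implicit) argument: the corollary is stated as an immediate consequence of Theorem \ref{tem:HCYB}, with the hypothesis $HCYB(\Lambda)+HCYB(S)=0$ forcing $HCYB(r)=0$, which is trivially hom-ad-invariant. Your extra observation that $\phi_{\mathfrak{h}}^{\otimes 2}$ commutes with the tensor swap, so that invariance of $\Lambda+S$ splits into invariance of $\Lambda$ and of $S$ separately, is a correct and worthwhile verification that the paper leaves unstated.
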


\begin{definition}
Let $r=\Lambda+S$ be an element in $\mathfrak{h}\otimes\mathfrak{h}$, with the skew-symmetric part $\Lambda$ and symmetric part $S$, such that
\begin{equation}\label{eq:r-condition}
  \phi_{\mathfrak{h}}^{\otimes^2}(r)=r.
\end{equation}
 If $S$ is hom-ad-invariant and $r$  satisfies the $HCYBE$, then $r$ is called the {\bf quasi-triangular hom-$r$-matrix}. If, moreover, $S^\sharp$ is invertible, then $r$ is called {\bf factorizable}. If $r=\Lambda$ is skew-symmetric and satisfies the $HCYBE$, then $r$ is called the {\bf skew-symmetric hom-$r$-matrix}.
\end{definition}

\begin{remark}
The notion of quasi-triangular hom-$r$-matrix has already appeared in \cite{Yau1} and \cite{ShengBai}. Here we set $r=\Lambda+S$ and consider that $r$ has symmetric and antisymmetric parts.
If $\phi_{\mathfrak{h}}= \mathrm{Id}$, then the $HCYBE$ reduces to the $CYBE$, i.e. for all $r=\sum_i x_i\otimes y_i \in \mathfrak{h}\otimes \mathfrak{h}$,
\begin{align*}
  &CYB(r) = \sum_{i,j}([x_{i},x_j]\otimes y_i\otimes y_j+x_{i}\otimes [y_i,x_j]\otimes y_j
   +x_{i}\otimes x_{j}\otimes[y_i,y_j])=0.
\end{align*}
In this case, a solution of $HCYBE$ is just a classical $r$-matrix.
\end{remark}

\begin{example}\label{exam:sl2}
Let $(\mathfrak{h},[\cdot,\cdot]_\mathfrak{h},\phi_\mathfrak{h})$ be an involutive hom-Lie algebra of dimension $3$ with a basis $\{e_1,e_2,e_3\}$. Where $[\cdot,\cdot]_\mathfrak{h}$ is defined by
\begin{equation}\label{eq:sl2}
[e_1,e_2]_\mathfrak{h}=-2e_2,\quad[e_1,e_3]_\mathfrak{h}=2e_3,\quad[e_2,e_3]_\mathfrak{h}=e_1.
\end{equation}

The linear map $\phi_\mathfrak{h}$ is defined by
$$
\phi_\mathfrak{h}(e_1)=e_1,\quad \phi_\mathfrak{h}(e_2)=-e_2,\quad\phi_\mathfrak{h}(e_3)=-e_3.
$$
Set
$$
\Lambda=\frac{1}{2}(e_2\otimes e_3-e_3\otimes e_2),\quad S=\frac{1}{4}e_1\otimes e_1+\frac{1}{2}(e_2\otimes e_3+e_3\otimes e_2).
$$
Then
$$
r=\Lambda+S=e_2\otimes e_3+\frac{1}{4}e_1\otimes e_1
$$
is a quasi-triangular hom-$r$-matrix, $S$ is hom-ad-invariant and  $\phi_{\mathfrak{h}}^{\otimes^2}(r)=r$.
\end{example}

\begin{remark}
Let $\mathfrak{sl}(2,\mathbb{C})$  be a Lie algebra with a basis $\{e_1,e_2,e_3\}$ and the Lie bracket is given by \eqref{eq:sl2}. Set
$$
r=\Lambda+S=e_2\otimes e_3+\frac{1}{4}e_1\otimes e_1.
$$
Then $r$ is not a classical $r$-matrix, i.e.,
$$CYB(r)=-2e_2\otimes e_1\otimes e_3\neq 0.$$
\end{remark}

\subsection{Construction of hom-Poisson structures and proof of Theorem B(1)}

Let $M$ be a smooth manifold of dimension $n$, $TM$ be the tangent bundle and $\varphi:M\rightarrow M$ be a smooth map. Then the pullback map $\varphi^*:C^{\infty}(M)\rightarrow C^{\infty}(M)$ is a morphism of the function ring $C^{\infty}(M)$, i.e.,
\begin{align*}
\varphi^*(fg)&=\varphi^*((\varphi\circ \varphi^*)(f)(\varphi\circ \varphi^*)(g))\\
&=\varphi^*\circ\varphi(\varphi^*(f) \varphi^*(g))\\
&=\varphi^*(f)\varphi^*(g),\quad \forall~f,g\in C^{\infty}(M).
\end{align*}

\begin{lemma} {\rm(\cite{ShengZhen})}
Let $V$ be a vector space, and $\beta\in \mathrm{GL}(V)$. Define a skew-symmetric bilinear map $[\cdot,\cdot]_{\beta}:\wedge^2\mathfrak{gl}(V)\rightarrow \mathfrak{gl}(V)$ by
\begin{equation}\label{eq:rhl}
  [A,B]_\beta=\beta\circ A \circ \beta^{-1}\circ B\circ \beta^{-1}- \beta\circ B \circ \beta^{-1}\circ A\circ \beta^{-1},\quad \forall~A,B\in \mathfrak{gl}(V).
\end{equation}
Define the adjoint action  $Ad_{\beta}:\mathfrak{gl}(V)\rightarrow \mathfrak{gl}(V)$ by
\begin{equation}\label{eq:Adb}
 Ad_{\beta}(A)=\beta\circ A\circ  \beta^{-1}, \quad \forall~A\in \mathfrak{gl}(V).
\end{equation}
Then $(\mathfrak{gl}(V),[\cdot,\cdot]_\beta,Ad_{\beta})$ is a hom-Lie algebra.
\end{lemma}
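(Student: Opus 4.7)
The plan is to verify directly the three defining conditions of a hom-Lie algebra for $(\mathfrak{gl}(V),[\cdot,\cdot]_\beta,Ad_\beta)$: bilinearity and skew-symmetry of $[\cdot,\cdot]_\beta$, the morphism property $Ad_\beta([A,B]_\beta)=[Ad_\beta(A),Ad_\beta(B)]_\beta$, and the hom-Jacobi identity \eqref{eq:Hom-J}. Bilinearity and skew-symmetry are immediate from the defining formula \eqref{eq:rhl}.

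For the morphism property, I would expand both sides and collapse every adjacent pair $\beta^{-1}\circ\beta$ using $\beta\in\mathrm{GL}(V)$. On one hand,
\[
Ad_\beta([A,B]_\beta)=\beta\circ[A,B]_\beta\circ\beta^{-1}=\beta^2\circ A\circ\beta^{-1}\circ B\circ\beta^{-2}-\beta^2\circ B\circ\beta^{-1}\circ A\circ\beta^{-2}.
\]
On the other hand, expanding $[Ad_\beta(A),Ad_\beta(B)]_\beta$ via \eqref{eq:rhl} and \eqref{eq:Adb}, the two interior cancellations $\beta^{-1}\circ\beta=\mathrm{Id}$ produce exactly the same expression, establishing the identity.

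The principal step is the hom-Jacobi identity. I would expand each of the three cyclic double brackets $[Ad_\beta(A),[B,C]_\beta]_\beta$, $[Ad_\beta(B),[C,A]_\beta]_\beta$, $[Ad_\beta(C),[A,B]_\beta]_\beta$ directly from \eqref{eq:rhl} and \eqref{eq:Adb}. Each double bracket collapses, after repeated use of $\beta^{-1}\circ\beta=\mathrm{Id}$, into four monomials of the form
\[
\pm\,\beta^2\circ X\circ\beta^{-1}\circ Y\circ\beta^{-1}\circ Z\circ\beta^{-2},
\]
where $(X,Y,Z)$ is an ordered triple drawn from $(A,B,C)$. Summing the three cyclic contributions yields twelve such monomials; inspection shows that each of the six permutations of $(A,B,C)$ occurs exactly twice with opposite signs, so the total vanishes.

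The only obstacle is bookkeeping: one has to keep track of the exponents of $\beta$ across the nested brackets and then match the resulting monomials by their internal ordering of $A,B,C$. Once the intermediate simplifications $\beta^{-1}\circ\beta=\mathrm{Id}$ are performed at every stage, the pairwise cancellation pattern becomes transparent and the identity follows with no conceptual difficulty.
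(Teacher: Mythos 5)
Your verification is correct: the key computation is that each double bracket $[Ad_\beta(A),[B,C]_\beta]_\beta$ reduces to $\beta^2\circ\bigl(A\beta^{-1}B\beta^{-1}C - A\beta^{-1}C\beta^{-1}B - B\beta^{-1}C\beta^{-1}A + C\beta^{-1}B\beta^{-1}A\bigr)\circ\beta^{-2}$, and the twelve resulting monomials over the cyclic sum do cancel in pairs exactly as you claim, while the morphism identity $Ad_\beta([A,B]_\beta)=[Ad_\beta(A),Ad_\beta(B)]_\beta$ follows from the same collapsing of $\beta^{-1}\circ\beta$. The paper itself gives no proof of this lemma (it is quoted from the cited reference), so your direct verification is the natural argument and there is nothing further to compare it against.
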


\begin{lemma}{\rm(\cite{Teles})}
Let $\varphi^!TM$ be the pullback bundle of $TM$, and set
$$\Gamma(\varphi^!TM)=\{x:M\rightarrow TM~|~x=X\circ \varphi, ~\forall~X\in\Gamma(TM)\}.$$
Then we have
\begin{equation}\label{eq:Derphi}
  x(fg)=x(f)\varphi^*(g)+ \varphi^*(f)x(g), \quad\forall~f,g\in C^{\infty}(M),x\in \Gamma(\varphi^!TM).
\end{equation}
\end{lemma}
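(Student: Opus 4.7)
The plan is to verify this identity pointwise by reducing it to the ordinary Leibniz rule for tangent vectors. Given $x \in \Gamma(\varphi^! TM)$, by definition there exists $X \in \Gamma(TM)$ with $x = X \circ \varphi$, so for every $m \in M$ the value $x(m) = X_{\varphi(m)}$ is a tangent vector at $\varphi(m)$. For functions $f,g \in C^\infty(M)$, the expression $x(f)$ is the function on $M$ given by $x(f)(m) = X_{\varphi(m)}(f)$, which equals $(X(f))(\varphi(m)) = \varphi^*(X(f))(m)$.

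First I would fix $m \in M$ and apply the tangent vector $X_{\varphi(m)}$ to the product $fg \in C^\infty(M)$. Since every tangent vector is a derivation on $C^\infty(M)$ at its basepoint, the standard Leibniz rule gives
\begin{equation*}
X_{\varphi(m)}(fg) \;=\; X_{\varphi(m)}(f)\cdot g(\varphi(m)) + f(\varphi(m))\cdot X_{\varphi(m)}(g).
\end{equation*}
Next I would rewrite each factor in the notation of the lemma: $X_{\varphi(m)}(f) = x(f)(m)$, $X_{\varphi(m)}(g) = x(g)(m)$, and $f(\varphi(m)) = \varphi^*(f)(m)$, $g(\varphi(m)) = \varphi^*(g)(m)$. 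Substituting gives
\begin{equation*}
x(fg)(m) \;=\; x(f)(m)\,\varphi^*(g)(m) + \varphi^*(f)(m)\,x(g)(m),
\end{equation*}
and since $m$ was arbitrary, the identity of functions on $M$ follows.

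There is no real obstacle here: the lemma is essentially a restatement of the fact that $\Gamma(\varphi^! TM)$ is naturally a module over $\varphi^* C^\infty(M)$ whose elements act on $C^\infty(M)$ as derivations twisted by $\varphi^*$, and the only thing to check is that the definitions of $x(f)$ and $\varphi^*$ are compatible with the pointwise Leibniz rule. The one conceptual point worth flagging before the calculation is that $x(f)$ must be interpreted as the function $m \mapsto X_{\varphi(m)}(f)$ rather than as the pullback of a vector-field action, so that the Leibniz rule can be invoked at the single point $\varphi(m)$.
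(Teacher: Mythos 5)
Your proof is correct. The paper itself gives no argument for this lemma (it is quoted from the cited reference of Laurent-Gengoux and Teles), so there is nothing to compare against; your pointwise reduction --- evaluating $x(m)=X_{\varphi(m)}$ as a tangent vector at $\varphi(m)$ and invoking the Leibniz rule there, then identifying $f(\varphi(m))$ with $\varphi^*(f)(m)$ --- is the standard and complete justification. You also correctly read the set-builder condition ``$x=X\circ\varphi,~\forall X$'' as ``for some $X$,'' which is clearly what is intended.
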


\begin{lemma}{\rm(\cite{CaiLiuSheng})}
Define a skew-symmetric bilinear operation $[\cdot,\cdot]_{\varphi^*}:\wedge^2\Gamma(\varphi^!TM) \rightarrow \Gamma(\varphi^!TM)$ by
$$
[x,y]_{\varphi^*}=\varphi^*\circ x \circ (\varphi^*)^{-1}\circ y \circ (\varphi^*)^{-1}
-\varphi^*\circ y \circ (\varphi^*)^{-1}\circ x \circ (\varphi^*)^{-1},\quad \forall~x,y\in \Gamma(\varphi^!TM).
$$
And define a linear map $\mathrm{Ad}_{\varphi^*}:\Gamma(\varphi^!TM)\rightarrow \Gamma(\varphi^!TM)$ by
$$
\mathrm{Ad}_{\varphi^*}(x)=\varphi^*\circ x\circ (\varphi^*)^{-1}, \quad \forall~x\in \Gamma(\varphi^!TM).
$$
Then $(\Gamma(\varphi^!TM),[\cdot,\cdot]_{\varphi^*},\mathrm{Ad}_{\varphi^*})$ is a hom-Lie algebra.
\end{lemma}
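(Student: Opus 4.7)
The plan is to reduce the claim to the preceding $(\mathfrak{gl}(V), [\cdot,\cdot]_\beta, \mathrm{Ad}_\beta)$ lemma, applied with $V = C^\infty(M)$ and $\beta = \varphi^*$. Each element $x = X \circ \varphi \in \Gamma(\varphi^!TM)$ acts on $C^\infty(M)$ by $x(f) := \varphi^*(Xf) = (Xf)\circ\varphi$, and this identifies $\Gamma(\varphi^!TM)$ with a linear subspace of $\mathfrak{gl}(C^\infty(M))$. Under this identification, the bracket $[\cdot,\cdot]_{\varphi^*}$ and the operator $\mathrm{Ad}_{\varphi^*}$ in the statement are precisely the restrictions of $[\cdot,\cdot]_\beta$ and $\mathrm{Ad}_\beta$ from the preceding lemma. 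Hence skew-symmetry, the multiplicativity $\mathrm{Ad}_{\varphi^*}[x,y]_{\varphi^*} = [\mathrm{Ad}_{\varphi^*}x, \mathrm{Ad}_{\varphi^*}y]_{\varphi^*}$, and the hom-Jacobi identity are purely formal identities in $\mathfrak{gl}(C^\infty(M))$ that follow directly from that lemma; no new computation is needed for these three axioms.

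The substantive work lies in verifying that $\Gamma(\varphi^!TM)$ is closed under $[\cdot,\cdot]_{\varphi^*}$ and $\mathrm{Ad}_{\varphi^*}$, i.e., that the outputs again lie in $\Gamma(\varphi^!TM) \subset \mathfrak{gl}(C^\infty(M))$. By the preceding lemma, membership in $\Gamma(\varphi^!TM)$ is equivalent to the twisted Leibniz rule $x(fg) = x(f)\varphi^*(g) + \varphi^*(f)x(g)$. Using that both $\varphi^*$ and $(\varphi^*)^{-1}$ are algebra automorphisms of $C^\infty(M)$, a direct expansion of $\mathrm{Ad}_{\varphi^*}(x)(fg) = \varphi^*\bigl(x((\varphi^*)^{-1}(f)\,(\varphi^*)^{-1}(g))\bigr)$ followed by the twisted Leibniz rule for $x$ at the inner step produces exactly $\mathrm{Ad}_{\varphi^*}(x)(f)\,\varphi^*(g) + \varphi^*(f)\,\mathrm{Ad}_{\varphi^*}(x)(g)$. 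A nested version of the same calculation, applied one factor at a time, yields closure under the bracket $[x,y]_{\varphi^*}$.

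The main obstacle I anticipate is not the hom-Jacobi identity but the invertibility and recovery issue: the construction presupposes that $\varphi^*$ is invertible on $C^\infty(M)$, which forces $\varphi$ to be a diffeomorphism; and the converse direction of the correspondence (that a linear endomorphism of $C^\infty(M)$ satisfying the twisted Leibniz rule actually arises from some vector field $X$ via $x = X\circ \varphi$) requires a standard locality argument — one verifies that $x \circ (\varphi^*)^{-1}$ is an ordinary derivation of $C^\infty(M)$ and so equals $X$ for a unique vector field $X$, whence $x = X \circ \varphi$. Once this correspondence is in place, the proof is a straightforward transfer of the $(\mathfrak{gl}(V), [\cdot,\cdot]_\beta, \mathrm{Ad}_\beta)$ result along the embedding $\Gamma(\varphi^!TM) \hookrightarrow \mathfrak{gl}(C^\infty(M))$, with only the two closure computations to carry out explicitly.
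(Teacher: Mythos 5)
The paper does not prove this lemma at all: it is quoted verbatim from \cite{CaiLiuSheng} and used as a black box, so there is no in-paper argument to compare yours against. On its own terms your proposal is correct. Identifying $x=X\circ\varphi$ with the operator $\varphi^{*}\circ X\in\mathfrak{gl}(C^{\infty}(M))$ does realize $[\cdot,\cdot]_{\varphi^{*}}$ and $\mathrm{Ad}_{\varphi^{*}}$ as restrictions of $[\cdot,\cdot]_{\beta}$ and $\mathrm{Ad}_{\beta}$ with $\beta=\varphi^{*}$, so skew-symmetry, multiplicativity of $\mathrm{Ad}_{\varphi^{*}}$ and the hom-Jacobi identity do come for free from the $(\mathfrak{gl}(V),[\cdot,\cdot]_{\beta},\mathrm{Ad}_{\beta})$ lemma, and the only real content is closure. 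Two small points. First, the Teles lemma as stated in the paper gives only one implication (sections satisfy the twisted Leibniz rule), not the equivalence you invoke; you correctly flag that the converse needs the standard ``derivation of $C^{\infty}(M)$ equals a vector field'' locality argument, but note that the derivation you should extract is $(\varphi^{*})^{-1}\circ x$ (so that $x=\varphi^{*}\circ X=X\circ\varphi$), not $x\circ(\varphi^{*})^{-1}$ as written — the latter is also a derivation but recovers $x$ in the wrong order. Second, you can avoid the converse entirely: writing $x=\varphi^{*}\circ X$, $y=\varphi^{*}\circ Y$ and composing gives $[x,y]_{\varphi^{*}}=\varphi^{*}\circ\bigl(\varphi^{*}\circ[X,Y]\circ(\varphi^{*})^{-1}\bigr)$ and $\mathrm{Ad}_{\varphi^{*}}(x)=\varphi^{*}\circ\bigl(\varphi^{*}\circ X\circ(\varphi^{*})^{-1}\bigr)$, and since conjugating a derivation by the algebra automorphism $\varphi^{*}$ again yields a derivation, hence a vector field, both outputs visibly lie in $\Gamma(\varphi^{!}TM)$ by its very definition. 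With that adjustment the argument is complete.
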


\begin{definition}{\rm(\cite{CaiLiuSheng})}\label{def:Hom-action}
Let $(\mathfrak{h},[\cdot,\cdot]_\mathfrak{h},\phi_\mathfrak{h})$ be a hom-Lie algebra and let $\varphi:M\rightarrow M$ be a diffeomorphism. An {\bf action} of $(\mathfrak{h},[\cdot,\cdot]_\mathfrak{h},\phi_\mathfrak{h})$ on $M$ is a linear map
$\sigma:\mathfrak{h}\rightarrow \Gamma(\varphi^!TM)$, such that, for all $x,y\in \mathfrak{h}$,
\begin{align}
&\sigma(\phi_\mathfrak{h}(x))=\mathrm{Ad}_{\varphi^*}(\sigma(x)), \label{eq:action1}\\
&\sigma([x,y]_{\mathfrak{h}})=[\sigma(x),\sigma(y)]_{\varphi^*}\label{eq:action2}.
\end{align}
\end{definition}

\begin{definition}\label{def:Hom-Poisson}
A  bisection $\pi\in \wedge^2 \Gamma(\varphi^!TM)$ on a smooth manifold $M$ is said to be the {\bf hom-Poisson
tensor} if the hom-Schouten bracket on $\Gamma(\varphi^!TM)$ satisfies $[[\pi,\pi]]_{\Gamma(\varphi^!TM)}=0$ and $\mathrm{Ad}_{\varphi^*}(\pi)=\pi$. A {\bf hom-Poisson structure} is a manifold $M$ equipped
with a hom-Poisson tensor $\pi$. We denote the hom-Poisson structure by $(M,\varphi,\pi)$.
\end{definition}
For a subspace $\mathfrak{q}_y$ of $\mathfrak{h}$, define the \textbf{stabilizer hom-subalgebra} of hom-Lie algebra $(\mathfrak{h},[\cdot,\cdot]_\mathfrak{h},\phi_\mathfrak{h})$ at $y\in M$ as
\begin{equation}\label{stabilizer-subalgebra}
  \mathfrak{q}_y=\{a\in \mathfrak{h}~|~\sigma(a)(y)=0\}\quad\mbox{and}\quad \phi_\mathfrak{h}(\mathfrak{q}_y)\in \mathfrak{q}_y.
\end{equation}
Set
\begin{equation}\label{eq:bot}
  \mathfrak{q}_y^{\bot}=\{\xi\in \mathfrak{h}^*~|~\langle \xi ,x\rangle=0,~\forall~x\in\mathfrak{q}_y\}.
\end{equation}
Let $r=\sum_{i}x_i\otimes y_i\in \mathfrak{h}\otimes \mathfrak{h},~x_i,y_i\in \mathfrak{h}$, and define
\begin{equation*}
  \sigma(r)=\sum_{i} \sigma(x_i)\otimes \sigma(y_i).
\end{equation*}
\begin{theorem}\label{Hom-Poisson}
Let  $r=\Lambda+S\in \mathfrak{h}\otimes \mathfrak{h}$ be a quasi-triangular hom-$r$-matrix, $S^{\sharp}(\mathfrak{q}_y^{\bot})\subset \mathfrak{q}_y$ and $\sigma:\mathfrak{h}\rightarrow \Gamma(\varphi^!TM)$ be an action of $(\mathfrak{h},[\cdot,\cdot]_\mathfrak{h},\phi_\mathfrak{h})$ on a manifold $M$. Then $(M,\varphi,\sigma(r))$ is a hom-Poisson structure.
\end{theorem}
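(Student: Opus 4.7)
The statement to be verified is that $\pi := \sigma(r)$ is a hom-Poisson tensor, which by Definition \ref{def:Hom-Poisson} means (i) $\pi$ is a bisection, i.e.\ skew-symmetric in $\Gamma(\varphi^!TM)^{\otimes 2}$, (ii) $\mathrm{Ad}_{\varphi^*}(\pi)=\pi$, and (iii) $[[\pi,\pi]]_{\Gamma(\varphi^!TM)}=0$. Writing $r=\Lambda+S$ with $\Lambda\in\wedge^2\mathfrak{h}$ and $S$ symmetric and hom-ad-invariant, the strategy is to replace $\sigma(r)$ by $\sigma(\Lambda)$ pointwise via the hypothesis on $S^\sharp$, and then derive $[[\sigma(\Lambda),\sigma(\Lambda)]]=0$ from the $HCYBE$.

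For (i), I first claim that $\sigma(S)$ vanishes at every point $y\in M$, so that $\pi=\sigma(\Lambda)$ as a bivector field. Indeed, since $\ker(\sigma|_y)=\mathfrak{q}_y$, the tensor $(\sigma|_y)^{\otimes 2}(S)$ vanishes iff the image of $S$ in $(\mathfrak{h}/\mathfrak{q}_y)^{\otimes 2}$ vanishes, iff $\langle S,\xi\otimes\eta\rangle=0$ for all $\xi,\eta\in\mathfrak{q}_y^{\bot}$. Using $\phi_\mathfrak{h}^{\otimes 2}(S)=S$ and the definition of $S^\sharp$ in \eqref{eq:S-sharp}, one has $\langle\eta,S^\sharp(\xi)\rangle=\sum_i\langle\phi_\mathfrak{h}^*(\xi),x_i\rangle\langle\eta,y_i\rangle$, and since $\phi_\mathfrak{h}(\mathfrak{q}_y)\subset\mathfrak{q}_y$ implies $\phi_\mathfrak{h}^*(\mathfrak{q}_y^{\bot})\subset\mathfrak{q}_y^{\bot}$, the assumption $S^\sharp(\mathfrak{q}_y^{\bot})\subset\mathfrak{q}_y$ forces the pairing to vanish. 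For (ii), $\phi_\mathfrak{h}^{\otimes 2}(r)=r$ together with the fact that $\phi_\mathfrak{h}^{\otimes 2}$ respects the symmetric/skew decomposition yields $\phi_\mathfrak{h}^{\otimes 2}(\Lambda)=\Lambda$, and then the action property \eqref{eq:action1} applied factor-wise gives $\mathrm{Ad}_{\varphi^*}\sigma(\Lambda)=\sigma(\Lambda)$.

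The heart of the proof is (iii). First I extend the action $\sigma$ to a morphism of graded hom-Schouten brackets: since $\sigma$ satisfies \eqref{eq:action1} and \eqref{eq:action2}, the rules \eqref{eq:hom-Schouten-bracket-1}, \eqref{eq:hom-Schouten-bracket-2} propagate to give $\sigma^{\otimes(p+q-1)}([[X,Y]])=[[\sigma^{\otimes p}(X),\sigma^{\otimes q}(Y)]]$ for $X\in\wedge^p\mathfrak{h}$, $Y\in\wedge^q\mathfrak{h}$. Combining \eqref{eq:HCYBa} with the $HCYBE$ \eqref{eq:MHYBE} one gets $\tfrac12[[\Lambda,\Lambda]]=HCYB(\Lambda)=-HCYB(S)$, so applying $\sigma^{\otimes 3}$ yields
\begin{equation*}
[[\sigma(\Lambda),\sigma(\Lambda)]]_{\Gamma(\varphi^!TM)}=-2\,\sigma^{\otimes 3}\bigl(HCYB(S)\bigr).
\end{equation*}
It therefore suffices to show $\sigma^{\otimes 3}(HCYB(S))$ vanishes pointwise.

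For the pointwise vanishing at $y\in M$, since $\ker(\sigma|_y)=\mathfrak{q}_y$, I need the image of $HCYB(S)\in\wedge^3\mathfrak{h}$ in $\wedge^3(\mathfrak{h}/\mathfrak{q}_y)$ to be zero, i.e.\ $\langle HCYB(S),\xi\otimes\eta\otimes\zeta\rangle=0$ for all $\xi,\eta,\zeta\in\mathfrak{q}_y^{\bot}$. By Proposition~\ref{prop:hcybS} this pairing equals $\langle\zeta,[S^\sharp(\xi),S^\sharp(\eta)]_\mathfrak{h}\rangle$. The hypothesis gives $S^\sharp(\xi),S^\sharp(\eta)\in\mathfrak{q}_y$, and since $\mathfrak{q}_y$ is closed under $[\cdot,\cdot]_\mathfrak{h}$ (which follows from $\sigma([x,x']_\mathfrak{h})=[\sigma(x),\sigma(x')]_{\varphi^*}$ evaluated at $y$, using that $[\cdot,\cdot]_{\varphi^*}$ kills sections vanishing at $y$), the bracket $[S^\sharp(\xi),S^\sharp(\eta)]_\mathfrak{h}$ lies in $\mathfrak{q}_y$, and hence pairs trivially with $\zeta\in\mathfrak{q}_y^{\bot}$. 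This completes (iii). The main obstacle I expect is the bookkeeping in the compatibility $\sigma^{\otimes\bullet}\circ[[\cdot,\cdot]]=[[\cdot,\cdot]]\circ\sigma^{\otimes\bullet}$ in the hom-setting, since \eqref{eq:hom-Schouten-bracket-2} mixes $\phi_\mathfrak{h}^{\otimes\bullet}$-twists with the bracket and these must be shown to correspond exactly, via \eqref{eq:action1}, to $\mathrm{Ad}_{\varphi^*}^{\otimes\bullet}$-twists of the image bisection; the pointwise verification of the subalgebra property of $\mathfrak{q}_y$ in the hom-setting is a closely related subtlety.
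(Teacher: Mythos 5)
Your proposal is correct and follows essentially the same route as the paper's proof: first showing $\sigma(S)=0$ pointwise from $S^{\sharp}(\mathfrak{q}_y^{\bot})\subset\mathfrak{q}_y$, then pushing the hom-Schouten bracket through $\sigma$ to reduce $[[\sigma(\Lambda),\sigma(\Lambda)]]$ to $\sigma^{\otimes 3}([[\Lambda,\Lambda]])=-2\sigma^{\otimes 3}(HCYB(S))$ via the $HCYBE$, and finally using Proposition \ref{prop:hcybS} to get the pointwise vanishing. You are in fact slightly more careful than the paper in spelling out why $[S^\sharp(\xi),S^\sharp(\eta)]_\mathfrak{h}$ lies in $\mathfrak{q}_y$ (closure of the stabilizer under the bracket), a step the paper leaves implicit.
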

\begin{proof}
We first prove that $\sigma(r)=\sigma(\Lambda)$, i.e. $\sigma(S)=0$. Let $r=\sum_{i}x_i\otimes y_i,~x_i,y_i\in \mathfrak{h}$. For all $\xi_1,\xi_2\in \mathfrak{q}_y^{\bot}$, by  $S^{\sharp}(\mathfrak{q}_y^{\bot})\subset \mathfrak{q}_y$ we have
\begin{align*}
\langle S^{\sharp}(\xi_1),\xi_2\rangle=\langle\sum_{i}\langle\phi_\mathfrak{h}^*(\xi_1),x_i\rangle y_i,\xi_2\rangle=0.
\end{align*}
Then $S\in \mathfrak{q}_y\otimes \mathfrak{h}$, which implies that $\sigma(S)=0$.

By  the the definition of hom-Schouten bracket and  $\sigma$, for all $\Lambda=\sum_{i}x_i\wedge y_i$, we have
\begin{align*}
[[\sigma(\Lambda),\sigma(\Lambda)]]_{\Gamma(\varphi^!TM)}=&\sum_{i,j}[[\sigma(x_i)\wedge \sigma(y_i),\sigma(x_j)\wedge \sigma(y_j)]]_{\Gamma(\varphi^!TM)}\\
=&\sum_{i,j}\bigg([\sigma(x_i),\sigma(x_j)]_{\varphi^*}\wedge \mathrm{Ad}_{\varphi^*}(\sigma(y_i))\wedge \mathrm{Ad}_{\varphi^*}(\sigma(y_j))\\
&-[\sigma(x_i),\sigma(y_j)]_{\varphi^*}\wedge \mathrm{Ad}_{\varphi^*}(\sigma(y_i))\wedge \mathrm{Ad}_{\varphi^*}(\sigma(x_j))\\
&-[\sigma(y_i),\sigma(x_j)]_{\varphi^*}\wedge \mathrm{Ad}_{\varphi^*}(\sigma(x_i))\wedge \mathrm{Ad}_{\varphi^*}(\sigma(y_j))\\
&+[\sigma(y_i),\sigma(y_j)]_{\varphi^*}\wedge \mathrm{Ad}_{\varphi^*}(\sigma(x_i))\wedge \mathrm{Ad}_{\varphi^*}(\sigma(x_j))
\bigg)\\
=&\sum_{i,j}\bigg(\sigma([x_i,x_j]_{\mathfrak{h}})\wedge \sigma(\phi_\mathfrak{h}(y_i))\wedge \sigma(\phi_\mathfrak{h}(y_j))\\
&-\sigma([x_i,y_j]_{\mathfrak{h}})\wedge \sigma(\phi_\mathfrak{h}(y_i))\wedge \sigma(\phi_\mathfrak{h}(x_j))\\
&-\sigma([y_i,x_j]_{\mathfrak{h}})\wedge \sigma(\phi_\mathfrak{h}(x_i))\wedge \sigma(\phi_\mathfrak{h}(y_j))\\
&+\sigma([y_i,y_j]_{\mathfrak{h}})\wedge \sigma(\phi_\mathfrak{h}(x_i))\wedge \sigma(\phi_\mathfrak{h}(x_j))
\bigg)\\
=&\sigma([[\Lambda,\Lambda]]).
\end{align*}

We next prove that
\begin{align*}
[[\sigma(r),\sigma(r)]]_{\Gamma(\varphi^!TM)}=[[\sigma(\Lambda),\sigma(\Lambda)]]_{\Gamma(\varphi^!TM)}=\sigma([[\Lambda,\Lambda]])=0.
\end{align*}
By $r$ is a quasi-triangular hom-$r$-matrix and Theorem \ref{tem:HCYB} we have
\begin{align*}
HCYB(\Lambda)+HCYB(S)= \frac{1}{2}[[\Lambda,\Lambda]]+HCYB(S)=0.
\end{align*}
Then by Proposition \ref{prop:hcybS} and $S^{\sharp}(\mathfrak{q}_y^{\bot})\subset \mathfrak{q}_y$, for all $\xi_1,\xi_2,\xi_3\in \mathfrak{q}_y^{\bot}$ we have
\begin{align*}
\langle[[\Lambda,\Lambda]],\xi_1\otimes\xi_2\otimes\xi_3\rangle&=-2\langle HCYB(S),\xi_1\otimes\xi_2\otimes\xi_3\rangle\\
&=-2\langle\xi_3,[S^\sharp(\xi_1),S^\sharp(\xi_2)]_\mathfrak{h}\rangle\\
&=0,
\end{align*}
which implies that
 $$[[\Lambda,\Lambda]]\in \mathfrak{q}_y\wedge \mathfrak{h}\wedge \mathfrak{h},\quad\forall~y\in M.$$
Therefore, $[[\sigma(r),\sigma(r)]]_{\Gamma(\varphi^!TM)}=0$.

 By \eqref{eq:action1}, the definition of $\sigma$, we have
\begin{align*}
\mathrm{Ad}_{\varphi^*}(\sigma(r))=\sigma(\phi_\mathfrak{h}^{\otimes 2}(r))=\sigma(r),
\end{align*}
which completes the proof.
\end{proof}
The following results can be proved in the same way as the above theorem.
\begin{corollary}\label{coro:Lambda+S}
Let $r=\Lambda+S\in \mathfrak{h}\otimes \mathfrak{h}$ be a quasi-triangular hom-$r$-matrix. Then $(M,\varphi,\sigma(\Lambda))$ is a hom-Poisson structure if and only if $$[S^\sharp(\mathfrak{q}_y^{\bot}),S^\sharp(\mathfrak{q}_y^{\bot})]_\mathfrak{h}\subset \mathfrak{q}_y.$$
\end{corollary}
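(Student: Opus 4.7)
The plan is to follow the template of the proof of Theorem \ref{Hom-Poisson} and locate the exact point where the hypothesis $S^{\sharp}(\mathfrak{q}_{y}^{\bot})\subset\mathfrak{q}_{y}$ was used, then replace it by the weaker necessary-and-sufficient bracket condition via a pointwise duality argument.

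First, the invariance condition $\mathrm{Ad}_{\varphi^{*}}(\sigma(\Lambda))=\sigma(\Lambda)$ is automatic: since $\phi_{\mathfrak{h}}^{\otimes 2}$ preserves the decomposition of $\mathfrak{h}\otimes\mathfrak{h}$ into symmetric and skew-symmetric parts, the hypothesis $\phi_{\mathfrak{h}}^{\otimes 2}(r)=r$ forces $\phi_{\mathfrak{h}}^{\otimes 2}(\Lambda)=\Lambda$, and then \eqref{eq:action1} yields the claim. So only the hom-Jacobi identity $[[\sigma(\Lambda),\sigma(\Lambda)]]_{\Gamma(\varphi^{!}TM)}=0$ needs to be characterized. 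Repeating the Schouten computation from the proof of Theorem \ref{Hom-Poisson} verbatim gives
\begin{equation*}
[[\sigma(\Lambda),\sigma(\Lambda)]]_{\Gamma(\varphi^{!}TM)}=\sigma([[\Lambda,\Lambda]]),
\end{equation*}
and since $r$ is a quasi-triangular hom-$r$-matrix, combining \eqref{eq:MHYBE} with \eqref{eq:HCYBa} yields $[[\Lambda,\Lambda]]=-2\,HCYB(S)$. Hence the corollary reduces to showing that $\sigma(HCYB(S))=0$ if and only if $[S^{\sharp}(\mathfrak{q}_{y}^{\bot}),S^{\sharp}(\mathfrak{q}_{y}^{\bot})]_{\mathfrak{h}}\subset\mathfrak{q}_{y}$ for every $y\in M$.

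For this translation I would fix $y\in M$ and work with the evaluation map $\sigma_{y}:\mathfrak{h}\to T_{y}M$, whose kernel is by definition $\mathfrak{q}_{y}$. Then $\sigma_{y}^{\otimes 3}(\omega)=0$ for $\omega\in\mathfrak{h}^{\otimes 3}$ if and only if $\omega$ annihilates $(\mathfrak{q}_{y}^{\bot})^{\otimes 3}$ under the canonical pairing. Applied to $\omega=HCYB(S)$ and the explicit formula from Proposition \ref{prop:hcybS}, this says $\sigma(HCYB(S))(y)=0$ is equivalent to
\begin{equation*}
\langle\xi_{3},[S^{\sharp}(\xi_{1}),S^{\sharp}(\xi_{2})]_{\mathfrak{h}}\rangle=0,\qquad\forall\,\xi_{1},\xi_{2},\xi_{3}\in\mathfrak{q}_{y}^{\bot}.
\end{equation*}
Letting $\xi_{3}$ range freely and invoking $(\mathfrak{q}_{y}^{\bot})^{\bot}=\mathfrak{q}_{y}$ then gives the desired equivalence on the nose.

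The main obstacle, and what distinguishes this corollary from the one-sided Theorem \ref{Hom-Poisson}, is the converse direction of the duality step above: one has to recognize that the old sufficient condition $S^{\sharp}(\mathfrak{q}_{y}^{\bot})\subset\mathfrak{q}_{y}$ is a strict strengthening of the bracket-containment condition, and that the sharpness of the pairing $(\mathfrak{q}_{y}^{\bot})^{\bot}=\mathfrak{q}_{y}$ in the $\xi_{3}$-slot is exactly what turns ``sufficient'' into ``necessary and sufficient''. No ingredient beyond this pointwise duality, plus the identifications already established in Theorem \ref{tem:HCYB} and Proposition \ref{prop:hcybS}, should be required.
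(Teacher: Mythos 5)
Your proposal is correct and follows the same route the paper intends: the identity $[[\sigma(\Lambda),\sigma(\Lambda)]]=\sigma([[\Lambda,\Lambda]])=-2\,\sigma(HCYB(S))$ from the proof of Theorem \ref{Hom-Poisson} combined with the pairing formula of Proposition \ref{prop:hcybS}, with the only genuinely new ingredient being the pointwise duality $(\mathfrak{q}_y^{\bot})^{\bot}=\mathfrak{q}_y$ that upgrades the sufficient condition to an equivalence. The paper itself only remarks that the corollary ``can be proved in the same way as the above theorem,'' and your write-up supplies exactly the missing converse step correctly.
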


\begin{corollary}\label{coro:Lambda}
Let $r=\Lambda$ be a skew-symmetric hom-$r$-matrix. Then $(M,\varphi,\sigma(\Lambda))$ is a hom-Poisson structure.
\end{corollary}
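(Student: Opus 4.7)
The claim is the skew-symmetric specialization of the two preceding results, so the plan is simply to apply them in the degenerate case. When $r=\Lambda$ is skew-symmetric, the symmetric part $S$ vanishes identically; consequently $S^{\sharp}:\mathfrak{h}^{*}\to\mathfrak{h}$ is the zero map, so the inclusion $S^{\sharp}(\mathfrak{q}_{y}^{\bot})=\{0\}\subset\mathfrak{q}_{y}$ holds trivially for every $y\in M$. Thus the hypothesis of Theorem \ref{Hom-Poisson} is satisfied vacuously, and its conclusion yields directly that $(M,\varphi,\sigma(\Lambda))$ is a hom-Poisson structure. Equivalently one may invoke Corollary \ref{coro:Lambda+S}, whose hypothesis $[S^{\sharp}(\mathfrak{q}_{y}^{\bot}),S^{\sharp}(\mathfrak{q}_{y}^{\bot})]_{\mathfrak{h}}\subset\mathfrak{q}_{y}$ reduces to $0\subset\mathfrak{q}_{y}$.

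For a self-contained derivation one would verify the two conditions of Definition \ref{def:Hom-Poisson} directly. The hom-Jacobi identity for $\sigma(\Lambda)$ follows from the identity $HCYB(\Lambda)=\tfrac{1}{2}[[\Lambda,\Lambda]]$ of \eqref{eq:HCYBa}: since $\Lambda$ satisfies the $HCYBE$ by definition of a skew-symmetric hom-$r$-matrix, $[[\Lambda,\Lambda]]=0$, and the action property \eqref{eq:action2}, extended to the hom-Schouten bracket on $\wedge^{\bullet}\mathfrak{h}$, gives
\[
[[\sigma(\Lambda),\sigma(\Lambda)]]_{\Gamma(\varphi^{!}TM)} \;=\; \sigma([[\Lambda,\Lambda]]) \;=\; 2\,\sigma\bigl(HCYB(\Lambda)\bigr) \;=\; 0.
\]
The $\mathrm{Ad}_{\varphi^{*}}$-invariance follows from \eqref{eq:action1} combined with the compatibility $\phi_{\mathfrak{h}}^{\otimes 2}(\Lambda)=\Lambda$ built into the definition of hom-$r$-matrix through \eqref{eq:r-condition}, giving $\mathrm{Ad}_{\varphi^{*}}(\sigma(\Lambda))=\sigma(\phi_{\mathfrak{h}}^{\otimes 2}(\Lambda))=\sigma(\Lambda)$.

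There is no genuine obstacle here: the corollary is a clean degenerate case of Theorem \ref{Hom-Poisson}, functioning as a sanity check that skew-symmetric solutions of the $HCYBE$ always produce hom-Poisson tensors under any hom-Lie algebra action, without any further stabilizer hypothesis.
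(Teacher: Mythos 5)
Your proof is correct and takes essentially the same route as the paper, which gives no separate argument for this corollary beyond the remark that it "can be proved in the same way as the above theorem": setting $S=0$ makes $S^{\sharp}=0$, so the stabilizer hypothesis of Theorem \ref{Hom-Poisson} holds vacuously and the conclusion follows. Your self-contained check via $[[\sigma(\Lambda),\sigma(\Lambda)]]=\sigma([[\Lambda,\Lambda]])=2\sigma(HCYB(\Lambda))=0$ and the $\mathrm{Ad}_{\varphi^{*}}$-invariance from \eqref{eq:r-condition} is exactly the specialization of the paper's argument.
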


\begin{remark}\label{Poisson}
In particular, in Definition \ref{def:Hom-action} and Definition \ref{def:Hom-Poisson}, actions of hom-Lie algebras $(\mathfrak{h},[\cdot,\cdot]_\mathfrak{h},\phi_\mathfrak{h})$ with $\phi_\mathfrak{h}=Id$ on $\mathfrak{h}$ and hom-Poisson structures $(M,\varphi,\pi)$ with $\varphi=Id$ on $M$ are exactly the actions of Lie algebras and Poisson structures, respectively. Let $\mathfrak{h}= \mathfrak{g}$ be a Lie algebra and let $r=\Lambda+S\in \mathfrak{g}\otimes \mathfrak{g}$ be a quasi-triangular $r$-matrix. If $\mathfrak{q}_m$ is the stabilizer Lie subalgebra at $m\in M$, i.e. $\mathfrak{q}_{m}:=\{a\in \mathfrak{g}:\sigma(a)(m)=0\}$,   then by Corollary \ref{coro:Lambda+S} the bivector field $\sigma(\Lambda)$ is a Poisson structure if and only if $[\mathfrak{q}_{m}^{\perp},\mathfrak{q}_{m}^{\perp}]\subset \mathfrak{q}_{m}$.
\end{remark}
\section{Poisson homogeneous spaces associated to polyubles}
\subsection{Poisson Lie group and Manin triple}
\begin{definition}{\rm(\cite{PGLecture})}
Let $G$ be a Lie group and $\pi_G$ a Poisson structure. The pair $(G,\pi_G)$ is called {\rm \textbf{a Poisson Lie group}} if it satisfies one of following equivalent conditions:
\begin{itemize}
\item Equipping $G\times G$ with the product Poisson structure, the group multiplication map is a Poisson map.
\item $\pi_{G}(gh)=l_g(\pi_G(h))+r_h(\pi_G(g))$.
\end{itemize}
\end{definition}

\begin{definition}{\rm(\cite{PGLecture})}
 A pair $(\mathfrak{g},\nu)$, where $\mathfrak{g}$ is a Lie algebra and $\nu:\mathfrak{g}\rightarrow \wedge^2 \mathfrak{g}$ is a linear map, is called {\rm \textbf{a Lie bialgebra}} , if it satisfies that
\begin{itemize}
\item $\nu[x,y]=[x,\nu (y)]-[y,\nu (x)]$, $x,y\in \mathfrak{g}$;
\item the dual map of $\nu$: $\nu^*:\wedge^2 \g^*\rightarrow \fg^*$ is a Lie bracket on $\fg^*$.
\end{itemize}
\end{definition}

Let $G$ be a Lie group with a Poisson structure $\pi_G$. Define the linear map $d_e\pi_G: \fg\rightarrow \wedge^2 \fg$
by
\[(d_e\pi_G)(x)=(L_{\tilde{x}}\pi_{G})(e),\]
where for $x\in \g$, $\tilde{x}$ is a vector field of $G$ such that $\tilde{x}(e)=x$ and $L_{\tilde{x}}$ is the Lie derivative.
\begin{proposition}{\rm(\cite{PGLecture})}
The Poisson manifold $(G,\pi_G)$ is a Poisson Lie group if and only if $(\fg,d_e\pi_G)$ is a Lie bialgebra.
\end{proposition}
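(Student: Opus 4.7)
The plan is to prove the equivalence by separately matching multiplicativity of $\pi_G$ to the cocycle-type axiom on $\nu := d_e\pi_G$, and the Jacobi identity for $\pi_G$ to the Jacobi identity for $\nu^*$.

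For the first axiom, the multiplicativity condition $\pi_G(gh) = l_g\pi_G(h) + r_h\pi_G(g)$, evaluated at $g = h = e$, forces $\pi_G(e) = 0$, so $\nu$ is unambiguously defined. Differentiating the multiplicativity identity first in $h$ along a path with initial velocity $y \in \fg$ and then in $g$ along a path with initial velocity $x \in \fg$, and using that the term $l_g\pi_G(h)$ depends on $g$ only through left translation (symmetrically for $r_h\pi_G(g)$), I would obtain the infinitesimal $1$-cocycle identity
\[
\nu([x,y]) = [x,\nu(y)] - [y,\nu(x)],
\]
where $[x,-]$ denotes the adjoint action of $\fg$ extended to $\wedge^2 \fg$. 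This is exactly the first axiom of a Lie bialgebra, and conversely a $1$-cocycle on $\fg$ integrates uniquely on a simply connected $G$ to a multiplicative bivector field by the standard integration theorem.

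For the second axiom, I would show that $\pi_G$ is Poisson iff $\nu^*$ is a Lie bracket on $\fg^*$. The key reduction is that the Schouten bracket of two multiplicative bivector fields is again multiplicative, hence $[\pi_G,\pi_G]$ is multiplicative; and any multiplicative $3$-tensor that vanishes at $e$ vanishes identically. Therefore $\pi_G$ is Poisson iff $[\pi_G,\pi_G](e) = 0$. Extending covectors $\xi,\eta,\zeta \in \fg^*$ to right-invariant $1$-forms on $G$, and expanding $\pi_G$ to first order around $e$ using $\pi_G(e) = 0$ and $d_e\pi_G = \nu$, a direct computation identifies $\langle[\pi_G,\pi_G](e),\,\xi\otimes\eta\otimes\zeta\rangle$ with (a sign times) the Jacobiator of $\nu^*$ evaluated on $(\xi,\eta,\zeta)$. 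So vanishing of $[\pi_G,\pi_G]$ is equivalent to Jacobi for $\nu^*$, i.e. to $\nu^*$ being a Lie bracket.

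The main obstacle is the Schouten calculation at $e$ in the second step: it requires a careful first-order expansion of $\pi_G$ near $e$ and a bookkeeping of which terms of $[\pi_G,\pi_G]$ survive once one exploits $\pi_G(e) = 0$. The cleanest organization is to write $\pi_G = \pi_G^L - \pi_G^R$ modulo higher-order terms near $e$, where $\pi_G^L,\pi_G^R$ are the left- and right-invariant tensors extending $\nu$; the diagonal contributions $[\pi_G^L,\pi_G^L]$ and $[\pi_G^R,\pi_G^R]$ vanish by left- respectively right-invariance, leaving only a cross-term that reassembles into the dual Jacobiator of $\nu^*$. Once this computation is in hand, both implications follow, with the converse direction (constructing $\pi_G$ from $(\fg,\nu)$) supplied by Drinfeld's integration of Lie bialgebras.
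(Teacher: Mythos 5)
The paper does not actually prove this proposition: it is quoted from \cite{PGLecture} and used as a black box, so there is no in-paper argument to compare against. Judged on its own terms, your first half (multiplicativity of $\pi_G$ $\Leftrightarrow$ the $1$-cocycle identity for $\nu=d_e\pi_G$, starting from $\pi_G(e)=0$) is the standard argument and is fine in the forward direction. The converse as you state it is not: Drinfeld integration produces \emph{some} multiplicative Poisson bivector on the simply connected group with linearization $\nu$, but nothing in your argument identifies it with the \emph{given} $\pi_G$, and indeed the literal biconditional fails without that identification (on $G=\Rset^2$ abelian, $\pi=x_1^2\,\partial_1\wedge\partial_2$ is Poisson, vanishes at $e$, and has $d_e\pi=0$, so $(\fg,d_e\pi)$ is trivially a Lie bialgebra, yet $\pi$ is not multiplicative). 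You should either assume multiplicativity throughout and prove the equivalence ``Poisson $\Leftrightarrow$ co-Jacobi,'' or state the converse as the existence/uniqueness half of Drinfeld's correspondence.

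The second half contains a concrete error. Every multiplicative $k$-vector field $K$ satisfies $K(e)=0$ automatically (put $g=h=e$ in the multiplicativity identity), so your reduction ``$\pi_G$ is Poisson iff $[\pi_G,\pi_G](e)=0$'' is vacuous --- the right-hand side always holds --- and the pairing $\langle[\pi_G,\pi_G](e),\xi\otimes\eta\otimes\zeta\rangle$ you propose to compute is identically zero whether or not Jacobi holds (every term of the Schouten bracket carries an undifferentiated factor of $\pi_G$, which vanishes at $e$). The correct lemma is that a multiplicative tensor on a \emph{connected} group vanishes iff its intrinsic derivative at $e$ vanishes; the object to compute is therefore $d_e[\pi_G,\pi_G]:\fg\to\wedge^3\fg$, which is quadratic in $\nu$ and equals, up to a constant, the transpose of the Jacobiator of $\nu^*$. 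Your proposed ``cleanest organization'' is also off: $\nu$ is a map $\fg\to\wedge^2\fg$, not an element of $\wedge^2\fg$, so ``the left-invariant tensor extending $\nu$'' is undefined unless $\nu$ is a coboundary, and even in the coboundary case $\pi_G=\Lambda^L-\Lambda^R$ it is the cross-term $[\Lambda^L,\Lambda^R]$ that vanishes (left- and right-invariant multivector fields Schouten-commute), while the diagonal terms $[\Lambda^L,\Lambda^L]=[\Lambda,\Lambda]^L$ are exactly what survive --- the opposite of what you assert.
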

Let $\g$ be a Lie algebra and suppose that $[\ ,\ ]_{\nu^*}$ is a Lie bracket on $\fg^*$, whose dual is denoted by $\nu: \g\rightarrow \wedge^2\g$.
Let $\langle\ ,\ \rangle$ be the non-degenerate symmetric bilinear form on the direct sum vector space $\g\oplus\g^*$ given by
\[\langle x+\xi,y+\eta\rangle=(x,\eta)+(y,\xi),\ x,y\in\g,\ \xi,\eta\in \g^*,\]
where $(\ ,\ )$ denotes the pairing between $\g$ and $\g^*$.
For $x\in\g$ and $\xi\in \g^*$, denote $\mathrm{ad}_x^*\xi\in\g^*$ and $\mathrm{ad}_\xi^*x\in\g$, respectively by
\[\langle \mathrm{ad}_x^*\xi,y\rangle=\langle \xi,[y,x]\rangle,\ \langle \mathrm{ad}_\xi^*,\eta\rangle=\langle x,[\eta,\xi]_{\nu^*}\rangle,\ y\in\g,\eta\in \g^*.\]
Define the linear map on $\fg\oplus \fg^*$ by
\begin{equation}\label{eq:Manin}
[x+\xi,y+\eta]=[x,y]-\mathrm{ad}_\eta^*x+\mathrm{ad}^*_\xi y+[\xi,\eta]_{\nu^*}+\mathrm{ad}_x^*\eta-\mathrm{ad}_y^*\xi,
\end{equation}
where $x,y\in \g$ and $\xi,\eta\in \g^*$. It is remarkable that
\begin{proposition}{\rm(\cite{PGLecture})}\label{Pro:PM}
The pair $(g,\nu)$ is a Lie bialgebra if and only if $((\g\oplus \g^*,\langle,\rangle),\g,\g^*)$ is a Manin triple.
\end{proposition}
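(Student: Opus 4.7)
The plan is to verify the Manin-triple axioms and the bialgebra axioms and observe that the nontrivial content of each matches. Three items are needed for a Manin triple: (i) the formula \eqref{eq:Manin} defines a Lie bracket on $\fg\oplus\fg^*$; (ii) the form $\lr$ is ad-invariant with respect to this bracket; (iii) $\fg$ and $\fg^*$ are Lagrangian Lie subalgebras. Conversely, a Lie bialgebra requires that $[\cdot,\cdot]_{\nu^*}$ be a Lie bracket on $\fg^*$ and that the compatibility $\nu[x,y]=[x,\nu(y)]-[y,\nu(x)]$ holds. My strategy is to show that all ``easy'' items---skew-symmetry, Lagrangianity, and invariance of $\lr$---hold automatically from the definitions, and that the Jacobi identity for \eqref{eq:Manin} decomposes into exactly the Jacobi identity on $\fg$, the Jacobi identity on $\fg^*$, and the cocycle condition.

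First, I would note that skew-symmetry of \eqref{eq:Manin} is visible by inspection, and that $\fg, \fg^*$ are Lagrangian simply by the definition of $\lr$ (both are half-dimensional and pair trivially with themselves). The subalgebra property of $\fg$ is automatic, and $\fg^*$ is a subalgebra iff $[\cdot,\cdot]_{\nu^*}$ satisfies Jacobi. Next I would verify ad-invariance of $\lr$ by checking the cases $(x,y,\xi)$, $(x,\xi,\eta)$ and their transposes: for instance, for $x\in\fg$ and $\xi,\eta\in\fg^*$,
\[
\langle[x,\xi],\eta\rangle+\langle\xi,[x,\eta]\rangle
= -\langle\mathrm{ad}_\xi^*x,\eta\rangle-\langle\mathrm{ad}_\eta^*x,\xi\rangle
= -\langle x,[\eta,\xi]_{\nu^*}\rangle-\langle x,[\xi,\eta]_{\nu^*}\rangle=0,
\]
using skew-symmetry of $[\cdot,\cdot]_{\nu^*}$; the remaining cases are analogous.

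The core of the argument is the Jacobi identity for \eqref{eq:Manin}, which I would analyze case by case depending on how many arguments come from $\fg$ versus $\fg^*$. The pure cases $(\fg,\fg,\fg)$ and $(\fg^*,\fg^*,\fg^*)$ reduce, respectively, to Jacobi in $\fg$ and Jacobi in $\fg^*$. The mixed cases $(x,y,\xi)$ with $x,y\in\fg, \xi\in\fg^*$ split into a $\fg$-component and a $\fg^*$-component: the $\fg$-component expands (using the formulas for $\mathrm{ad}^*$) into the statement that the map $x\mapsto \mathrm{ad}_\xi^*x$ is a derivation, while the $\fg^*$-component, when paired against an arbitrary $\eta\in\fg^*$, is exactly the cocycle identity $\nu[x,y]=[x,\nu(y)]-[y,\nu(x)]$ evaluated on $\xi\wedge\eta$. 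The case $(x,\xi,\eta)$ is symmetric and produces the same cocycle condition by duality together with the Jacobi identity for $\nu^*$.

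Putting these together, the Jacobi identity for \eqref{eq:Manin} is equivalent to Jacobi on $\fg^*$ plus the cocycle condition, which is precisely the Lie bialgebra axiom for $(\fg,\nu)$; combined with the automatic items above, this gives the equivalence in both directions. The main obstacle I expect is the careful bookkeeping in the mixed Jacobi calculation: translating between the bracket on $\fg\oplus\fg^*$ and the coadjoint-style formulas for $\mathrm{ad}_x^*\xi$ and $\mathrm{ad}_\xi^*x$ so that the $\fg^*$-component of the mixed Jacobi is unambiguously recognized as the cocycle condition. I would handle this by dualizing everything into an identity in $\fg^*$ tested against arbitrary $\zeta\in\fg^*$, so that the identification with $\nu[x,y]-[x,\nu(y)]+[y,\nu(x)]$ becomes a direct computation.
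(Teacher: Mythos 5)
The paper does not actually prove this proposition: it is quoted verbatim from \cite{PGLecture} (Kosmann-Schwarzbach's lectures) and used as a known fact, with only the remark following it recording the characterization of the bracket \eqref{eq:Manin} as the unique invariant extension. Your outline is the standard textbook argument for this equivalence and is correct: skew-symmetry, isotropy of $\fg$ and $\fg^*$, and invariance of $\langle\cdot,\cdot\rangle$ are indeed automatic from \eqref{eq:Manin}, and the Jacobi identity decomposes by type into Jacobi on $\fg$, Jacobi on $\fg^*$, and the two mixed cases, each of which (after pairing against an arbitrary element of $\fg^*$ or $\fg$) is the 1-cocycle condition $\nu[x,y]=[x,\nu(y)]-[y,\nu(x)]$ or its dual. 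Since there is no in-paper proof to compare against, the only note worth making is that your plan matches the proof in the cited source.
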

\begin{remark}
The bracket $[\ ,\ ]$  on $\g\oplus \g^*$ uniquely extends the given Lie bracket on $\g$ and $\g^*$ such that $\langle [a,b],c\rangle+\langle b,[a,c]\rangle=0,\ \forall~ a,b,c\in \g\oplus \g^*$.
\end{remark}
\begin{corollary}\label{11}
A Manin triple is one-one corresponding to a simply-connected Poisson Lie group up to Poisson isomorphism.
\end{corollary}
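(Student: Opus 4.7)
The plan is to deduce the corollary by chaining together the two preceding results and then invoking a standard integration argument for simply-connected Lie groups. First I would write out the correspondence in both directions. Starting from a simply-connected Poisson Lie group $(G,\pi_G)$ with Lie algebra $\fg$, the preceding proposition guarantees that $(\fg,d_e\pi_G)$ is a Lie bialgebra, and then Proposition \ref{Pro:PM} promotes $(\fg,d_e\pi_G)$ to the Manin triple $((\fg\oplus\fg^{*},\langle,\rangle),\fg,\fg^{*})$, with bracket on $\fg\oplus\fg^{*}$ given by \eqref{eq:Manin}. This assigns to $(G,\pi_G)$ a Manin triple canonically.

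Next I would go in the other direction. Given a Manin triple $((\fd,\langle,\rangle),\fg_1,\fg_2)$, the pairing $\langle,\rangle$ identifies $\fg_2$ with $\fg_1^{*}$, so by Proposition \ref{Pro:PM} the dual of the bracket on $\fg_2\cong\fg_1^{*}$ defines a cobracket $\nu\colon\fg_1\to\wedge^2\fg_1$ making $(\fg_1,\nu)$ a Lie bialgebra. Let $G$ be the unique (up to isomorphism) simply-connected Lie group integrating $\fg_1$. The key input is that, because $G$ is simply-connected, the $1$-cocycle $\nu\colon\fg_1\to\wedge^2\fg_1$ integrates to a unique multiplicative bivector field $\pi_G$ on $G$ with $d_e\pi_G=\nu$; the multiplicativity property $\pi_G(gh)=l_g(\pi_G(h))+r_h(\pi_G(g))$ then asserts that $(G,\pi_G)$ is a Poisson Lie group, and the preceding proposition together with $(\fg_1,\nu)$ being a Lie bialgebra guarantees $[\pi_G,\pi_G]=0$. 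Thus from a Manin triple we recover a simply-connected Poisson Lie group.

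Finally I would verify that the two constructions are mutually inverse up to Poisson isomorphism. On the Lie bialgebra side this is immediate from Proposition \ref{Pro:PM}, which is an equivalence. To pass from Lie bialgebras to Poisson Lie groups up to isomorphism: if $(G,\pi_G)$ and $(G',\pi_{G'})$ are simply-connected Poisson Lie groups with the same linearized Lie bialgebra, then the Lie algebra isomorphism $\fg\to\fg'$ integrates uniquely to a Lie group isomorphism $G\to G'$ by simple connectedness, and uniqueness in the integration statement above forces this map to intertwine $\pi_G$ and $\pi_{G'}$, hence to be a Poisson isomorphism. Combining with the Lie bialgebra/Manin triple equivalence yields the desired bijection.

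The only nontrivial ingredient is the integration of a Lie bialgebra cocycle $\nu$ to a multiplicative Poisson bivector $\pi_G$ on the simply-connected $G$, which is the main obstacle and is the standard theorem of Drinfeld; everything else is bookkeeping built from the two propositions already stated in the excerpt.
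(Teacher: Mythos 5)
Your proposal is correct: it is the standard Drinfeld correspondence, assembled exactly from the two propositions quoted from \cite{PGLecture} plus the integration of the cobracket $\nu=d_e\pi_G$ to a multiplicative bivector field on the simply-connected group. The paper itself gives no proof of this corollary (it is stated as an immediate consequence of the surrounding propositions, with the integration step left to the cited literature), so your argument supplies precisely the reasoning the paper is implicitly invoking; the only point worth tightening is that the vanishing of $[\pi_G,\pi_G]$ follows from multiplicativity of $[\pi_G,\pi_G]$ together with the co-Jacobi identity for $\nu$, rather than from the preceding proposition itself, which already presupposes that $\pi_G$ is Poisson.
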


Let $(G,\pi_G)$ be a simply-connected Poisson Lie group, whose induced Manin triple is denoted by
$\mathcal{M}_G=((\g\oplus \g^*,\langle,\rangle),\g,\g^*)$ in Proposition \ref{Pro:PM} for $\nu=d_e\pi_G$.
\begin{definition}
The unique simply connected Poisson Lie group  $(G^{(n)},\pi_{G^{(n)}})$, whose induced Manin triple is the $n$-uble of  $\mathcal{M}_G$  and containing $G$ as a Lie subgroup is called the {\rm \textbf{$n$-uble of}} \textbf{$(G,\pi_G)$}.
\end{definition}
\begin{remark}
{\rm (1)}$(G^{(2)},\pi_{G^{(2)}})$ is called the Drinfeld double of Poisson Lie group $(G,\pi_G)$.\\
{\rm (2)}$G^{(n)}$ is locally but not globally isomorphic $G^n$ in general.
\end{remark}

The following proposition is follows directly from Theorem A and Corollary \ref{11}.
\begin{proposition}\label{isoPL}
The Poisson Lie group $(G^{(mn)},\pi_{G^{(mn)}})$ is Poisson isomorphic \\
to $((G^{(m)})^{(n)},\pi_{(G^{(m)})^{(n)}})$.
\end{proposition}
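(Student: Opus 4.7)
The plan is to reduce the proposition directly to the combination of Theorem \ref{tm:mn-uble} and Corollary \ref{11}, using the fact that the correspondence between Manin triples and simply-connected Poisson Lie groups is functorial with respect to isomorphisms. By definition, $(G^{(mn)},\pi_{G^{(mn)}})$ is the unique simply-connected Poisson Lie group whose associated Manin triple is $\mathcal{M}_G^{mn}$, while $((G^{(m)})^{(n)},\pi_{(G^{(m)})^{(n)}})$ is the unique such group for $(\mathcal{M}_G^m)^n$. So the task reduces to transporting a Manin triple isomorphism to a Poisson Lie group isomorphism.

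Concretely, I would first invoke the Lie-algebra specialization of Theorem \ref{tm:mn-uble} (the Remark at the end of \S 2 records that the hom-Lie construction reduces to the ordinary Lie case when $\phi=\mathrm{Id}$) to obtain the unique Manin triple isomorphism $i^{mn}_n\colon \mathcal{M}_G^{mn}\to (\mathcal{M}_G^m)^n$. In particular, $i^{mn}_n$ is an isomorphism of the underlying Lie algebra $\g\oplus\g^*$-type doubles that preserves both the invariant bilinear form and the two Lagrangian subalgebras. Since the Poisson Lie groups in question are simply connected, integrating the Lie algebra piece of $i^{mn}_n$ produces a unique Lie group isomorphism $I^{mn}_n\colon G^{(mn)}\to (G^{(m)})^{(n)}$.

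It remains to verify that $I^{mn}_n$ is Poisson. For this I would use the Lie bialgebra criterion recalled before Corollary \ref{11}: a simply-connected Poisson Lie group is determined by the Lie bialgebra $(\g,d_e\pi_G)$, and by Proposition \ref{Pro:PM} this data is equivalent to the Manin triple structure. Because $i^{mn}_n$ preserves the Lagrangian splittings, it intertwines the dual Lie brackets and hence the cobrackets $\nu=d_e\pi$ on the two sides. Pushing $\pi_{G^{(mn)}}$ forward by $I^{mn}_n$ therefore yields a multiplicative bivector field on $(G^{(m)})^{(n)}$ whose linearization at the identity is precisely the cobracket of $(\mathcal{M}_G^m)^n$; the uniqueness clause of Corollary \ref{11} forces this push-forward to agree with $\pi_{(G^{(m)})^{(n)}}$.

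I do not anticipate any substantive obstacle: the only point requiring care is the packaging of the functoriality of the Manin triple\,/\,Poisson Lie group correspondence, which amounts to checking that an isomorphism of Manin triples intertwines the formulas \eqref{eq:Manin} on both sides and hence the induced Lie bialgebra cobrackets. Once this routine compatibility is recorded, the proof is simply ``apply Theorem \ref{tm:mn-uble}, integrate, and invoke Corollary \ref{11}.''
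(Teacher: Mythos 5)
Your proposal is correct and follows essentially the same route as the paper, which proves this proposition in one line by citing Theorem A (the Manin triple isomorphism $i^{mn}_n$) together with Corollary \ref{11} (the one-to-one correspondence between Manin triples and simply-connected Poisson Lie groups up to Poisson isomorphism). The extra detail you supply --- integrating $i^{mn}_n$ to a Lie group isomorphism and checking it intertwines the cobrackets --- is exactly the routine functoriality the paper leaves implicit.
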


\begin{definition}
A Poisson Lie group $(G,\pi_G)$ with Lie algebra $\mathfrak{g}$ is said to be \textbf{co-boundary}
if $\pi_G=\Lambda^L-\Lambda^R$  for some $\Lambda\in \wedge^2 \mathfrak{g}$, where $\Lambda^L$ and $\Lambda^R$ are respectively the left and right invariant bivector fields associated to $\Lambda$.
\end{definition}
Let $D$ be a simply-connected Lie group with Lie algebra $\fd$. Suppose that $\mathcal{M}=((\mathfrak{d},\langle\cdot,\cdot\rangle),\mathfrak{g}_1,\mathfrak{g}_2)$ is a Manin triple.
Let $i^{mn}_n:\mathcal{M}^{mn}\rightarrow (\mathcal{M}^{m})^n$ be the unique isomorphism in Theorem \ref{tm:mn-uble}. Denote $I^{mn}_n:D^{mn}\rightarrow (D^{m})^n$ the induced Lie group isomorphism induced by $i^{mn}_n$. Let $\Lambda_{mn}$, $\Lambda_{m,n}$ be the skew-symmetric part of $r$-matrix (\ref{basis}) induced by $\mathcal{M}^{mn}$ and $(\mathcal{M}^{m})^n$ respectively. The following statement is the corollary of Proposition \ref{isoPL}.
\begin{corollary}\label{coisoPL}
The map $I^{mn}_n:(D^{mn},\Lambda^L_{mn}-\Lambda^R_{mn})\rightarrow ((D^{m})^n,\Lambda^L_{m,n}-\Lambda^R_{m,n})$ is a Poisson isomorphism.
\end{corollary}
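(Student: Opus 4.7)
The plan is to reduce the statement to two independent facts: first, that the Manin-triple isomorphism $i^{mn}_n$ carries the canonical $r$-matrix of $\mathcal{M}^{mn}$ to that of $(\mathcal{M}^m)^n$ (and hence $\Lambda_{mn}$ to $\Lambda_{m,n}$); and second, the general principle that any Lie group isomorphism with differential $\phi$ at the identity pushes forward left-invariant and right-invariant bivector fields according to $\phi^{\wedge 2}$ applied to their defining elements. Combining these two facts immediately gives the conclusion.

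For the first step, recall from \eqref{basis} that $r_{\mathfrak{g}_1,\mathfrak{g}_2}=\sum_i \xi_i\otimes x_i$, where $\{\xi_i\}$ is \emph{any} basis of the second Lagrangian and $\{x_i\}$ is its dual basis in the first Lagrangian with respect to $\langle\cdot,\cdot\rangle$. By \thref{m:mn-uble}, the map $i^{mn}_n$ restricts to linear isomorphisms $\mathfrak{h}_{mn}\to\tilde{\mathfrak{h}}_{mn}$ and $\mathfrak{h}'_{mn}\to\tilde{\mathfrak{h}}'_{mn}$, and it preserves the ambient invariant bilinear form. Consequently, if $\{\xi_i\}\subset\mathfrak{h}'_{mn}$ is dual to $\{x_i\}\subset\mathfrak{h}_{mn}$, then $\{i^{mn}_n(\xi_i)\}\subset\tilde{\mathfrak{h}}'_{mn}$ is dual to $\{i^{mn}_n(x_i)\}\subset\tilde{\mathfrak{h}}_{mn}$. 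Therefore $(i^{mn}_n\otimes i^{mn}_n)(r_{mn})=r_{m,n}$, and taking skew-symmetric parts yields $(i^{mn}_n)^{\wedge 2}(\Lambda_{mn})=\Lambda_{m,n}$.

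For the second step, because $I^{mn}_n$ is a Lie group homomorphism one has $I^{mn}_n\circ L^{mn}_g=L^{m,n}_{I^{mn}_n(g)}\circ I^{mn}_n$ for every $g\in D^{mn}$, where $L$ denotes left translation. Differentiating this identity at the identity element and using $(dI^{mn}_n)_e=i^{mn}_n$ gives, for every $g$, the equality $T_g I^{mn}_n\circ (L^{mn}_g)_{*,e}=(L^{m,n}_{I^{mn}_n(g)})_{*,e}\circ i^{mn}_n$ as linear maps on $\mathfrak{d}^{mn}$. Evaluating both sides on $\Lambda_{mn}\in\wedge^2\mathfrak{d}^{mn}$ and using step one produces $(I^{mn}_n)_*\Lambda^L_{mn}(g)=\Lambda^L_{m,n}(I^{mn}_n(g))$, i.e. $(I^{mn}_n)_*\Lambda^L_{mn}=\Lambda^L_{m,n}$. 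The same argument with right translations shows $(I^{mn}_n)_*\Lambda^R_{mn}=\Lambda^R_{m,n}$, and subtracting gives the required Poisson compatibility.

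There is no real obstacle in this argument; the whole content is that a Manin-triple isomorphism automatically intertwines the canonical $r$-matrices, which is a pure linear algebra statement about Lagrangian complements and dual bases. The Poisson-geometric part is then a standard application of the chain rule to left- and right-invariant bivector fields under a Lie group isomorphism, and does not require any new input beyond \thref{m:mn-uble}.
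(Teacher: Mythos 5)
Your argument is correct, but it reaches the conclusion by a genuinely different route from the paper's. The paper treats this as an immediate consequence of Proposition \ref{isoPL}, which itself follows from Theorem A together with the one-to-one correspondence between Manin triples and simply connected Poisson Lie groups (Corollary \ref{11}); that is, the paper integrates the Manin-triple isomorphism abstractly and then reads off the statement for the coboundary structures. You instead verify everything by hand at the level of bivector fields: first that $i^{mn}_n$, being an isometry carrying one Lagrangian splitting onto the other, sends the canonical $r$-matrix $r_{mn}$ to $r_{m,n}$ (a basis-independent fact, since $r_{\mathfrak{g}_1,\mathfrak{g}_2}$ in \eqref{basis} is the tensor of the identity under $\mathfrak{g}_1^*\cong\mathfrak{g}_2$), and second that a Lie group isomorphism pushes left- and right-invariant bivector fields forward by the second exterior power of its differential at the identity. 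Your version is more self-contained: it does not rely on the (implicit in the paper) identification of the Poisson structure attached to $\mathcal{M}^{mn}$ via Corollary \ref{11} with the coboundary structure $\Lambda^L_{mn}-\Lambda^R_{mn}$, whereas the paper's version is shorter given that machinery. Two small points you should make explicit: in the second step the displayed identity of linear maps must be passed through $\wedge^2$ before being evaluated on $\Lambda_{mn}$, and you use that $i^{mn}_n$ maps each Lagrangian factor separately (not merely their direct sum) onto the corresponding factor of $(\mathcal{M}^m)^n$ --- this is what the case-by-case analysis in Theorem \ref{tm:mn-uble} actually establishes, so the citation is justified, but it is worth saying.
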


\subsection{Poisson homogeneous spaces induced by Manin triple and Lie algebra action}
Let $(G,\pi_G)$ be a Poisson Lie group and $(P,\pi_P)$ a Poisson manifold and a homogeneous space of $G$. Recall from \cite{Drin} that if $M$ is a homogeneous space of $G$ with respect to Lie group action $G\times P\rightarrow P$ and the action map $(G\times P,\pi_G\times \pi_P)\rightarrow (P,\pi_P)$ is a Poisson map, then $(P,\pi_P)$ is called a {\it Poisson homogeneous space} of $(G,\pi_G)$.

Let $(\fd,\langle,\rangle_{\fd})$ be a quadratic Lie algebra and  $\fd=\fg_1+\fg_2$ a Lagrangian splitting. Let $M$ be a homogeneous space of $D$ with respect to Lie group action $\sigma:D\times M\rightarrow M$, whose induced Lie algebra action is also denoted by $\sigma$. Suppose that for each $m\in M$, the stabilizer subalgebra $\mathfrak{q}_{m}$ satisfies the admissible condition $[\mathfrak{q}_{m}^{\perp},\mathfrak{q}_{m}^{\perp}]\subset \mathfrak{q}_{m}$ in Remark \ref{Poisson}. Let $\pi_M:=-\sigma(\Lambda_{\g_1,\g_2})$ and $\pi_D=\Lambda_{\g_1,\g_2}^L-\Lambda_{\g_1,\g_2}^R$, where $\Lambda_{\g_1,\g_2}$ is the skew-symmetric part of (\ref{basis}) and $\Lambda_{\g_1,\g_2}^L$ or $\Lambda_{\g_1,\g_2}^R$ is respectively the left or right invariant bivector fields associated to $\Lambda_{\g_1,\g_2}$. Although the condition on $\mathfrak{q}_{m}$ changes, the statement and proof of following proposition in \cite{T-leaves} still follows.
\begin{proposition}
The Poisson manifold $(M,\pi_M)$ is a Poisson homogeneous space of $(D,\pi_D)$.
\end{proposition}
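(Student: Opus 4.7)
The proof naturally breaks into three assertions: that $(D,\pi_D)$ is a Poisson Lie group, that $\pi_M$ is a Poisson bivector on $M$, and that the action map $\sigma:(D\times M,\pi_D\times\pi_M)\to(M,\pi_M)$ is a Poisson morphism. The first is already supplied by the preceding subsection, since $(D,\pi_D)$ is exactly the simply-connected coboundary Poisson Lie group associated via Corollary \ref{11} to the Manin triple $\mathcal{M}=((\mathfrak{d},\langle,\rangle),\mathfrak{g}_1,\mathfrak{g}_2)$.

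For the Poisson property of $\pi_M$, I specialize Corollary \ref{coro:Lambda+S} to the Lie case $\phi_{\mathfrak{h}}=\mathrm{Id}$, as recorded in Remark \ref{Poisson}. The tensor $r_{\mathfrak{g}_1,\mathfrak{g}_2}$ of \eqref{basis} is quasi-triangular with skew-symmetric part $\Lambda_{\mathfrak{g}_1,\mathfrak{g}_2}$ and symmetric part $S$ dual to $\langle,\rangle$. Under the self-duality $\mathfrak{d}^{*}\cong\mathfrak{d}$ induced by $\langle,\rangle$, one checks that $S^{\sharp}$ becomes a scalar multiple of the identity: expanding an element of $\mathfrak{d}$ as $a_1+a_2\in\mathfrak{g}_1\oplus\mathfrak{g}_2$ and using that $\{\xi_i\}$, $\{x_i\}$ are dual bases of Lagrangian subspaces, a short computation shows $S^{\sharp}(\eta)=\tfrac{1}{2}a$ where $\eta=\langle a,\cdot\rangle$. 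In particular $S^{\sharp}(\mathfrak{q}_m^{\perp})=\mathfrak{q}_m^{\perp}$, so the sufficient condition $[S^{\sharp}(\mathfrak{q}_m^{\perp}),S^{\sharp}(\mathfrak{q}_m^{\perp})]\subset\mathfrak{q}_m$ of Corollary \ref{coro:Lambda+S} coincides with the standing assumption $[\mathfrak{q}_m^{\perp},\mathfrak{q}_m^{\perp}]\subset\mathfrak{q}_m$. Therefore $\pi_M=-\sigma(\Lambda_{\mathfrak{g}_1,\mathfrak{g}_2})$ is Poisson.

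For the compatibility of $\sigma$ with the Poisson brackets, I will verify pointwise the Poisson-action identity
\[
\pi_M(d\cdot m)\;=\;(\lambda_d)_*\pi_M(m)\;+\;(\sigma_m)_*\pi_D(d),\qquad d\in D,\ m\in M,
\]
where $\lambda_d:M\to M$ sends $m\mapsto d\cdot m$ and $\sigma_m:D\to M$ sends $h\mapsto h\cdot m$. The two elementary equivariance identities
\[
\sigma_m\circ L_d=\lambda_d\circ\sigma_m,\qquad \sigma_m\circ R_d=\sigma_{d\cdot m},
\]
together with the base identity $(\sigma_m)_{*,e}(\Lambda_{\mathfrak{g}_1,\mathfrak{g}_2})=\sigma(\Lambda_{\mathfrak{g}_1,\mathfrak{g}_2})(m)=-\pi_M(m)$, give respectively $(\sigma_m)_{*}\Lambda^L_{\mathfrak{g}_1,\mathfrak{g}_2}(d)=-(\lambda_d)_*\pi_M(m)$ and $(\sigma_m)_{*}\Lambda^R_{\mathfrak{g}_1,\mathfrak{g}_2}(d)=-\pi_M(d\cdot m)$. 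Subtracting these two formulas and recalling $\pi_D=\Lambda^L_{\mathfrak{g}_1,\mathfrak{g}_2}-\Lambda^R_{\mathfrak{g}_1,\mathfrak{g}_2}$ yields the required identity.

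The only genuinely substantive point is the Poisson property of $\pi_M$: the admissibility hypothesis $[\mathfrak{q}_m^{\perp},\mathfrak{q}_m^{\perp}]\subset\mathfrak{q}_m$ is weaker than the sufficient condition used in \cite{T-leaves}, so the original argument there does not transfer verbatim; this is precisely where the sharper equivalence of Corollary \ref{coro:Lambda+S} is needed. Step three is formal and matches the computation in \cite{T-leaves} up to bookkeeping of left versus right translations, so I do not expect any further obstacle.
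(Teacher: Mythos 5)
Your proposal is correct, and it actually does more than the paper does: the paper offers no proof of this proposition at all, only the remark that ``the statement and proof of [the] following proposition in \cite{T-leaves} still follows'' despite the weakened hypothesis on $\mathfrak{q}_m$. Your argument supplies exactly the detail that this remark glosses over. The decomposition into three assertions is the standard one, and the third step (the multiplicativity identity for the action map) is the same formal computation as in \cite{T-leaves}; the genuinely new content is your second step, where you route the Poisson property of $\pi_M$ through Corollary \ref{coro:Lambda+S} (equivalently Remark \ref{Poisson}) rather than through the stronger sufficient condition used in \cite{T-leaves}. Your observation that, under the identification $\mathfrak{d}^*\cong\mathfrak{d}$ furnished by $\langle\,,\,\rangle$ and the duality of the Lagrangian bases $\{\xi_i\}$, $\{x_i\}$, the map $S^\sharp$ is $\tfrac12\,\mathrm{Id}$ --- so that $[S^\sharp(\mathfrak{q}_m^\perp),S^\sharp(\mathfrak{q}_m^\perp)]\subset\mathfrak{q}_m$ collapses to the standing admissibility condition $[\mathfrak{q}_m^\perp,\mathfrak{q}_m^\perp]\subset\mathfrak{q}_m$ --- is precisely the bridge between the paper's hom-$r$-matrix machinery of Section 3 and this proposition, and it is nowhere made explicit in the text. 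Two minor points to tidy up: you should note that $S^\sharp$ being a scalar multiple of the identity uses that the Manin-triple $r$-matrix is factorizable (for a general quasi-triangular $r$-matrix $S^\sharp$ need not be invertible and $S^\sharp(\mathfrak{q}_m^\perp)$ need not equal $\mathfrak{q}_m^\perp$), and the sign conventions in your base identity $(\sigma_m)_{*,e}(\Lambda_{\mathfrak{g}_1,\mathfrak{g}_2})=\sigma(\Lambda_{\mathfrak{g}_1,\mathfrak{g}_2})(m)$ depend on whether the infinitesimal left action is taken as a homomorphism or anti-homomorphism of Lie algebras; the paper is itself silent on this, and your computation is consistent once a convention is fixed.
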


Let $Q$ be a closed subgroup of $D$ with Lie algebra $\fq$, which satisfied the admissible condition $[\fq^\perp,\fq^\perp]\subset \fq$.
The following statement is the special case when $M=D/Q$ .
\begin{corollary}\label{CoHo}
The Poisson manifold $(D/Q,\pi_{D/Q})$ is a Poisson homogeneous space of $(D,\pi_D)$.
\end{corollary}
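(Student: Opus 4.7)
The plan is to exhibit Corollary \ref{CoHo} as the special case $M = D/Q$ of the Proposition stated immediately above it. First I would identify the action: $D$ acts on $D/Q$ by left translation $\sigma : D \times D/Q \to D/Q$, which is transitive by construction, so $D/Q$ is a homogeneous space of $D$. The induced infinitesimal action $\sigma : \mathfrak{d} \to V^1(D/Q)$ sends $x \in \mathfrak{d}$ to the fundamental vector field $x^\natural$ of left translation, and at the base point $eQ$ its kernel is exactly $\mathfrak{q}$.

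Next I would compute the stabilizer subalgebra at an arbitrary point $gQ \in D/Q$. Because left translation by $g$ conjugates the infinitesimal action at $eQ$ to the infinitesimal action at $gQ$, one obtains
\[
\mathfrak{q}_{gQ} \;=\; \{a \in \mathfrak{d} : \sigma(a)(gQ) = 0\} \;=\; \mathrm{Ad}_g(\mathfrak{q}).
\]
Since $\langle\cdot,\cdot\rangle_{\mathfrak{d}}$ is $\mathrm{Ad}$-invariant, $\mathrm{Ad}_g$ is an orthogonal automorphism of $(\mathfrak{d},\langle\cdot,\cdot\rangle_{\mathfrak{d}})$, hence
\[
\mathfrak{q}_{gQ}^{\perp} \;=\; \mathrm{Ad}_g(\mathfrak{q})^{\perp} \;=\; \mathrm{Ad}_g(\mathfrak{q}^\perp).
\]
Using that $\mathrm{Ad}_g$ is a Lie algebra automorphism, the hypothesis $[\mathfrak{q}^\perp,\mathfrak{q}^\perp] \subset \mathfrak{q}$ therefore propagates:
\[
[\mathfrak{q}_{gQ}^\perp,\mathfrak{q}_{gQ}^\perp] \;=\; \mathrm{Ad}_g\bigl([\mathfrak{q}^\perp,\mathfrak{q}^\perp]\bigr) \;\subset\; \mathrm{Ad}_g(\mathfrak{q}) \;=\; \mathfrak{q}_{gQ}.
\]
Thus the admissible condition holds at every point of $D/Q$.

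Having verified the stabilizer condition pointwise, I would invoke the Proposition above with $M = D/Q$ to conclude that $\pi_{D/Q} := -\sigma(\Lambda_{\mathfrak{g}_1,\mathfrak{g}_2})$ is a well-defined Poisson structure making $(D/Q,\pi_{D/Q})$ a Poisson homogeneous space of $(D,\pi_D)$. The main obstacle, mild here, is the bookkeeping in checking that the stabilizer at a moving point is genuinely $\mathrm{Ad}_g(\mathfrak{q})$ and that the perpendicular commutes with $\mathrm{Ad}_g$; once the $\mathrm{Ad}$-invariance of $\langle\cdot,\cdot\rangle_{\mathfrak{d}}$ is exploited this is immediate, and the rest is a direct appeal to the preceding Proposition, so no new ingredient beyond what is already proved is needed.
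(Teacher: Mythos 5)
Your proposal is correct and follows the same route as the paper, which simply declares the corollary to be the special case $M=D/Q$ of the preceding Proposition. The only content beyond that citation is the verification that the admissibility condition at the identity coset propagates to every point via $\mathfrak{q}_{gQ}=\mathrm{Ad}_g(\mathfrak{q})$ and the $\mathrm{Ad}$-invariance of $\langle\cdot,\cdot\rangle_{\fd}$, which you carry out correctly and which the paper leaves implicit.
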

\subsection{Examples of homogeneous spaces associate to complex semisimple Lie groups}\label{subsec:example}
Let $G$ be a simply-connected complex semisimple Lie group.
Fix a pair $(B,B_-)$ of opposite Borel subgroups and let $T=B\cap B_-$ be the maximal torus defined by $B$ and $B_-$.
Let $N$ and $N_-$ be the unipotent radical of $B$ and $B_-$ respectively.
Let $N_G(T)$ be the normalizer subgroup of $T$ in $G$ and let $W=N_G(T)/T$ be the Weyl group of $G$.
\begin{example}\label{example}
Let $Q$ be a closed subgroup of $G$ containing $N$.
Suppose that $M=G/Q\cong (G\times T)/(Q\times T)$ and define $\sigma:$left translation of $G\times T$ on $M$.
Consider also the Manin triple
\begin{multline*}
\mathfrak{d}=\mathfrak{g}\oplus \mathfrak{h};\ \ \  \langle(x_1,y_1),(x_2,y_2)\rangle=\langle x_1,x_2\rangle_\mathfrak{g}-\langle y_1,y_2\rangle_\mathfrak{g},\\
 \mathfrak{g}_1=\{(x_++x_1,x_1)|x_1\in \mathfrak{h},x_{+}\in \mathfrak{n}\},
\mathfrak{g}_2=\{(x_-+x_2,-x_2)|x_2\in \mathfrak{h},x_{-}\in \mathfrak{n}_{-}\}.
\end{multline*}
Notice that the Lie algebra $\q$ of $Q$ such that
\[ [(\q\oplus \h)^\perp,(\q\oplus \h)^\perp] \subset [(\n\oplus \h)^\perp,(\n\oplus \h)^\perp]=\n\oplus\h\subset \q\oplus \h.
 \]
It follows from Corollary \ref{CoHo} that $G/Q\cong (G\times T)/(Q\times T)$ can be equipped with a Poisson structure.
Moreover, each Lagrangian splitting of the quadratic algebra $(\mathfrak{d},\langle,\rangle)$  induces a Poisson structure $\sigma(\Lambda_{\g_1,\g_2})$ on $G/Q$.
\end{example}
\begin{remark}
When $Q=N$,  $\sigma(r_{\g_1,\g_2})$  is not a bivector field on $G/N$ but $\sigma(\Lambda_{\g_1,\g_2})$ is a Poisson structure on $G/N$.
\end{remark}

\begin{example}\label{piM}
Consider the multi-flag variety $G^n/B^n\cong (G\times T)^n/(B\times T)^n$ and let $\sigma$ be left translation of $(G\times T)^n$ on $(G\times T)^n/(B\times T)^n$. Let $\mathcal{M}$ be the Manin triple defined by
\begin{multline*}
\mathfrak{d}=\mathfrak{g}\oplus \mathfrak{h};\ \ \  \langle(x_1,y_1),(x_2,y_2)\rangle=\langle x_1,x_2\rangle_\mathfrak{g}-\langle y_1,y_2\rangle_\mathfrak{g},\\
\mathfrak{g}_1=\{(x_++x_1,x_1)|x_1\in \mathfrak{h},x_{+}\in \mathfrak{n}\}, \mathfrak{g}_2=\{(x_-+x_2,-x_2)|x_2\in \mathfrak{h},x_{-}\in \mathfrak{n}_{-}\}.
\end{multline*}
Let $\mathcal{M}^n$ be  the $n$-uble of $\mathcal{M}$. Then, $\mathcal{M}^n$ and $\sigma$ induce a Poisson structure on $G^n/B^n\cong (G\times T)^n/(B\times T)^n$, denoted by $\pi_{\mathcal M^{n}}$.
\end{example}
\begin{example}
Consider the multi-double flag variety $(G\times G)^n/(B\times B_-)^n$ and let $\sigma$ be left translation of $(G\times G)^n$ on $(G\times G)^n/(B\times B_-)^n$.
Let $\mathcal{M}$ be the Manin triple defined by
\begin{multline*}
\mathfrak{d}=\mathfrak{g}\oplus \mathfrak{g};\ \ \  \langle(x_1,y_1),(x_2,y_2)\rangle=\langle x_1,x_2\rangle_\mathfrak{g}-\langle y_1,y_2\rangle_\mathfrak{g},\\
\mathfrak{g}_1=\{(x,x)|x\in \mathfrak{g}\},
\mathfrak{g}_2=\{(x_++x_0,x_{-}-x_0)|x_0\in \mathfrak{h},x_{+}\in \mathfrak{n},x_{-}\in \mathfrak{n}_{-}\}.
 \end{multline*}
 Let $\mathcal{M}^n$ be  the $n$-uble of $\mathcal{M}$. Then, $\mathcal{M}^n$ and $\sigma$ induce a Poisson structure on $(G\times G)^n/(B\times B_-)^n$, denoted by $\Pi_{\mathcal M^{n}}$.
\end{example}
\subsection{Proof of Theorem B(2)}
Let $\mathcal{M}=((\fd,\langle,\rangle_{\fd}),\fg_1,\fg_2)$ be a Manin triple. Let $i^{mn}_n:\mathcal{M}^{mn}\rightarrow (\mathcal{M}^{m})^n$ be the isomorphism of Manin triple defined  by (\ref{tm:mn-uble}). Let $D$ be the simply-connected Lie group with Lie algebra $\fd$. Denote $I^{mn}_n:D^{mn}\rightarrow (D^{m})^n$ the induced Lie group isomorphism induced by $i^{mn}_n$. Let $Q$ be a closed subgroup of $D^{mn}$ with Lie algebra $\fq$, which satisfies the admissible condition $[\fq^\perp,\fq^\perp]\subset \fq$. Let $(D^{mn}/Q,\Pi^{mn})$  be Poisson homogeneous space  induced by $\mathcal{M}^{mn}$ defined in Corollary \ref{CoHo} and $((D^m)^n/I(Q),\Pi^{m,n})$ the Poisson homogeneous space induced by $(M^{m})^n$.
\begin{theorem}\label{PoissonIso}
The map $I^{mn}_{n,Q}: (D^{mn}/Q,\Pi^{mn}) \rightarrow ((D^m)^n/I^{mn}_n(Q),\Pi^{m,n}),\ \ [g]\mapsto [I^{mn}_n(g)]$ is an isomorphism of Poisson homogeneous spaces.
\end{theorem}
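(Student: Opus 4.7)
The plan is to reduce the statement to a naturality argument built on top of Theorem~\ref{tm:mn-uble}. First, I would verify that $I^{mn}_{n,Q}$ is a well-defined diffeomorphism: since $I^{mn}_n$ is a Lie group isomorphism sending $Q$ onto the closed subgroup $I^{mn}_n(Q)$, it descends to a smooth bijection between the quotients whose inverse is induced by $(I^{mn}_n)^{-1}$. I would also check that $\Pi^{m,n}$ on $(D^m)^n/I^{mn}_n(Q)$ is well-defined in the sense of Corollary~\ref{CoHo}, i.e.\ that $i^{mn}_n(\mathfrak{q})$ satisfies the admissibility condition. This is automatic because $i^{mn}_n$ is simultaneously a Lie algebra isomorphism and an isometry, so $i^{mn}_n(\mathfrak{q})^\perp = i^{mn}_n(\mathfrak{q}^\perp)$, and consequently $[i^{mn}_n(\mathfrak{q})^\perp, i^{mn}_n(\mathfrak{q})^\perp] \subset i^{mn}_n(\mathfrak{q})$.

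The core of the proof is the compatibility of the canonical bivectors. By Theorem~\ref{tm:mn-uble}, $i^{mn}_n$ carries the two Lagrangian subalgebras of $\mathcal{M}^{mn}$ isomorphically onto the corresponding Lagrangian subalgebras of $(\mathcal{M}^m)^n$, and it preserves the invariant bilinear form. Since the tensor $r_{\mathfrak{g}_1,\mathfrak{g}_2}$ defined by \eqref{basis} depends only on this data, it follows immediately that
\[
(i^{mn}_n \otimes i^{mn}_n)(r_{mn}) = r_{m,n}, \qquad (i^{mn}_n \otimes i^{mn}_n)(\Lambda_{mn}) = \Lambda_{m,n}.
\]
The second ingredient is equivariance: since $I^{mn}_n$ is a Lie group homomorphism with differential $i^{mn}_n$, one has $I^{mn}_n \circ L_g = L_{I^{mn}_n(g)} \circ I^{mn}_n$ for every $g \in D^{mn}$, so $(I^{mn}_n)_*$ sends the left-invariant bivector field on $D^{mn}$ attached to $\Lambda_{mn}$ to the left-invariant bivector field on $(D^m)^n$ attached to $\Lambda_{m,n}$. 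Passing to the quotients, this translates into the identity $(I^{mn}_{n,Q})_* \Pi^{mn} = \Pi^{m,n}$.

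No substantial obstacle should remain, since the combinatorial content was packaged into Theorem~\ref{tm:mn-uble} and the rest is pure functoriality. The only point where I would be careful is the passage from the group level to the quotient level: because $I^{mn}_{n,Q}$ is the map of quotients induced by a Lie group isomorphism sending $Q$ to $I^{mn}_n(Q)$, the projections intertwine the two infinitesimal actions, giving $(I^{mn}_{n,Q})_* \sigma(x) = \sigma(i^{mn}_n(x))$ for every $x \in \mathfrak{d}^{mn}$. Extending this multiplicatively to $\Lambda_{mn}$ yields the desired Poisson isomorphism.
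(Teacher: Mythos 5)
Your proposal is correct and takes essentially the same route as the paper: the paper likewise combines the fact that $i^{mn}_n$ carries $r_{mn}$ (hence $\Lambda_{mn}$) to $r_{m,n}$ (packaged there as Corollary \ref{coisoPL}) with the commutativity of the diagram intertwining the two infinitesimal actions via $i^{mn}_n$ and $(I^{mn}_{n,Q})_*$, which is exactly your equivariance step. Your additional checks of well-definedness and of the transfer of the admissibility condition $[\mathfrak{q}^\perp,\mathfrak{q}^\perp]\subset\mathfrak{q}$ under the isometry $i^{mn}_n$ are details the paper leaves implicit.
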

\begin{proof}
It follows from Corollary \ref{coisoPL} that the Poisson Lie group  $(D^{mn},\Lambda^L_{mn}-\Lambda^R_{mn})$ is Poisson isomorphic to $((D^{m})^n,\Lambda^L_{m,n}-\Lambda^R_{m,n})$  through $I^{mn}_{n}$. Moreover,
since the diagram
\begin{center}
\begin{tikzpicture}
  \matrix (m) [matrix of math nodes,row sep=3em,column sep=4em,minimum width=2em]
  {
     \fd^{mn} & V^1(D^{mn}/Q) \\
     (\fd^{m})^n & V^1((D^{m})^n/I^{mn}_n(Q) ) \\};
  \path[-stealth]
    (m-1-1) edge node [right] {$i^{mn}_n$} (m-2-1)
            edge  node [above] {$\sigma$} (m-1-2)
    (m-2-1) edge node [below] {$\sigma$}
            (m-2-2)
    (m-1-2) edge node [right] {$(I^{mn}_{n,Q})_*$} (m-2-2)
            (m-2-1);
\end{tikzpicture}
\end{center}
is commutative, the Poisson structure $(I^{mn}_{n,Q})_*(\Pi^{mn})$ coincides with $\Pi^{m,n}$.
\end{proof}
\section{Identification of $(DF_n,\Pi_n)$ as $(F_{2n},\pi_{2n})$}
\subsection{Notation and preliminary}\label{Notation}
Throughout introduction and this part, if $A$ is a group, $n \geq 1$
is an integer, and $Q_1, \ldots,  Q_n$ are subgroups of $A$, let
\begin{equation}\label{ta}
A \times_{Q_1}  \cdots \times_{Q_{n-1}} A/Q_n
\end{equation}
denote the quotient of $A^n$ by the  action of the subgroup
$Q_1  \times \cdots \times Q_{n} \subset G^n$,  where the $n$-fold product group $G^n$ acts  on
itself (freely) from the right  by
\begin{equation}\lb{eq-Gn-Gn-1}
(g_1^\prime, \ldots, g_n^\prime) \cdot_r (g_1, \ldots, g_{n}) = (g_1^\prime g_1, \, g_1^{-1}g_2^\prime g_2, \; \ldots, \; g_{n-1}^{-1}g_n^\prime g_n),
\hs g_j, g_j^\prime \in G, \; 1 \leq j \leq n.
\end{equation}
If the quotient space $A \times_{Q_1}  \cdots \times_{Q_{n-1}} A/Q_n$ is denoted by $Z$, the image of
$(g_1, \ldots, g_n) \in A^n$ in $Z$ will be denoted by $[g_1, g_2, \ldots, g_n]_{\sZ}$, and the projection from $A^n$ to $Z$
will be denoted by
\[
\varpi_{\sZ}: \; A^n \lrw Z, \;\;\; (g_1, g_2, \ldots, g_n) \longmapsto [g_1, g_2, \ldots, g_n]_{\sZ}.
\]
If $S_1, S_2, \ldots, S_n$ are subsets of $G$ such that $S_1$ is right $Q_1$-invariant and
$S_j$ is left $Q_{j-1}$ and right $Q_j$
invariant for $2 \leq j \leq n$, we set
\[
S_1 \times_{Q_1} \cdots \times_{Q_{n-1}} S_n/Q_n = \varpi_{\sZ}(S_1 \times \cdots \times S_n) \subset Z.
\]
If $Q_n = \{e\}$ is the trivial subgroup, we denote $A \times_{Q_1}  \cdots \times_{Q_{n-1}} A/Q_n$ by
$G \times_{Q_1}  \cdots \times_{Q_{n-1}} G$.

When $(A, \pi_{\sA})$ is a Poisson Lie group and when the $Q_j$'s are
closed Poisson Lie subgroups of $(A, \pi_{\sA})$,
the $n$-fold product Poisson structure $\pi_{\sA}^n$ on $A^n$ projects to a well-defined
Poisson structure, referred to as the {\it quotient Poisson
structure}, on $A \times_{Q_1}  \cdots \times_{Q_{n-1}} A/Q_n$ (see \cite[$\S$7]{mixed}).
\subsection{Construction of $(DF_n,\Pi_n)$ and $(F_{2n},\pi_{2n})$ }\label{con}
Let $G$ be a simply-connected complex semisimple Lie group.
Fix a pair $(B,B_-)$ of opposite Borel subgroups and let $T=B\cap B_-$ be the maximal torus defined by $B$ and $B_-$.
Let $N$ and $N_-$ be the unipotent radical of $B$ and $B_-$ respectively.
Let $N_G(T)$ be the normalizer subgroup of $T$ in $G$ and let $W=N_G(T)/T$ be the Weyl group of $G$.

Let $\fg$ be the Lie algebra of $G$ and let $\langle,\rangle_\fg$ be a fixed multiple of the killing form of $\fg$.
Let $\Delta$ be the root system defined by $T$. The sets of positive roots and simple roots determined by $B$  are denoted by $\Delta_+$ and $\Gamma$ respectively.
Let $\fg=\fh+\sum_{\alpha\in \Delta}\fg_\alpha$ be the root space decomposition.
For $\alpha\in\Delta_+$, let $H_\alpha\in\fh$ be such that $\alpha(x)=\langle x, H_\alpha \rangle_\fg$.
We also fix root vectors $E_\alpha\in \fg_\alpha$ and $E_{-\alpha}\in \fg_{-\alpha}$ such that $[E_{-\alpha},E_\alpha]=H_\alpha$.
Define
\[\Lambda_{\rm st}=\frac{1}{2}\sum_{\alpha\in\Delta_+}E_{-\alpha}\wedge E_{\alpha},\]
and
\[\pi_{\rm st}=\Lambda^L-\Lambda^R.\]
Then $(G,\pist)$ is a Poisson Lie group. The Drinfeld double of $(G,\pist)$ is denoted by $(G\times G,\Pist)$.

When taking $(A,\pi_A)=(G,\pist)$ and $Q_1=Q_2=...=Q_{n}=B$, the quotient of $G ^{n}$ by the right action of $ B ^{n}$ is denoted by
\begin{equation}\label{fn}
F_{n}:=G\times_B G\times_B \cdots \times_BG/B
\end{equation}
and the quotient Poisson structure is denoted by $\pi_{n}$.

When taking $(A,\pi_A)=(G\times G,\Pist)$ and $Q_1=Q_2=...=Q_{n}=B\times B_-$.
The quotient of $(G\times G)^{n}$ by the right action of $(B\times B_-)^{n}$ is denoted by
\begin{equation}\label{dfn}
DF_{n}:=(G\times G)\times_{B\times B_-} (G\times G)\times_{B\times B_-} \cdots \times_{B\times B_-} (G\times G)/B\times B_-
\end{equation}
and the quotient Poisson structure is denoted by $\Pi_{n}$.

\subsection{Proof of Theorem C}
\begin{theorem}\label{theo:iso}
Mult-flag variety $(F_{2n},\pi_{2n})$ is Poisson isomorphic to $(DF_n,\Pi_n)$ through the map
\begin{equation}\label{Poiso}
\Psi_n:[g_1,g_2,\ldots,g_{2n}]\mapsto [(g_1,g_1g_2\ldots g_{2n}w_0),\\
 (g_2,w_0 g^{-1}_{2n}w_0),\ldots,(g_n,w_0 g^{-1}_{n+2}w_0) ],
\end{equation}
where $w_0$ is the longest element in the Weyl group $W$.
\end{theorem}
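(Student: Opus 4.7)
The plan is to prove Theorem~\ref{theo:iso} in three steps: (i) verify that $\Psi_n$ is a well-defined diffeomorphism, (ii) realize both $(F_{2n},\pi_{2n})$ and $(DF_n,\Pi_n)$ as Poisson homogeneous spaces produced by Corollary~\ref{CoHo} from Manin-triple polyubles connected by the isomorphism $i^{2n}_n$ of Theorem~\ref{tm:mn-uble}, and (iii) apply Theorem~\ref{PoissonIso} and match the induced map with the explicit formula~(\ref{Poiso}).

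For step (i), under the twisted $B^{2n}$-action sending $(g_1,\ldots,g_{2n})$ to $(g_1 c_1,c_1^{-1}g_2 c_2,\ldots,c_{2n-1}^{-1}g_{2n}c_{2n})$ with $c_i\in B$, the first entry of the $k$-th pair of the image is conjugated by elements of $B$, whereas the second entry is conjugated by elements of $w_0 B w_0^{-1}=B_-$. These assemble into exactly the twisted $(B\times B_-)^n$-action on $(G\times G)^n$, so $\Psi_n$ descends to $DF_n$. An explicit inverse reads $g_k=h_k$ for $1\le k\le n$, $g_{2n-k+2}=w_0^{-1}(h_k')^{-1}w_0$ for $2\le k\le n$, with $g_{n+1}$ recovered from the product identity in the first pair, giving a smooth bijection.

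For steps (ii)--(iii), by the polyuble interpretation of $\pi_{n}$ and $\Pi_{n}$ due to Lu--Mouquin \cite{mixed}, both structures arise as quotient Poisson structures on $D^k/Q$ for an admissible closed subgroup $Q\subset D^k$, where $D$ is the simply connected Lie group attached to a Manin triple $\mathcal{M}$. For $F_{2n}$ we take the $(2n)$-uble of the Manin triple of Example~\ref{piM}, and for $DF_n$ the $n$-uble of the Manin triple underlying the Drinfeld double of $(G,\pist)$. Applying Theorem~\ref{tm:mn-uble} factors the $(2n)$-uble as $(\mathcal{M}^{2})^{n}$, and matching $\mathcal{M}^{2}$ with the Drinfeld-double Manin triple via $w_0$-conjugation on the second factor yields a common polyuble structure on both sides. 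The subgroups $B^{2n}$ and $(B\times B_{-})^{n}$ correspond under the induced Lie group isomorphism, and admissibility $[\mathfrak{q}^{\perp},\mathfrak{q}^{\perp}]\subset \mathfrak{q}$ holds as in Example~\ref{example}. Theorem~\ref{PoissonIso} then delivers a Poisson diffeomorphism between the two homogeneous spaces.

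The main obstacle will be identifying this abstract Poisson diffeomorphism with the explicit formula~(\ref{Poiso}) for $\Psi_n$. This amounts to tracing the snake-shaped reordering $i^{2n}_n$ of Theorem~\ref{tm:mn-uble} on the level of the simply connected Lie groups and then choosing coset representatives compatibly in both $D^{2n}/Q$ and $(D^{2})^{n}/I^{2n}_n(Q)$. The monomial $g_1 g_2\cdots g_{2n}w_0$ in the first pair of $\Psi_n$ arises from collapsing the left-translation product of the lifted element through the identification, while each twist $w_0 g_{2n-k+2}^{-1}w_0$ for $k\ge 2$ arises from the Weyl-element conjugation that converts a $B$-coset into a $B_-$-coset in the second column of the rearranged $n\times 2$ array used in the proof of Theorem~\ref{tm:mn-uble}. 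Once this dictionary is made explicit, the theorem follows by direct inspection.
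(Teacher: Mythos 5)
Your proposal follows essentially the same route as the paper's proof: the paper likewise realizes $\pi_{2n}$ and $\Pi_n$ through the polyuble/Lie-algebra-action picture of Lu--Mouquin, converts the last $n$ factors from $B$-cosets to $B_-$-cosets by right translation by $w_0$, and invokes Theorem \ref{PoissonIso} (Theorem B(2)) for the regrouping induced by $i^{2n}_n:\mathcal{M}^{2n}\rightarrow(\mathcal{M}^2)^n$. The ``dictionary'' you defer at the end is exactly what the paper supplies, namely the explicit factorization $\Psi_n=I_4\circ I_3\circ I_2\circ I_1$ in which $I_1$ and $I_4$ pass between twisted-product and direct-product coordinates (Poisson by \cite[Theorem 1.2]{mixed}), $I_2$ is the $w_0$-translation, and $I_3$ is the regrouping.
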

\begin{proof}
Decompose  $\Psi_n=I_4\circ I_3\circ I_2\circ I_1$ as follow:
\begin{align}\label{eq:comp-1}
(F_{2n}, \; \pi_{2n}) \; \; \stackrel{I_1 }{\!\!\!-\!\!\!\longrightarrow}\;
(G^{2n}/B^{2n}, \; -\pi_{\mathcal M^{2n}})\
\stackrel{I_2}{\!\!\!-\!\!\!-\!\!\!\longrightarrow}\;
 (G^{2n}/(B^n\times B_-^n),-\pi^\pm_{\mathcal M^{2n}} ) \\
 \stackrel{I_3 }{\!\!\!-\!\!\!\longrightarrow}\; ((G\times G)^n/(B\times B_-)^n, \; -\Pi_{\mathcal M^{n}})\; \stackrel{I_4 }{\!\!\!-\!\!\!\longrightarrow}\; (DF_n, \; \Pi_{n})
\end{align}
in which the map is defined by
\begin{align*}
[g_1,g_2,\ldots,g_{2n}] \; \; \stackrel{I_1 }{\longmapsto}\;
 (g_1,g_1g_2,\ldots,g_1g_2\ldots g_{2n})\;\\
\stackrel{I_2}{\longmapsto}\;
 (g_1,g_1g_2,\ldots,g_1g_2\ldots g_n,g_1\ldots g_{n+1}w_0,g_1\ldots g_{2n}w_0) \\
 \stackrel{I_3 }{\longmapsto}\; ((g_1,g_1\ldots g_{2n}w_0),(g_1g_2,g_1\ldots g_{2n-1}w_0),\ldots (g_1\ldots g_{n},g_1\ldots g_{n+1}w_0))\;\\ \stackrel{I_4 }{\longmapsto}\; [(g_1,g_1\ldots g_{2n}w_0),(g_2,w_0g_{2n}^{-1}w_0),\ldots,(g_n,w_0g_{n+2}^{-1}w_0)]
\end{align*}
where $\pi_{\mathcal M^{2n}}$ is the Poisson structure defined in $\ref{piM}$ and $\pi^\pm_{\mathcal M^{2n}}$ is the Poisson structure on $G^{2n}/(B^n\times B_-^n)\cong (G\times T)^{2n}/(B\times T)^n\times (B_-\times T)^n$ induced by 2n-uble $\mathcal M^{2n}$ and left translation of $(G\times T)^{2n}$ on $(G\times T)^{2n}/(B\times T)^n\times (B_-\times T)^n$. Since the map
\[G/B\rightarrow G/B_-,\ \ g.B\rightarrow gw_0.B_-\]
is a $G$-equivariant isomorphism and both $\pi_{\mathcal M^{2n}}$ and $\pi^\pm_{\mathcal M^{2n}}$ are induced by the same Manin triple, $I_3$ is Poisson isomorphism. In addition,
it follows from \cite[Theorem 1.2]{mixed} that both $I_1$ and $I_4$ are Poisson. Moreover, it follows from Theorem B(2) that $I_3$ is Poisson. Therefore, $\Psi_n$ is a Poisson isomorphism.
\end{proof}
\subsection{$T$-leaves identification of $(DF_n,\Pi_n)$}\label{se:Tleaf}
Let $\mathbb{T}$ be a complex torus. A $\mathbb{T}$-Poisson manifold, defined in \cite[Section 1.1]{T-leaves},  is a complex Poisson manifold $(X,\pi_X)$ with a $\mathbb{T}$-action preserving the Poisson structure
$\pi_X$.
\begin{definition}{\rm(\cite[Section 2.2]{T-leaves})}
Let $(X,\pi_X)$ be a $\mathbb{T}$-Poisson manifold. Suppose that $\Sigma$ is a symplectic leaf in $(X,\pi_X)$. The set
\[
\mathbb{T}\Sigma=\bigsqcup_{t\in \mathbb{T}}t\Sigma
\]
is called \emph{a single $\mathbb{T}$-leaf in $(X,\pi_X)$}.
\end{definition}

For $\bfu=(u_1,\ldots,u_{n})\in W^n$ and $w\in W$, define
\[
R^\bfu_w=(B\bfu B/B)\cap \mu_{F_n}^{-1}(B_-wB/B)\subset F_n
\]
For $\bfu=(u_1,\ldots,u_{n}),\bfv=(v_1,\ldots,v_{n})\in W^n$ and $w\in W$, let
$G(w)$ be the $G_\Delta$-orbit through the point $(w,e).(B\times B_-)\in (G\times G)/(B\times B_-),$ where $G_\Delta=\{(g,g):g\in G\}\subset G\times G$, and define
\[
R^{\bfu,\bfv}_w=((B\times B_-)(\bfu,\bfv)(B\times B_-)/(B\times B_-))\cap \mu_{DF_n}^{-1}(G(w))\subset DF_n.
\]
\begin{theorem}{\rm(\cite{T-leaves})}
The decomposition of $F_n$ into $T$-leaves of the Poisson structure $\pi_n$ is
\[
F_n=\bigsqcup_{\bfu\in W^n,w\in W}R^\bfu_w.
\]
\end{theorem}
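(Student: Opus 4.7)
The plan is to combine two complementary stratifications of $F_{n}$ and show that their common refinement cuts out precisely the $T$-leaves of $\pi_n$. The first stratification is the iterated Bruhat decomposition on the left: since $G=\bigsqcup_{u\in W}BuB$ and each $BuB$ is $B$-biinvariant, the subsets $B\bfu B/B \subset F_n$ for $\bfu\in W^n$ form a partition into smooth locally closed $B$-invariant (hence $T$-invariant) subvarieties. The second comes from the multiplication map
\[
\mu_{F_n}\colon F_n \longrightarrow G/B,\qquad [g_1,\ldots,g_n]\longmapsto g_1g_2\cdots g_nB,
\]
pulled back along the opposite Bruhat decomposition $G/B=\bigsqcup_{w\in W}B_{-}wB/B$. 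The map $\mu_{F_n}$ is $T$-equivariant with respect to left translation by $T$ on the first factor, so each $R^{\bfu}_{w}$ is $T$-invariant, and the intersections give a set-theoretic partition
\[
F_n=\bigsqcup_{\bfu\in W^n,\,w\in W} R^{\bfu}_{w}.
\]

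The content of the theorem is that each $R^{\bfu}_{w}$ is a single $T$-leaf. The Poisson structure $\pi_n$ is the one induced, as in Example \ref{piM}, by the $n$-uble Manin triple $\mathcal{M}^{n}$ and the left translation action of $(G\times T)^n$ on $F_n\cong (G\times T)^n/(B\times T)^n$. I would invoke the general framework of \cite{T-leaves,orbit}, which characterizes the $T$-leaves of a Poisson homogeneous space coming from a Manin triple as orbits of a specific dressing subgroup inside the double Lie group $D^n=(G\times G)^n$. Unwinding this description through the explicit Lagrangian splitting recalled in \S 4.3 reduces the question to a statement about the Bruhat types of the point and of its image under $\mu_{F_n}$.

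I would first dispose of the case $n=1$, where $F_1=G/B$, $\pi_1=\pi_{\rm st}$, and the result $R^{u}_{w}=(BuB/B)\cap (B_-wB/B)$ is the classical description of $T$-leaves of the standard Poisson structure on the flag variety as double Bruhat cells. For general $n$, I would induct using the $B$-equivariant projection $F_n\to F_{n-1}$ onto the first $n-1$ factors, together with the compatibility of $\pi_n$ with the twisted product structure used to define it. At each step, one splits the last factor $G$ according to its Bruhat cell, showing that $\pi_n$ is tangent to the resulting stratification and that the fibered product of $T$-leaves remains a $T$-leaf.

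The main obstacle is the rank calculation: one must verify that the Hamiltonian vector fields of $\pi_n$ together with the infinitesimal $T$-action span the full tangent space to $R^{\bfu}_{w}$ at every point. This transitivity argument is exactly where the polyuble structure enters, because the generating Hamiltonians can be written explicitly in terms of the dressing vector fields coming from the $n$-uble Manin triple. To close the argument I would identify these dressing directions with the infinitesimal action of the $B$- and $B_{-}$-factors appearing in the two Bruhat decompositions, thereby showing that both ends of each $R^{\bfu}_{w}$ (the left $B$-direction and the right $B_-$-direction seen through $\mu_{F_n}$) are reached by the Poisson and $T$-actions combined, as recorded in \cite{T-leaves}.
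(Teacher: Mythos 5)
The paper does not prove this statement at all: it is quoted verbatim from \cite{T-leaves} (Lu--Mouquin), so there is no internal proof to compare against. Judged on its own terms, your proposal correctly identifies the two ingredients that the cited reference actually uses --- the iterated Bruhat stratification $B\bfu B/B$ of $F_n$ and the pullback of the opposite Bruhat decomposition of $G/B$ along the multiplication map $\mu_{F_n}$ --- and correctly observes that these give a set-theoretic, $T$-invariant partition of $F_n$. The base case $n=1$ is also right: the $T$-leaves of $(G/B,\pi_{\rm st})$ are the open Richardson-type intersections $(BuB/B)\cap(B_-wB/B)$.

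The gap is that everything after the partition is a plan rather than an argument. The entire content of the theorem is the claim that each $R^\bfu_w$ is a \emph{single} $T$-leaf, and the three points on which this rests are each deferred: (i) that $\pi_n$ is tangent to each $R^\bfu_w$ (you say the polyuble framework ``reduces the question'' but do not perform the reduction); (ii) the rank computation showing that the Hamiltonian vector fields of $\pi_n$ together with the infinitesimal $T$-action span the tangent space of $R^\bfu_w$ at every point (you explicitly label this ``the main obstacle'' and describe what one would do); and (iii) the connectedness of $R^\bfu_w$, without which transitivity of the infinitesimal action only shows that each piece is a union of $T$-leaves of full dimension, not a single one --- this point is absent from your sketch entirely. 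In \cite{T-leaves} these steps occupy the technical core of the paper (the identification of the symplectic-leaf distribution with orbits of an explicit subgroup of the double, and a Bruhat-type decomposition of those orbits). Your proposed induction on $n$ via the projection $F_n\to F_{n-1}$ is plausible but is not how the reference proceeds, and you would still need to verify that the quotient Poisson structure $\pi_n$ restricts suitably on the fibers, which is not automatic from the twisted product construction. In short: the stratification and the strategy are right, but the proof of the actual assertion is not present.
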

The following proposition gives the single $T$-leaf corresponding between $(DF_n,\Pi_{n})$ and $(F_{2n},\pi_{2n})$.

\begin{theorem}{\rm(\cite{T-leaves})}
The decomposition of $DF_n$ into $T$-leaves of the Poisson structure $\Pi_n$ is
\[
DF_n=\bigsqcup_{\bfu,\bfv\in W^n,w\in W}R^\bfu_w.
\]
\end{theorem}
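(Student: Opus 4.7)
The plan is to deduce this $T$-leaf decomposition of $(DF_n,\Pi_n)$ from the $T$-leaf decomposition of $(F_{2n},\pi_{2n})$ stated just above by transporting it along the Poisson isomorphism $\Psi_n:(F_{2n},\pi_{2n})\to(DF_n,\Pi_n)$ established in Theorem \ref{theo:iso}. Under this transport, each stratum $R^{\bfu,\bfv}_w\subset DF_n$ on the right-hand side should appear as the $\Psi_n$-image of a unique stratum $R^{\bfu^{(2n)}}_{\widetilde w}\subset F_{2n}$ via an explicit bijection $(\bfu^{(2n)},\widetilde w)\leftrightarrow(\bfu,\bfv,w)$.

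First, I would check that $\Psi_n$ intertwines the two $T$-actions that define the $T$-leaves on either side. Both Poisson structures are constructed as quotients of products of copies of $(G,\pist)$ and $(G\times G,\Pist)$, and in each case the torus $T$ acts by left translation on the first factor while preserving the Poisson bivector. From the explicit formula for $\Psi_n$, replacing $g_1$ by $tg_1$ in $[g_1,\dots,g_{2n}]\in F_{2n}$ only affects the first entry of the image, producing $(tg_1,tg_1g_2\cdots g_{2n}w_0)=(t,t)\cdot(g_1,g_1g_2\cdots g_{2n}w_0)$, while all the remaining factors are unchanged. Hence $\Psi_n$ is $T$-equivariant with respect to the diagonal $T$-action on $DF_n$, which is precisely the action whose orbits of symplectic leaves are the sets $R^{\bfu,\bfv}_w$. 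In particular, $\Psi_n$ sends $T$-leaves bijectively to $T$-leaves.

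Second, I would identify the Bruhat strata under $\Psi_n$ explicitly. Using the relations $B_-=w_0 B w_0^{-1}$ and the behaviour of Bruhat cells under the involution $g\mapsto w_0 g^{-1} w_0$, one can read off, for $g_i\in B u_i B$, precisely which double Bruhat cell $(B\times B_-)(u'_i,v'_i)(B\times B_-)$ each image pair $(g_i,\,g_1\cdots g_{2n-i+2}w_0)$ or $(g_i,\,w_0 g_{2n-i+2}^{-1}w_0)$ belongs to, thereby expressing $(\bfu,\bfv)\in W^n\times W^n$ as an explicit function of $\bfu^{(2n)}\in W^{2n}$. Compatibility of the moment maps $\mu_{F_{2n}}$ and $\mu_{DF_n}$ under $\Psi_n$ gives the corresponding match of the parameters $w\in W$ controlling the opposite-Bruhat/$G_\Delta$-orbit condition.

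The main obstacle will be the combinatorics of the second step: verifying that the telescoped products $g_1\cdots g_{2n-i+2}$ appearing in $\Psi_n$ really do produce double Bruhat cell labels $(u'_i,v'_i)$ that depend only on $(u_i,u_{2n-i+2})$ (and not on the other components), and that the resulting bijection $W^{2n}\times W\leftrightarrow (W^n\times W^n)\times W$ is compatible with the ``snake'' reordering of the Manin triple isomorphism $i^{2n}_n$ underlying $\Psi_n$ in Theorem \ref{tm:mn-uble}. Once this explicit parameter correspondence is pinned down, the claimed decomposition follows immediately from the $F_{2n}$ case and the $T$-equivariance established in the first step.
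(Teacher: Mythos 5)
The paper does not actually prove this statement: it is quoted from \cite{T-leaves}, where it is obtained from the general machinery for $T$-leaves of quotient and mixed-product Poisson structures. Your transport argument is therefore a genuinely different route, and a legitimate one: you derive the $DF_n$ decomposition from the $F_{2n}$ decomposition stated immediately above, using the Poisson isomorphism $\Psi_n$ of Theorem \ref{theo:iso}; both inputs are available at this point, so there is no circularity, and your argument is essentially the paper's final Proposition (the leaf-by-leaf identification of $(DF_n,\Pi_n)$ with $(F_{2n},\pi_{2n})$) run in the opposite direction. Both of your steps go through. The equivariance check is correct: the relevant actions are left translation by $T$ on the first factor of $F_{2n}$ and by $T_\Delta$ on the first $(G\times G)$-factor of $DF_n$, and $\Psi_n$ visibly intertwines them, so it carries $T$-leaves to $T_\Delta$-leaves. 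The combinatorial step is in fact easier than you anticipate, because in the final formula for $\Psi_n$ the telescoping has already cancelled: for $i\ge 2$ the $i$-th pair is $(g_i,\,w_0g_{2n-i+2}^{-1}w_0)$, whose double-cell label is $(u^{(2n)}_i,\,w_0(u^{(2n)}_{2n-i+2})^{-1}w_0)$ and depends on exactly two entries of $\bfu^{(2n)}$; the only composite entries are $g_1\cdots g_{2n}w_0$, whose $B_-\backslash G/B_-$ class is controlled by the source moment-map parameter $\widetilde w$ (giving $v_1=\widetilde w w_0$), and the image moment map, which reduces to the Bruhat class of $g_{n+1}w_0$ and hence to $u^{(2n)}_{n+1}$ alone. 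This yields a bijection $W^{2n}\times W\leftrightarrow W^n\times W^n\times W$, and since both families partition their spaces and $\Psi_n$ is bijective, the containments $\Psi_n(R^{\bfu^{(2n)}}_{\widetilde w})\subset R^{\bfu,\bfv}_w$ upgrade to equalities, which is the claimed decomposition. One caution: the explicit correspondence you get this way ($\bfu$ occupying the first $n$ slots, the elements $w_0v_i^{-1}w_0$ occupying the last $n-1$ slots in reverse order, and $w$ read off from slot $n+1$) does not literally match the interleaved labelling printed in the paper's final Proposition, so when you pin down the parameter bijection you should trust your own computation rather than copy that formula.
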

The following proposition gives the single $T$-leaf corresponding between $(DF_n,\Pi_{n})$ and $(F_{2n},\pi_{2n})$.

\begin{proposition}
The single $T_\Delta$-leaf
\[(R^{(u_1,..,u_n),(v_1,...,v_n)}_w,\Pi_n)\]
of $(DF_n,\Pi_{n})$ is Poisson isomorphic to the single $T$-leaf
\[(R^{(u_1,ww_0,u_2,w_0v_{n}^{-1}w_0,...,u_n,w_0v_{2}^{-1}w_0)}_{v_1w_0},\pi_{2n})\]
of $(F_{2n},\pi_{2n})$ via $\Psi_n$.
\end{proposition}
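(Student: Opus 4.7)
My plan is to use Theorem \ref{theo:iso} as the main input: it establishes $\Psi_n$ as a global Poisson isomorphism $(F_{2n},\pi_{2n})\to(DF_n,\Pi_n)$, so the remaining task is to identify which $T$-leaf of $\pi_{2n}$ is sent to the prescribed $T_\Delta$-leaf of $\Pi_n$. First I would verify that $\Psi_n$ intertwines the $T$-action on $F_{2n}$ by left multiplication on the first factor with the $T_\Delta=\{(t,t):t\in T\}$-action on $DF_n$ by left multiplication on the first pair. Directly from (\ref{Poiso}), replacing $g_1$ by $tg_1$ multiplies both coordinates of the first pair of $\Psi_n([g_1,\ldots,g_{2n}])$ by $t$ and leaves the remaining pairs unchanged. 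Since $\Psi_n$ is both Poisson and torus-equivariant, it carries single $T$-leaves bijectively to single $T_\Delta$-leaves, and it then suffices to match the cell labels.

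To match the labels, I would take a representative $[g_1,\ldots,g_{2n}]$ of the $F_{2n}$-leaf, write $\Psi_n([g_1,\ldots,g_{2n}])=[(h_1,k_1),\ldots,(h_n,k_n)]$ via (\ref{Poiso}), and verify the three families of conditions cutting out $R^{(u_1,\ldots,u_n),(v_1,\ldots,v_n)}_w$: the Bruhat conditions $h_i\in Bu_iB$, the Birkhoff conditions $k_i\in B_-v_iB_-$, and the moment-map condition that $(h_1\cdots h_n,k_1\cdots k_n)$ lies in the $G_\Delta$-orbit $G(w)$. The translation between Bruhat and Birkhoff cells rests on the standard identities $w_0Bw_0^{-1}=B_-$, $Bw_0=w_0B_-$, and $w_0^{-1}\equiv w_0$ in $W$, together with $(BxB)^{-1}=Bx^{-1}B$. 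Applying these pair-by-pair, each Birkhoff condition $k_i=w_0g_{2n+2-i}^{-1}w_0\in B_-v_iB_-$ is equivalent to a Bruhat condition $g_{2n+2-i}\in Bw_0v_i^{-1}w_0B$, matching the entries $w_0v_i^{-1}w_0$ prescribed by the proposition; likewise $k_1=g_1\cdots g_{2n}w_0\in B_-v_1B_-$ becomes $g_1\cdots g_{2n}\in B_-v_1w_0B$, matching the subscript $v_1w_0$; and the first-coordinate Bruhat conditions $h_i\in Bu_iB$ simply read off the $u_i$ entries of the tuple.

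The most delicate point will be the moment-map condition, where I would exploit the alternating structure of the $k_i$: the product $k_1k_2\cdots k_n$ telescopes modulo $T$ because each successive factor $w_0g_{2n+2-i}^{-1}w_0$ for $i\ge 2$ cancels one entry from $k_1=g_1\cdots g_{2n}w_0$, thanks to $w_0^2\in T$. Dividing the resulting short expression by the cumulative product $h_1\cdots h_n=g_1\cdots g_n$ and translating through $w_0Bw_0^{-1}=B_-$ leaves a single Bruhat-type condition on an interior entry of the $F_{2n}$-tuple, which is precisely the content of the term $ww_0$ in the proposition. The expected main obstacle is purely combinatorial bookkeeping: one must carefully thread each of the $2n$ Bruhat strata through the explicit formula for $\Psi_n$ and track the inversions introduced by the conjugations $x\mapsto w_0x^{-1}w_0$ together with the cancellations from $w_0^2\in T$. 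Once this dictionary is fixed, the identification of the two leaves under $\Psi_n$ follows directly from the cell identities above.
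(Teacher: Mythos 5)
Your approach coincides with the paper's: the proof given there is a single sentence deducing the leaf correspondence from Theorem \ref{theo:iso}, and your plan supplies the torus-equivariance and cell-matching bookkeeping that the paper leaves entirely implicit, with the correct coset identities ($w_0B_-w_0=B$, $B_-v_1B_-w_0=B_-v_1w_0B$, $(BxB)^{-1}=Bx^{-1}B$) and the correct telescoping of $k_1k_2\cdots k_n$ for the moment-map condition. One caution if you carry the plan out: under the paper's formula (\ref{Poiso}) the first components of the $n$ pairs are exactly $g_1,\dots,g_n$, so your dictionary places $u_1,\dots,u_n$ in the first $n$ slots of the $F_{2n}$-tuple and $ww_0,\,w_0v_n^{-1}w_0,\dots,w_0v_2^{-1}w_0$ in the last $n$ slots, not interleaved as in the displayed statement; your assertion that the entries ``match the proposition'' should therefore be checked position by position (the mismatch appears to lie in the ordering printed in the statement rather than in your computation).
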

\begin{proof}
It follows from Theorem \ref{theo:iso} that: the pre-image $\Psi^{-1}_n((R^{(u_1,..,u_n),(v_1,...,v_n)}_w) $ in $DF_n$ is Poisson isomorphic to
$R^{(u_1,ww_0,u_2,w_0v_{n}^{-1}w_0,...,u_n,w_0v_{2}^{-1}w_0)}_{v_1w_0}$ in $F_{2n}$.
\end{proof}


\begin{thebibliography}{99}

\bibitem{Benayadi} S. Benayadi and A. Makhlouf, Hom-Lie algebras with symmetric invariant nondegenerate bilinear forms, \emph{J. Geom. Phys. } {\bf 76} (2014), 38-60.




\bibitem{CaiLiuSheng} L. Cai, J. Liu and Y. Sheng, Hom-Lie algebroids, hom-Lie bialgebroids and hom-Courant algebroids, \emph{J. Geom. Phys.} {\bf 121} (2017), 15-32.
\bibitem{CaiSheng} L. Cai and Y. Sheng. Purely Hom-Lie bialgebras, \emph{Sci. China Math.} {\bf 61} (2018), 1553-1566.

\bibitem{Drin}
 V. G. Drinfeld, On Poisson homogeneous spaces of Poisson-Lie groups, {\it Theor. Math. Phys.} {\bf 95} (1993), 524-525.

\bibitem{Dufour} J. P. Dufour and N. T. Zung, Poisson structures and their normal forms, Progress in
Mathematics, 242. Birkh\"{a}user Verlag, Basel, 2005.

%
%
\bibitem{etingof-schiffmann}
P. Etingof and O. Shiffmann, Lectures on quantum groups, Lectures in Mathematical Physics. International Press, Boston, MA, 1998.
%

\bibitem{poly}
V. Fock and A. Rosli, Flat connections and polyubles, {\it Teoret. Mat. Fiz.} {\bf 95} (1993), 228-238.


%
%
%

\bibitem{Hartwig} J. T. Hartwig, D. Larsson and S. D. Silvestrov, Deformations of Lie algebras using $\sigma$-derivations, \emph{J. Algebra} {\bf 295} (2006), 314-361.
\bibitem{Hu} N. Hu, Q-Witt Algebras, q-Lie Algebras, q-holomorph structure and representations, \emph{Algebra Colloq.} {\bf 6} (1999), 51-70.

\bibitem{KangThesis}
C. Kang, Polyuble Manin triples of Hom-Lie algebras and Hom-Poisson
structures induced by quasi-triangular Hom-r matrices, PhD dissertation, Tianjin: Nankai University, School of Mathematical Science, 2022.

\bibitem{SymSpa}
S. Khoroshkin, A. Radul, and V. Rubtsov, A family of Poisson Structures on Hermitian Symmetric Spaces, \emph{Comm. Math. Phys.} {\bf 152} (1993), 299-315.


\bibitem{PGLecture}
Y. Kosmann-Schwarzbach, Lie bialgebras, Poisson Lie groups and dressing transformations. Integrability of Nonlinear Systems, Lecture Notes in Physics, 495, Springer, Berlin, 1997.



\bibitem{Teles} C. Laurent-Gengoux and J. Teles, Hom-Lie algebroids, \emph{J. Geom. Phys.} {\bf 68} (2013), 69-75.

\bibitem{Larsson2}D. Larsson and S. Silvestrov, Quasi-Lie algebras.  Noncommutative geometry and representation theory in mathematical physics, \emph{Contemp. Math.}, 391, Amer. Math. Soc., Providence, RI, 2005.

 \bibitem{Larsson} D. Larsson and S. D. Silvestrov, Quasi-hom-Lie Algebras, Central Extensions and 2-cocycle-like identities, \emph{J. Algebra} {\bf 288} (2005), 321-344.

\bibitem{Lu-Mi:Kostant}
J.-H. Lu and Y. Mi, Generalized Bruhat cells and completeness of Hamiltonian flows of Kogan-Zelevinsky integrable systems.  Lie Groups, geometry, and representation theory. \emph{Progr. Math.}, Birkh${\rm \ddot{a}}$user, Cham, \textbf{326} (2018),  315-365.

\bibitem{mixed}
J.-H. Lu and V. Mouquin, On mixed product Poisson structures associated to Poisson Lie groups, \emph{Int. Math. Res. Not.} \textbf{19} (2017), 5919-5976.

\bibitem{T-leaves}
J.-H. Lu and V. Mouquin,  On the T-leaves of some Poisson structures related to products of flag varieties, {\it Adv. Math.}  \textbf{306} (2017), 1209-1261.

\bibitem{SymG}
J.-H. Lu, V. Mouquin and S.Yu, Configuration Poisson groupoids of flags,  arXiv:2109.03724.
\bibitem{Pos}
J.-H. Lu and S. Yu, Bott-Samelson atlases, total positivity, and Poisson structures on some homogeneous spaces,  {\it Sel. Math. New Ser.} \textbf{26} (2020), 61 pp.

\bibitem{orbit}
 J.-H. Lu and M. Yakimov, Group orbits and regular partitions of Poisson manifolds, {\it Comm. Math. Phys.} {\bf 283} (2008), 729-748.
\bibitem{Manin1}
Y. I. Manin, {\it Quantum Groups and Noncommutative Geometry}, Universit${\rm \acute{e}}$ de Montreal, Centre de Recherches Math${\rm \acute{e}}$matiques,  Springer, Montreal, 1988.
%
%


\bibitem{Mou:Guu}
V. Mouquin, Local Poisson groupoids over mixed product Poisson structures and generalized double Bruhat cells, {\it J. Symplectic Geom.} {\bf 19} (2021), 993-1045.

\bibitem{quanti}
V. Mouquin, Quantization of a Poisson structure on products of principal affine spaces, {\em J. Noncommut. Geom.}  {\bf 14} (2020), 1049-1074.


\bibitem{Shen-Weng:BS}
L. Shen and D. Weng, Cluster structures on double Bott-Samelson cells, {\it Forum Math. Sigma}, {\bf 9} (2021), e66.

\bibitem{ShengZhen} Y. Sheng and X. Zhen, On Hom-Lie algebras, \emph{Linear Multilinear Algebra}, {\bf 36} (2015), 2379-2395.


\bibitem{ShengBai} Y. Sheng and C. Bai, A new approach to hom-Lie bialgebras,  \emph{J. Algebra} {\bf 399} (2014), 232-250.
\bibitem{Sheng} Y. Sheng, Representations of Hom-Lie Algebras, \emph{Algebr. Represent. Theory} {\bf 5}  (2012), 1081-1098.
\bibitem{Tao2}  Y. Tao, Hom-antisymmetric infinitesimal bialgebras, PhD dissertation, Tianjin: Nankai University,  Chern Institute of Mathematics, 2019.
\bibitem{TaoBaiGuo} Y. Tao, C. Bai and L. Guo, Another approach to Hom-Lie bialgebras via Manin triples, \emph{Comm. Algebra} {\bf 48} (2020), 3109-3132.

%
%
\bibitem{Yau1} D. Yau. The classical hom-Yang-Baxter equation and hom-Lie bialgebras, \emph{Int. Electron. J. Algebra} {\bf 17} (2015), 11-45.






\end{thebibliography}
\end{document}